\newtheorem{remark}{Remark}
\def\d{\ensuremath{\delta}}
\def\r{\ensuremath{\rho}}
\def\e{\ensuremath{\epsilon}}
\def\l{\ensuremath{\lambda}}
\def\G{\ensuremath{\Gamma}}
\def\RR{\mathbb{R}}
\newcommand{\al}{A_{K}} 
\newcommand{\ak}{A_K} 
\newcommand{\wnk}{W_{n,k}} 
\def\pdfmin{f_{min}(k,n;\lambda)}
\def\pdfmax{f_{max}(k,n;\lambda)}
\def\Lmax{\Lambda^{max}_{n,k}}
\def\lmaxW{\lambda^{max}(W_{n,k})}
\def\Lmin{\Lambda^{min}_{n,k}}
\def\lminW{\lambda^{min}(W_{n,k})}
\def\lmax{\lambda^{max}}
\def\lmin{\lambda^{min}}
\def\kn{\frac{k}{n}}
\def\nN{\frac{n}{N}}
\def\lone{\ell^1}
\def\rd{\rho_S^D(\delta)}
\def\rrv{\rho_S^{RV}(\delta)}
\def\rflq{\rho_S^{FL}(\delta;q)}
\def\rflqcond{\rho_S^{FL}(\delta;q,Cond)}
\def\rflqc1{\rho_S^{FL}(\delta;q,C_1(\delta,\rho)\le\Upsilon)}
\def\rfl{\rho_S^{FL}(\delta)}
\def\xs{x^{\star}_{\theta}}
\def\xt{x_{\theta}}
\def\gmax{g_{max}(k,n;\lambda)}
\def\rn{\rho_n}
\def\dn{\delta_n}
\def\sl1{StrongEquiv(A,\ell^1)}
\title{Compressed Sensing: \\ How sharp is the Restricted Isometry Property?}
\author{Jeffrey D. Blanchard\thanks{Department of Mathematics and Statistics, Grinnell College, Grinnell, Iowa, USA ({\tt jeff@math.utah.edu}).  JDB acknowledges support from NSF DMS (VIGRE) grant number 0602219 while a VIGRE postdoctoral fellow at Department of Mathematics, University of Utah.}
      \and Coralia Cartis\thanks{School of Mathematics, University of Edinburgh, Edinburgh, UK ({\tt coralia.cartis@ed.ac.uk}).} \and Jared Tanner\thanks{School of Mathematics, University of Edinburgh, Edinburgh, UK ({\tt jared.tanner@ed.ac.uk}).  JT acknowledges support from the Alfred P. Sloan Foundation, Leverhulme Trust,  and thanks John E. and Marva M. Warnock for their generous support in the form of an endowed chair.}}
\date{January 2009}
\begin{document}

\maketitle

\begin{abstract} 
Compressed Sensing (CS) seeks to recover an unknown vector with $N$ entries 
by making far fewer than $N$ measurements; it posits that the number of 
compressed sensing measurements should be comparable to the information 
content of the vector, not simply $N$.  CS combines the important task of compression 
directly with the measurement task.  Since its introduction in 2004 there have been 
hundreds of manuscripts on CS, a large fraction of which develop 
algorithms to recover a signal from its compressed measurements.

Because of the paradoxical nature of CS -- exact reconstruction from seemingly undersampled 
measurements -- it is crucial for acceptance of an algorithm that rigorous analyses verify the 
degree of undersampling the algorithm permits.  The Restricted Isometry Property (RIP) has 
become the dominant tool used for the analysis in such cases.

We present here an asymmetric form of RIP which gives tighter bounds than the usual 
symmetric one.  We give the best known bounds on the RIP constants for matrices from 
the Gaussian ensemble.  Our derivations illustrate the way in which the combinatorial 
nature of CS is controlled.  Our quantitative bounds on the RIP allow precise statements 
as to how aggressively a signal can be undersampled, the essential question for practitioners.
We also document the extent to which RIP gives precise information about the true performance 
limits of CS, by comparing with approaches from high-dimensional geometry.
\end{abstract}

\begin{keywords}
Compressed sensing, sparse approximation, restricted isometry property, phase transitions, convex relaxation, Gaussian matrices, singular values of random matrices.
\end{keywords}

\begin{AMS}
Primary: 41A46, 94A12, 94A20. Secondary: 15A52, 60F10, 90C25.
\end{AMS}

\pagestyle{myheadings}
\thispagestyle{plain}
\markboth{J. D. BLANCHARD, C. CARTIS, AND J. TANNER}{Compressed Sensing: How sharp is the RIP?}


\section{Introduction}\label{sec:Intro}
Consider the task of measuring an unknown vector $x\in\RR^N$ by taking inner products with vectors of one's choosing.  The obvious choice would be to ask for the inner product of $x$ with respect to each of the $N$ canonical unit vectors $e_j$ (the $j^{th}$ entry of $e_j$ being one and all others zero).  But what if it is known a priori that $x$ is $k$-sparse -- i.e. has only $k<N$ nonzero entries?  Can't one then do better?  If the nonzero entries of $x$ are indexed by the set $K$ ($x(j)\ne 0$ if $j\in K$ and $x(j)=0$ for $j\in K^c$), then only $k$ inner products are needed: those with the canonical unit vectors $e_j$ for $j\in K$.  However, what if $K$ is unknown?  Is it still possible to make fewer than $N$ measurements of $x$?  

Questions of this form must have been around for millennia.  Consider this puzzle:  ``A counterfeit coin is hidden in a batch of $N$ otherwise similar coins; it is distinguished from the others by its slightly heavier weight.  How many balance weighings are needed to find the counterfeit?''  Abstractly, this concerns the special case where $K$ is an unknown singleton and the nonzero value is nonnegative; the balance is abstractly the same as an inner product which gives weight +1 to the coefficients placed in the ``right'' pan and -1 to the coefficients placed in the ``left'' pan.  Many people quickly find that roughly $\log(N)$ measurements suffice to find the position and value of the nonzero, each time putting half the remaining coins in one pan, half in the other, and discarding from further consideration the coins that turn out on the light side.  Lighthearted as puzzles can sometimes seem, they can lead to serious applications.

During World War Two, efficient screening of large groups of soldiers for certain infections was based on the principle of group testing, in which blood from many soldiers is combined in a single tube and tested for presence of an infectious agent.  If an infection is found, one studies that group and by dyadic subdivision eventually isolates the infecteds \cite{Dor43,group_cs}.  

More advanced mathematics can do much better than such common-sense ideas.  Those with a physical  bent may quickly see that, if $N$ is prime, again assuming a singleton $K$ and a nonnegative $x$, it will be enough, in fact, to make only 2 inner products, with respectively a sine and cosine of frequency $2\pi/N$; the phase of the corresponding complex Fourier coefficient immediately reveals the position of the nonzero.  Note here that, for large $N$, we are doing dramatically better than common-sense (2 measurements rather than $\log(N)$).

Advanced mathematics is better than the common-sense approach in another way: common-sense uses adaptive measurements, where the next measurement vector is selected after viewing all previous measurements.  In the advanced approach, adaptivity is unnecessary: one simply makes 2 measurements defined a priori and later combines the two to reconstruct.

Compressed Sensing (CS) embodies the advanced approach: it designs a special matrix $A$ of size $n\times N$, measures $x$ via $y=Ax$, giving $n$ measurements of the $N$ vector $x$ in parallel, and reconstructs $x$ from $(y,A)$ using computationally efficient and stable algorithms.  The key point is that $n$ can be taken much smaller than $N$, and much closer to $k$.  For example, if $x$ is known to be $k$-sparse and nonnegative $n=2k+1$ suffices \cite{DoTa05_signal} and if $x$ is only known to be $k$-sparse, roughly $n=2\log(N/n)\cdot k$ will suffice, if $k/N$ is small \cite{DoTa08_JAMS}.


Since the release of the seminal CS papers in 2004, \cite{CTDecoding, CRTRobust, CompressedSensing}, a great deal of excitement has been generated in signal processing and applied mathematics research, with hundreds of papers on the theory, applications, and extensions of compressed sensing  (more than 400 of these are collected at Rice's online Compressive Sensing Resources archive {\tt dsp.rice.edu/cs}).  Many applications have been proposed, including magnetic resonance imaging \cite{CS_MRI, mehmet}, radar \cite{Healy_radar}, and single-pixel cameras \cite{CS_pixel} to name a few.  In the MRI 
applications, it has been reported that diagnostic quality images can be obtained in $1/7$ the recording time using CS approaches, \cite{MRI_fast}.  For a recent review of CS see the special issue containing 
\cite{CS_pixel, CS_MRI} and for a review of sparse approximation see \cite{CS_SIREV}.

In CS the matrix $A$ and reconstruction algorithm are referred to as an encoder/decoder pair and much of the research has focused on their construction; that is, how should the measurement matrix $A$ be selected and what are the most computationally efficient and robust algorithms for recovering $x$ given $y$ and $A$?  
The two most prevalent encoders in the literature construct $A$ by drawing its entries independently and identically from a Gaussian normal distribution, or by randomly sampling its rows without replacement from amongst the rows of a Fourier matrix. These enconders are popular as they are amenable to analysis, and they can be viewed as models of matrices
with mean-zero entries and fast matrix-vector products, respectively.
 The most widely-studied decoder has been $\ell^1$-minimization,
\begin{equation}\label{eq:l1}
\min_{z\in\RR^N} \|z\|_{1}\ \hbox{subject to}\ Az=y,
\end{equation}
which is the convex relaxation of the computationally intractable decoder, \cite{NPhard},  seeking the sparsest solution in agreement with the measurements
\begin{equation}\label{eq:l0prime}
\min_{z\in\RR^N} \|z\|_{0}\ \hbox{subject to}\ Az=y.
\end{equation}
Following the usual convention in the CS community, $\|z\|_{0}$ counts the number of nonzero entries in $z$.  Many other encoder/decoder pairs are also being actively studied, with new alternatives being proposed regularly; see Section \ref{sec:Algorithms}.  

Here we do not review these exciting activities, but focus our attention on how to interpret the existing theoretical guarantees; in particular, we believe an important task for theory is to correctly predict the triples $(k,n,N)$ for which a given encoder/decoder will successfully recover the measured signal, or a suitable approximation thereof.  To exemplify this, we restrict our attention to a now-standard encoder/decoder pair: $A$ Gaussian and $\ell^1$-minimization.  This pair offers the cleanest mathematical structure, giving us the chance to make the strongest and clearest statements which can be made at this time, for example by drawing on the existing wealth of knowledge in random matrix theory and high-dimensional convex geometry.  In this paper we focus almost exclusively on the most widely used tool for analyzing the performance of encoder/decoder pairs,  the \emph{Restricted Isometry Property} (RIP) introduced by Cand\`es and Tao \cite{CTNearOptimal}. 
\begin{definition}[Restricted Isometry Property]\label{def:rip}
A matrix $A$ of size $n\times N$ is said to satisfy the RIP  with
RIP constant $R(k,n,N; A)$ if, for every $x\in\chi^N(k):=\{x\in\RR^N: \|x\|_0 \le k\}$,
\begin{equation}\label{eq:rip}
R(k,n,N; A):=\min_{c\ge0}\,c\,\,\,\mbox{subject to}\,\,\,
(1-c)\|x\|_2^2 \le \|Ax\|_2^2\le (1+c) \|x\|_2^2.
\end{equation}
\end{definition}
As suggested by the name, the RIP constants measure how much
the matrix $A$ acts like an isometry when ``restricted'' to $k$ columns; it 
describes the most significant distortions of the $\ell^2$ norm of 
any $k$-sparse vector.  Typically, $R(k,n,N;A)$ is 
measured for matrices with unit $\ell^2$-norm columns, and in this special case $R(1,n,N)=0$.
Specifically, the RIP constant
$R(k,n,N;A)$ is the maximum distance from 1 of all the eigenvalues of the
${N\choose k}$ submatrices, $A^T_KA_K$, derived from $A$, where $K$ is an
index set of cardinality $k$ which restricts $A$ to those columns indexed by
$K$.

It is important to note that the RIP is predominently used to establish
theoretical performance guarantees when either the measurement vector $y$ is corrupted with
noise or the vector $x$ is not strictly $k$-sparse.  Proving that an algorithm is stable to noisy measurements is
essential for applications since measurements are rarely free from noise.  In this paper, we focus
on the ideal noiseless case with the hopes of investigating the best
possible theoretical results.  For the noisy case, see \cite{BCT_RIP_arxiv} for
$\ell^q$-minimization for $q\in (0,1]$ and \cite{BCTT09} for greedy algorithms.

For many CS encoder/decoder pairs it has been shown that if the RIP constants for the encoder remain bounded as $n$ and $N$ increase with $n/N\rightarrow\delta\in (0,1)$, then the decoder can be guaranteed to recover the sparsest $x$ for $k$ up to a critical threshold, which can be expressed as a fraction of $n$, $\r(\delta)\cdot n$.  Typically each encoder/decoder pair has a different $\r(\d)$.  Little is generally known about the magnitude of $\r(\delta)$ for encoder/decoder pairs, making it difficult for a practitioner to know how aggressively they may undersample, or which decoder has stronger performance guarantees.  (For a recent review of 
compressed sensing algorithms, including which have $\r(\d)>0$, see \cite[Section 7]{NeTr09_cosamp}.)  In this paper, we endeavor to be as precise as possible about the value of the RIP constants for the Gaussian ensemble, and show how this gives quantitative values for $\r(\d)$ for the $\ell^1$-minimization decoder.  Similar results for other decoders are available in \cite{BCTT09}.

To quantify the sparsity/undersampling trade off, we adopt a \emph{proportional-growth} asymptotic, in which we consider sequences of triples $(k,n,N)$ where all elements grow large in a coordinated way, $n\sim \d N$ and $k\sim \r n$ for some constants $\d,\r>0$.  This defines a two-dimensional phase space $(\d,\r)$ in $[0,1]^2$ for asymptotic analysis.  

\begin{definition}[Proportional-Growth Asymptotic]\label{def:pga}
A sequence of problem sizes $(k,n,N)$ is said to \emph{grow proportionally} if, for $(\d,\r)\in[0,1]^2$,  $\nN\rightarrow\d$ and $\kn\rightarrow\r$
as $n\rightarrow\infty$.
\end{definition}

Ultimately, we want to determine, as precisely as possible, which subset of this phase space corresponds to successful recovery and which subset corresponds to unsuccessful recovery.  This is the phase-transition framework advocated by Donoho et al \cite{Do05_signal,Do05_polytope,DoSt06_breakdown,DoTa05_signal,DoTs06_fast};
see Section \ref{sec:Algorithms} for a precise definition.
By translating the sufficient RIP conditions into the proportional-growth asymptotic, we find lower bounds on the phase-transition for $(\d,\r)$ in $[0,1]^2$.  An answer to this question plays the role of an \emph{undersampling theorem}: to what degree can we undersample a signal and still be able to reconstruct it?  

The central aims of this paper are:
\begin{itemize}
\item to shed some light on the behavior of the RIP constants of a matrix ensemble with as much precision as possible;
\item to advocate a unifying framework for the comparison of theoretical CS results by showing the reader how to interpret and compare some of the existing recovery guarantees for the prevalent $\ell^1$ decoder;
\item to introduce a reader new to this topic to the type of large deviation analysis calculations often encountered in CS and applicable to many areas faced with combinatorial challenges.
\end{itemize}
In pursuit of these goals, we sharpen the use of the RIP and squeeze the most out of it, quantifying what can currently be said in the proportional-growth asymptotic and thereby making precise the undersampling theorems the RIP implies. We proceed in Section~\ref{sec:RIPBounds} along two main avenues.  First, we concentrate on Gaussian matrices, using bounds on their singular values we develop  the sharpest known bounds on their RIP constants; in fact, these are the the best known bounds of any class of matrices in the proportional-growth asymptotic with $n<N$.  Second, we use an asymmetric definition of the RIP where the lower and upper eigenvalues are treated separately, and in doing so further improve the conditions in which the RIP implies CS decoders recover the measured signal.  In Section~\ref{sec:Algorithms} we combine these two improvements to exhibit a region of the $(\d,\r)$ phase space where RIP analysis shows that undersampling will be successful for the $\ell^1$-minimization decoder \eqref{eq:l1}.

The RIP is not the only tool used to analyze the performance of CS decoders. The different methods of analysis lead to results that are rather difficult to compare.  In Section~\ref{subsec:comparison}, we describe in the proportional-growth asymptotic, with $A$ Gaussian and the $\lone$-minimization decoder,  two alternative methods bounding the phase transition: the polytope analysis \cite{Do05_signal, Do05_polytope,DoTa08_JAMS} of Donoho and Tanner and the geometric functional analysis techniques of Rudelson and Vershynin \cite{RV07_gaussian}.  By translating these two methods of analysis and the RIP analysis into the proportional-growth asymptotic, we can readily compare the results obtained by these three techniques by comparing the regions of the $(\d,\r)$ phase space where each method of analysis has guaranteed successful recovery.  In particular, we find that for the Gaussian encoder, the RIP, 
despite its popularity, is currently dramatically weaker than the other two approaches in the strength of conclusions that it can offer.  However, this limitation is counterbalanced by RIP being successfully applied to a broad class of encoder/decoder pairs, and seemlessly also proving stability to noisy measurements and compressible signals.

We conclude with a discussion of some  other important and related topics not addressed in the current paper.  We briefly discuss comparisons of results when noise is present in the measurements or the signal $x$ is not perfectly $k$-sparse, average case analysis versus the theoretical worst case analysis presented here, and the potential to improve the phase transition curves through improved analysis or improved bounds.


\section{Bounds on RIP for Gaussian Random Matrices}\label{sec:RIPBounds}

Let $K\subset\left\{1,\dots,N\right\}$ be an index set of cardinality 
$k$ which specifies the columns of $A$ chosen for a submatrix, $\al$, 
of size $n\times k$.   Explicitly computing $R(k,n,N;A)$ would require enumerating all ${N\choose k}$ subsets $K$ of the columns of $A$, forming each  matrix $G_K=\ak^T\ak$, and calculating their largest and smallest eigenvalues.  We have never seen this done except for small sizes of $N$ and $k$, so not much is known about the RIP constants of deterministic matrices.  Fortunately, analysis can penetrate where computation becomes intractable.  Associated with a random matrix ensemble is an, as of yet unknown, probability density function for $R(k,n,N)$.  Let us focus on the Gaussian ensemble where much is already known about its eigenvalues.  We say that an $n\times N$ random matrix $A$ is drawn from the \emph{Gaussian ensemble} of random matrices if the entries are sampled independently and identically from the standard normal distribution, $\mathcal{N}(0,n^{-1})$.  (The $n^{-1}$ scaling in the Gaussian ensemble cause the $\ell^2$ norm of its columns to have expectation 1.)  We say that a $k\times k$ matrix $\wnk$ is a \emph{Wishart} matrix if it is the Gram matrix $X^TX$ of an $n\times k$ matrix $X$ from the Gaussian ensemble.  The largest and smallest eigenvalues of a Wishart matrix are random variables, denoted here $\Lmax=\lmaxW$ and $\Lmin=\lminW$.  These random variables tend to defined limits, in expectation, as $n$ and $k$ increase in a proportional manner.  With $\kn\rightarrow\r$ as $n\rightarrow\infty$, we have ${\cal E}(\Lmax)\rightarrow(1+\sqrt{\r})^2$ and ${\cal E}(\Lmin)\rightarrow(1-\sqrt{\r})^2$; \cite{Geman1980, Silverstein1985}, see Figure~\ref{fig:2deigenvalues}. Explicit formulas bounding $\Lmax$ and $\Lmin$ are available \cite{Edelman_acta}.  An empirical approximation of the probability density functions of $\Lmax$ and $\Lmin$ is shown in Figure \ref{fig:pdf}.

The asymmetric way that the expected eigenvalues $\Lmax$ and $\Lmin$ deviate from 1 suggests that the symmetric treatment used by the traditional RIP is missing an important part of the picture.  We generalize the RIP to an asymmetric form and derive the sharpest recovery conditions implied by the RIP.

\begin{figure}[t]
 \begin{center}
 \includegraphics[bb= 70 215 546 589, width=3.00 in,height=2.00 in]{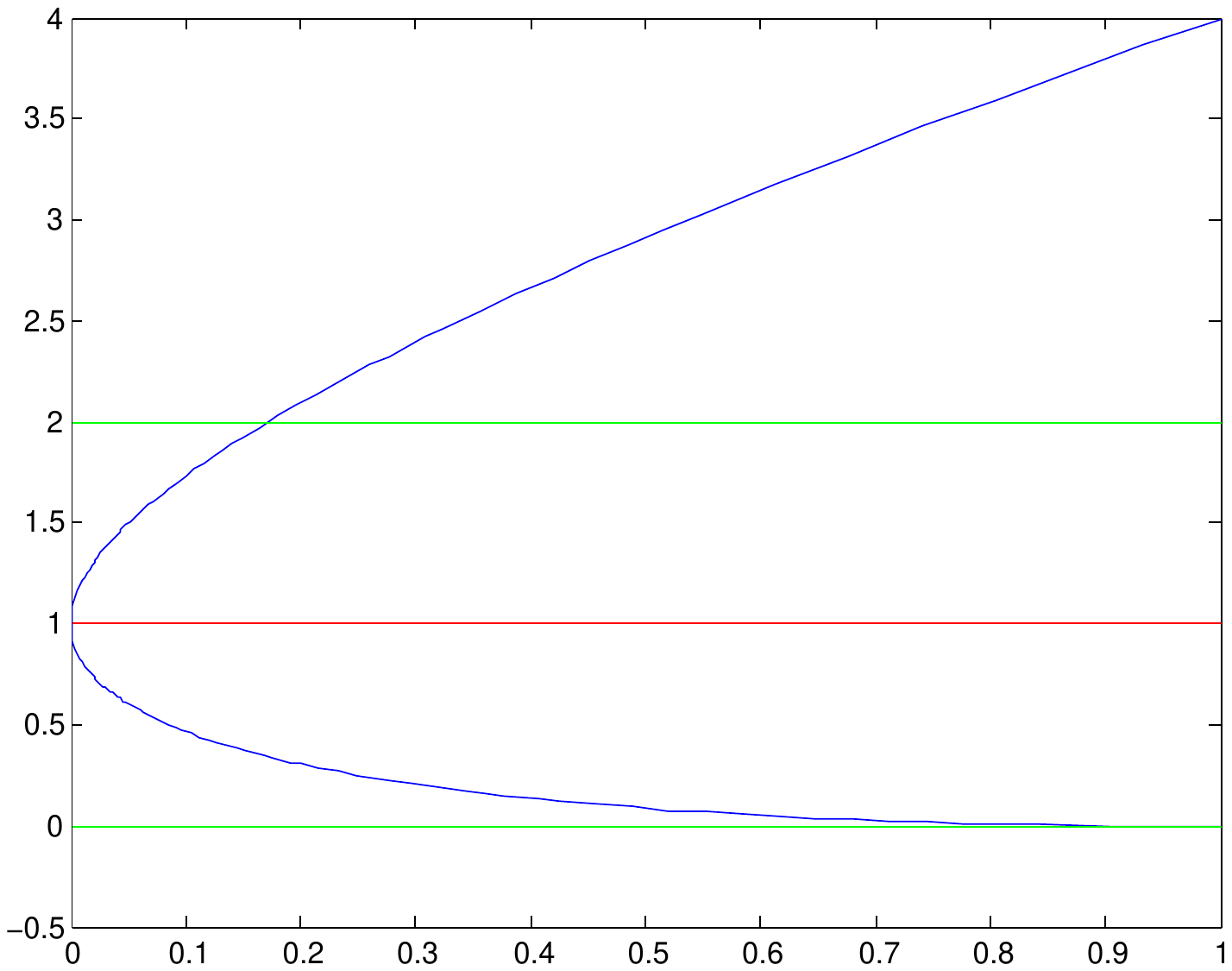}

$\r=k/n$

 \caption{Expected values of the largest and smallest eigenvalues of a Wishart matrix $\wnk$ with $\r=\kn$.  Note the asymmetry with respect to 1.\label{fig:2deigenvalues}}
 \end{center}
 \end{figure}

\begin{figure}[t]
 \begin{center}
 \includegraphics[bb= 70 215 546 589, width=4.30 in,height=2.00 in]{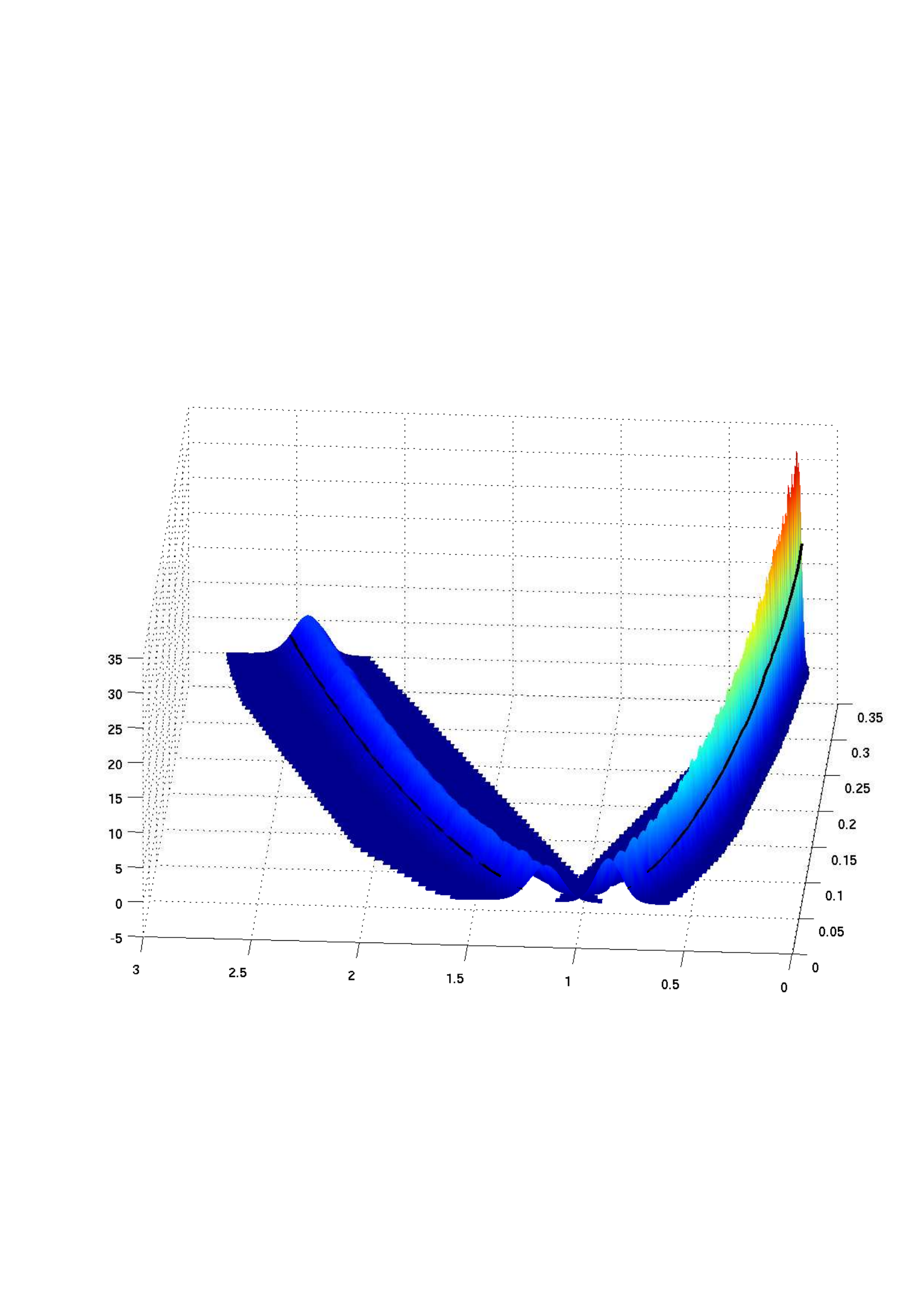}

\vspace{-0.7in}
\hspace{5.1in}
$\rho=\frac{k}{n}$

\vspace{0.7in}

 \caption{{\bf Empirical Distributions of the Largest and Smallest Eigenvalues of a Wishart Matrix.}
 A collection of frequency histograms of $\Lmax$ and $\Lmin$: x-axis -- size of the eigenvalue; y-axis -- number of occurrences; z-axis -- ratio $\r=\kn$ of the Wishart parameters.  Overlays: curves depicting the expected values $(1\pm\sqrt{\r})^2$ of $\Lmax$ and $\Lmin$.  Here $n=200$.  At this value of $n$ it is evident that $\Lmax$ and $\Lmin$ lie near, but not on curves.  For larger $n$, the concentration would be tighter. \label{fig:pdf}}
 \end{center}
 \end{figure}

\begin{definition}[Asymmetric Restricted Isometry Property]\label{def:LU}
For a matrix $A$ of size $n\times N$, the \emph{asymmetric RIP constants} 
$L(k,n,N; A)$ and $U(k,n,N; A)$ are defined as: 
\begin{align}
L(k,n,N;A)&:=\min_{c\ge0}\,c\,\,\,\mbox{subject to}\,\,\,
 (1-c)\|x\|_2^2 \le \|Ax\|_2^2, \,\,\mbox{ for all }\ x\in\chi^N(k);\label{eq:Lrip} \\
U(k,n,N;A)&:=\min_{c\ge0} \,c\,\,\,\mbox{subject to}\,\,\,
(1+c) \|x\|_2^2 \ge \left\|Ax\right\|_2^2, \,\,\mbox{ for all }\  x\in\chi^N(k).\label{eq:Urip}
\end{align}
\end{definition}
 (A similar change in the definition of the RIP constants was used independently 
by Foucart and Lai in \cite{FoucartLai08}, motivated by different concerns.)

\begin{remark}
Although both the smallest and largest
  singular values of $\al^T\al$ affect the stability of the
  reconstruction algorithms, the smaller eigenvalue is dominant for compressed sensing
  in that it allows distinguishing between sparse
  vectors from their measurement by $A$.  In fact, it is often
  incorrectly stated that $R(2k,n,N)<1$ is a necessary condition to ensure that there are no two $k$-sparse vectors, say $x$ and $x'$, with the same measurements $Ax=Ax'$; the actual necessary condition is $L(2k,n,N)<1$.
  \end{remark}

We see from \eqref{eq:Lrip} and \eqref{eq:Urip} 
that $(1-L(k,n,N))=\min_K\lmin(G_K)$ and $(1+U(k,n,N))=\max_K\lmax(G_K)$ with $G_K=\ak^T\ak$.  
A standard large deviation analysis of bounds on the probability density
functions of $\Lmax$ and $\Lmin$ allows us to establish upper bounds  
of $L(k,n,N)$ and $U(k,n,N)$ which are exponentially unlikely to be exceeded.

\begin{definition}[Asymptotic RIP Bounds]\label{def:LUbounds}
 Let $A$ be a matrix of size $n\times N$  drawn  from  the Gaussian ensemble
 and consider the proportional-growth asymptotic ($\nN\rightarrow\d$ and $\kn\rightarrow\r$ as $n\rightarrow\infty$).  Let $H(p):=p\log(1/p)+(1-p)\log(1/(1-p))$ denote the usual Shannon 
Entropy with base $e$ logarithms, and let
\begin{eqnarray}
\psi_{min}(\lambda,\rho) & := & H(\rho)+\frac{1}{2}\left[ 
(1-\rho)\log\lambda +1-\rho+\rho\log\rho-\lambda \right],\label{eq:psimin} \\
\psi_{max}(\lambda,\rho) & := & \frac{1}{2}\left[
(1+\rho)\log\lambda +1+\rho-\rho\log\rho-\lambda \right].\label{eq:psimax} 
\end{eqnarray}
Define $\lmin(\d,\r)$ and 
$\lmax(\d,\r)$ as the solution to \eqref{eq:lmin} 
and \eqref{eq:lmax}, respectively:
\begin{equation}
\delta\psi_{min}(\lambda^{min}(\d,\r),\rho)+H(\rho\delta)=0 
\quad\mbox{ for }\quad \lambda^{min}(\d,\r)\le 1-\r\label{eq:lmin}
\end{equation}
\begin{equation}
\delta\psi_{max}(\lambda^{max}(\d,\r),\rho)+H(\rho\delta)=0
\quad\mbox{ for }\quad \lambda^{max}(\d,\r)\ge 1+\r. \label{eq:lmax}
\end{equation}
Define $\mathcal{L}(\d,\r)$ and $\mathcal{U}(\d,\r)$ as 
\begin{equation}
\mathcal{L}(\d,\r):= 1-\lmin(\delta,\rho) \quad \hbox{and}\quad 
\mathcal{U}(\d,\r):= \min_{\nu\in[\r,1]}\lmax(\d,\nu)-1. \label{eq:LUdeltarho}
\end{equation}
\end{definition} 

To facilitate ease of calculating $\mathcal{L}(\d,\r)$ and $\mathcal{U}(\d,\r)$, web forms for their calculation
are available at \texttt{ecos.maths.ed.ac.uk}.

In the proportional growth asymptotic, the probability that $\mathcal{L}(\d,\r)$ and $\mathcal{U}(\d,\r)$ bound the random variables $L(k,n,N)$ and $U(k,n,N)$, respectively, tends to 1 as $n\rightarrow\infty$.  In statistical terminology, the coverage probability of the upper confidence bounds $\mathcal{L}(\d,\r)$ and $\mathcal{U}(\d,\r)$ tends to one as $n\rightarrow\infty$.  In fact, all probabilities presented in this manuscript converge to their limit ``exponentially in $n$''; that is, the probability for finite $n$ approaches its limit as $n$ grows with discrepancy bounded by a constant multiple of $e^{-n\beta}$ for some fixed $\beta>0$. 

\begin{theorem}[Validity of RIP Bounds]\label{thm:LUbounds}
Fix $\e>0$. Under the proportional-growth asymptotic, Definition \ref{def:pga}, sample each $n\times N$ matrix $A$ from the Gaussian ensemble.  Then
\begin{equation*}
\hbox{Prob}\left(L(k,n,N;A)<\mathcal{L}\left(\d,\r\right)+\e\right)\rightarrow 1\quad \hbox{and}\quad \hbox{Prob}\left(U(k,n,N;A)<\mathcal{U}\left(\d,\r\right)+\e\right)\rightarrow 1
\end{equation*}
exponentially in $n$.
\end{theorem}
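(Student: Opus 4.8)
The plan is to combine a large-deviation tail bound for the extreme eigenvalues of a \emph{single} Wishart matrix with a union bound over all $\binom{N}{k}$ column selections, and then to observe that the threshold equations \eqref{eq:lmin} and \eqref{eq:lmax} are \emph{exactly} the locus where the resulting exponent changes sign. Writing $G_K=\ak^T\ak$, we already have $(1-L(k,n,N))=\min_K\lmin(G_K)$ and $(1+U(k,n,N))=\max_K\lmax(G_K)$ from \eqref{eq:Lrip}--\eqref{eq:Urip}, so both claims become tail statements. For the lower constant, the event $\{L(k,n,N;A)<\mathcal{L}(\d,\r)+\e\}$ equals $\{\min_K\lmin(G_K)>\lmin(\d,\r)-\e\}$, whose complement is contained in $\bigcup_K\{\lmin(G_K)\le\lmin(\d,\r)-\e\}$; the upper constant is handled dually. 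First I would therefore reduce everything to estimating the probability that one fixed $k$-column Wishart matrix has an extreme eigenvalue past the stated threshold, and then pay the combinatorial cost of the union.

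Next I would record the single-matrix bounds. Each $G_K$ is distributed as $\wnk$, and the explicit Wishart density for $\Lmax=\lmaxW$ and $\Lmin=\lminW$ (\cite{Edelman_acta}) yields, up to a prefactor of size $e^{o(n)}$, the estimates $\mathrm{Prob}\left(\lminW\le\l\right)\le e^{o(n)}e^{n\psi_{min}(\l,\r)}$ and $\mathrm{Prob}\left(\lmaxW\ge\l\right)\le e^{o(n)}e^{n\psi_{max}(\l,\r)}$, with rate functions precisely the $\psi_{min},\psi_{max}$ of \eqref{eq:psimin} and \eqref{eq:psimax}. The feature I would exploit is monotonicity in $\l$: a direct differentiation gives $\partial_\l\psi_{min}=\tfrac12\l^{-1}((1-\r)-\l)$ and $\partial_\l\psi_{max}=\tfrac12\l^{-1}((1+\r)-\l)$, so $\psi_{min}(\cdot,\r)$ is increasing for $\l<1-\r$ and $\psi_{max}(\cdot,\r)$ is decreasing for $\l>1+\r$, exactly the one-sided regions singled out by the constraints $\l\le1-\r$ and $\l\ge1+\r$ attached to \eqref{eq:lmin} and \eqref{eq:lmax}. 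This is what converts the $+\e$ margin into a strictly negative exponent.

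For the union bound I use $\binom{N}{k}=\exp\{NH(\kn\cdot\nN)+O(\log N)\}$ together with $\nN\to\d$, $\kn\to\r$, so the count contributes exponent $NH(\r\d)=(n/\d)H(\r\d)(1+o(1))$. Multiplying by the single-matrix bound and normalizing the logarithm by $n/\d$, the lower-tail exponent is $\d\psi_{min}(\l,\r)+H(\r\d)$, which is the left side of \eqref{eq:lmin}; it vanishes at $\l=\lmin(\d,\r)$ by definition, hence by the monotonicity above it is strictly negative at $\l=\lmin(\d,\r)-\e$. Thus $\binom{N}{k}\,\mathrm{Prob}(\lminW\le\lmin(\d,\r)-\e)\to0$ exponentially in $n$, absorbing the $e^{o(n)}$ prefactor, which proves the $L$ statement with $\mathcal{L}(\d,\r)=1-\lmin(\d,\r)$. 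The upper constant needs one extra device to account for the minimization over $\nu\in[\r,1]$ in \eqref{eq:LUdeltarho}: by Cauchy interlacing the largest eigenvalue is monotone under column inclusion, so every $k$-set $K$ sits inside some $\nu n$-set $J$ whenever $\nu\ge\r$, giving $\max_{|K|=k}\lmax(G_K)\le\max_{|J|=\nu n}\lmax(A_J^TA_J)$. Fixing the minimizing $\nu^{\star}$, union-bounding over the $\binom{N}{\nu^{\star} n}$ larger submatrices (each a Wishart matrix of aspect ratio $\nu^{\star}$, entropy cost $H(\nu^{\star}\d)$) produces the exponent $\d\psi_{max}(\l,\nu^{\star})+H(\nu^{\star}\d)$, whose zero is $\lmax(\d,\nu^{\star})$ by \eqref{eq:lmax}, and the same $\e$/monotonicity argument yields $U(k,n,N)<\lmax(\d,\nu^{\star})-1+\e=\mathcal{U}(\d,\r)+\e$ with probability tending to one exponentially.

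The main obstacle is the single-matrix step: obtaining the tail bounds with the \emph{exact} rate functions $\psi_{min},\psi_{max}$ rather than merely the correct order, and controlling the prefactors so they are genuinely $e^{o(n)}$ \emph{uniformly} in $\l$ over the relevant ranges and uniformly as the aspect ratio varies with $\nu$ (the latter being needed for the $\nu$-optimization). One must verify that the Laplace/large-deviation integral coming from the Wishart density is dominated at the claimed rate and that the subexponential correction is swallowed by the strictly negative leading exponent; once that is in hand, the union-bound and interlacing steps are routine.
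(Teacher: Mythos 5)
Your proposal is correct and is essentially the paper's own argument: Edelman's Wishart density bounds give the rate functions $\psi_{min},\psi_{max}$ (Lemmas~\ref{lem:ASYMmax} and~\ref{lem:ASYMmin}), the union bound over ${N\choose k}$ supports contributes the entropy exponent $H(\rho\delta)$ as in \eqref{eq:LDmax}, the threshold equations \eqref{eq:lmin}--\eqref{eq:lmax} are exactly the zero locus of the combined exponent with the sign of $\partial_\lambda\psi$ converting the $\e$-margin into exponential decay (Propositions~\ref{prop:U} and~\ref{prop:L}), and the minimization over $\nu\in[\r,1]$ is justified by monotonicity of $U$ in the sparsity level (Proposition~\ref{prop:U2}, which uses $\chi^N(k)\subset\chi^N(j)$ in place of your equivalent interlacing argument). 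The ``main obstacle'' you flag---controlling the tail integral of the Wishart density and its subexponential prefactor---is precisely what the paper resolves in Proposition~\ref{prop:U} by factoring the decreasing term $\l^{\frac{n}{2}(1+\r)}e^{-n\l/2}$ out of the tail integral, leaving an integrable $\l^{-3/2}$ factor.
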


\begin{figure}[t]
 \begin{center}
 \includegraphics[bb= 70 215 546 589, width=2.70 in,height=2.00 in]{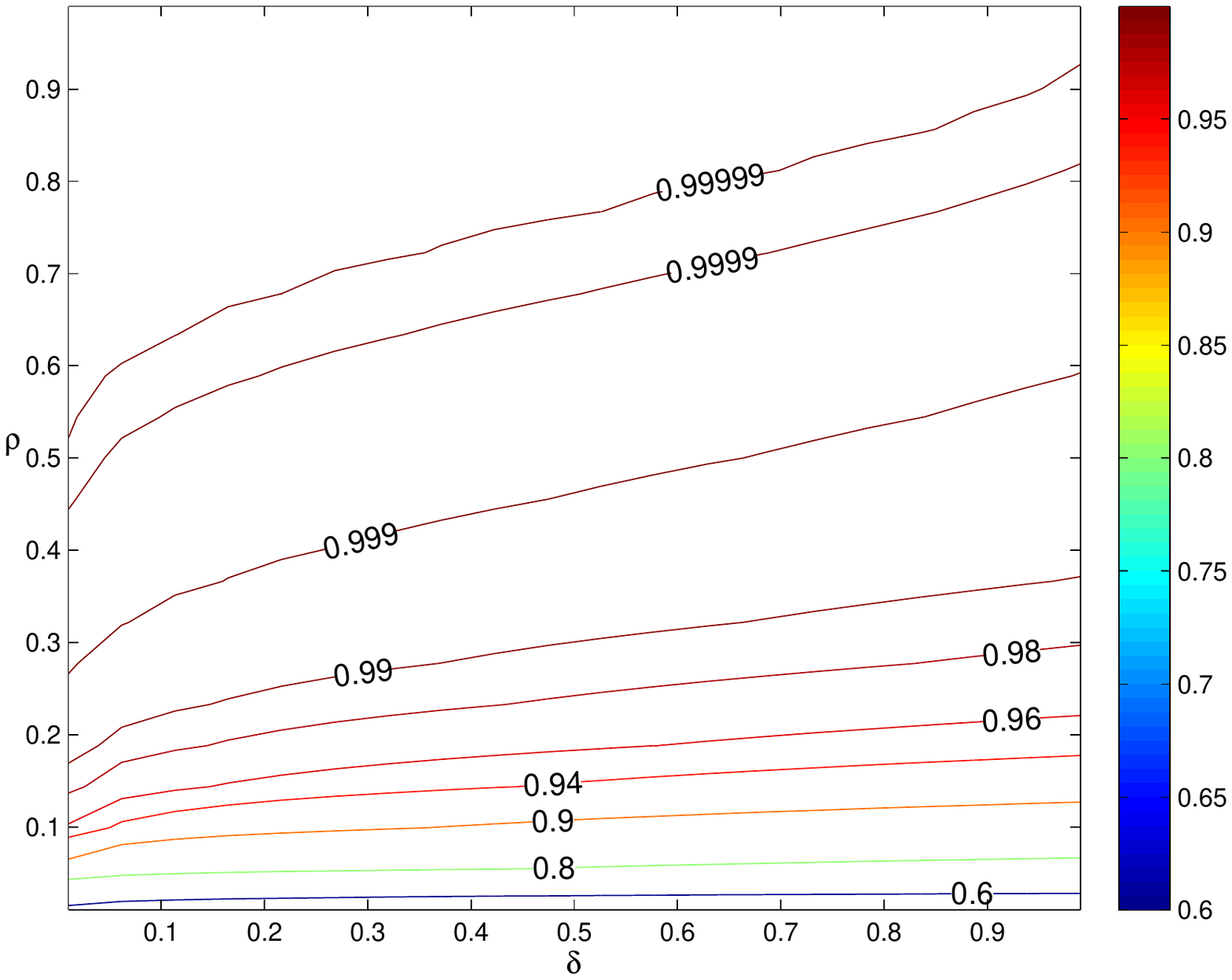}
\includegraphics[bb= 70 215 546 589, width=2.70 in,height=2.00 in]{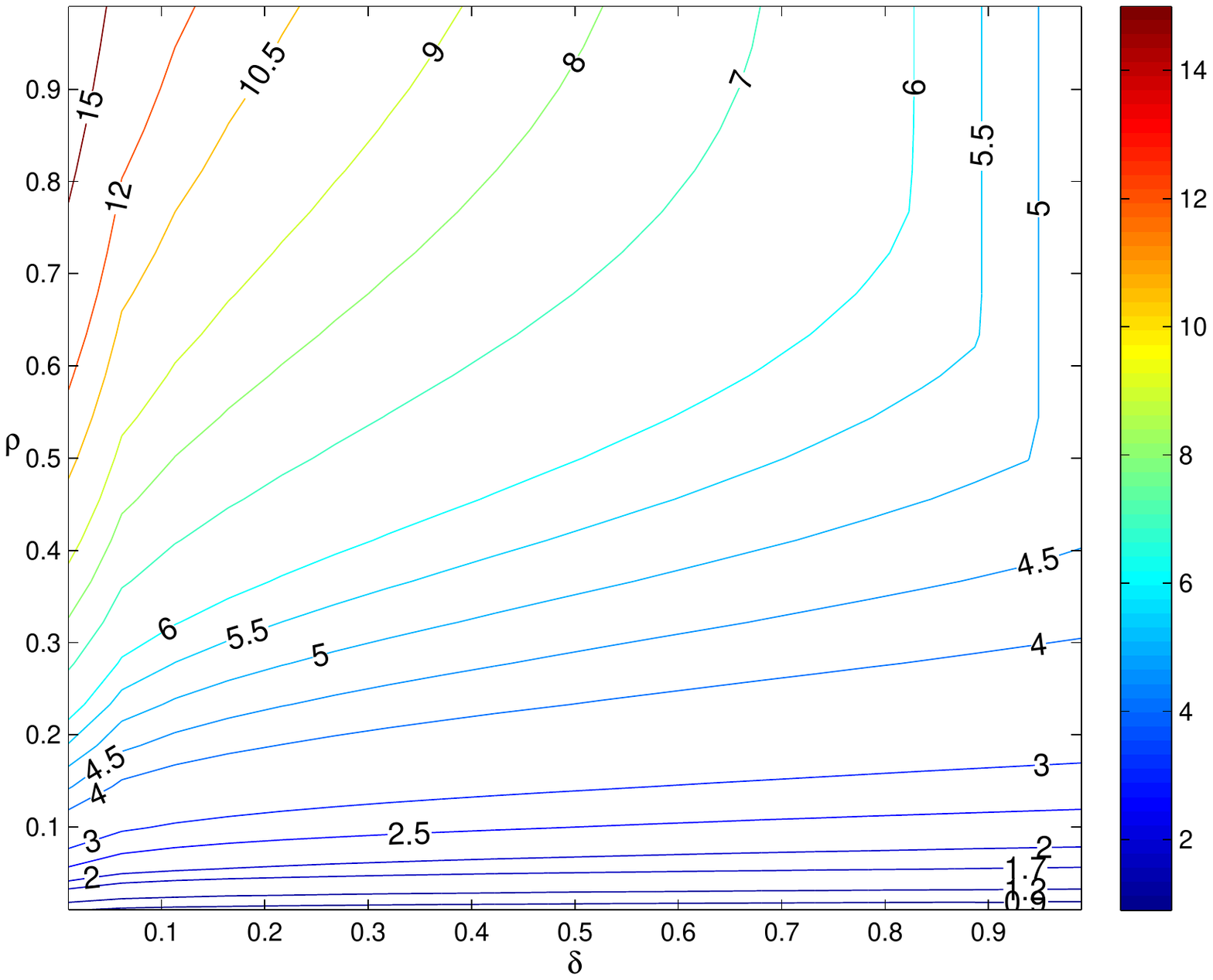}
\end{center}
 \caption{{\bf The RIP bounds of Eq. \eqref{eq:LUdeltarho}.}  Level sets 
 of $\mathcal{L}(\d,\r)$ (left panel) and $\mathcal{U}(\d,\r)$ (right panel) over the phase space $(\d,\r)\in[0,1]^2$.  For large matrices from the Gaussian ensemble, it is overwhelmingly unlikely that the RIP constants $L(k,n,N;A)$ and $U(k,n,N;A)$ will be greater than these values.\label{fig:LU}}
 \end{figure}

\begin{figure}[t]
 \begin{center}
 \includegraphics[bb= 70 215 546 589, width=2.70 in,height=2.00 in]{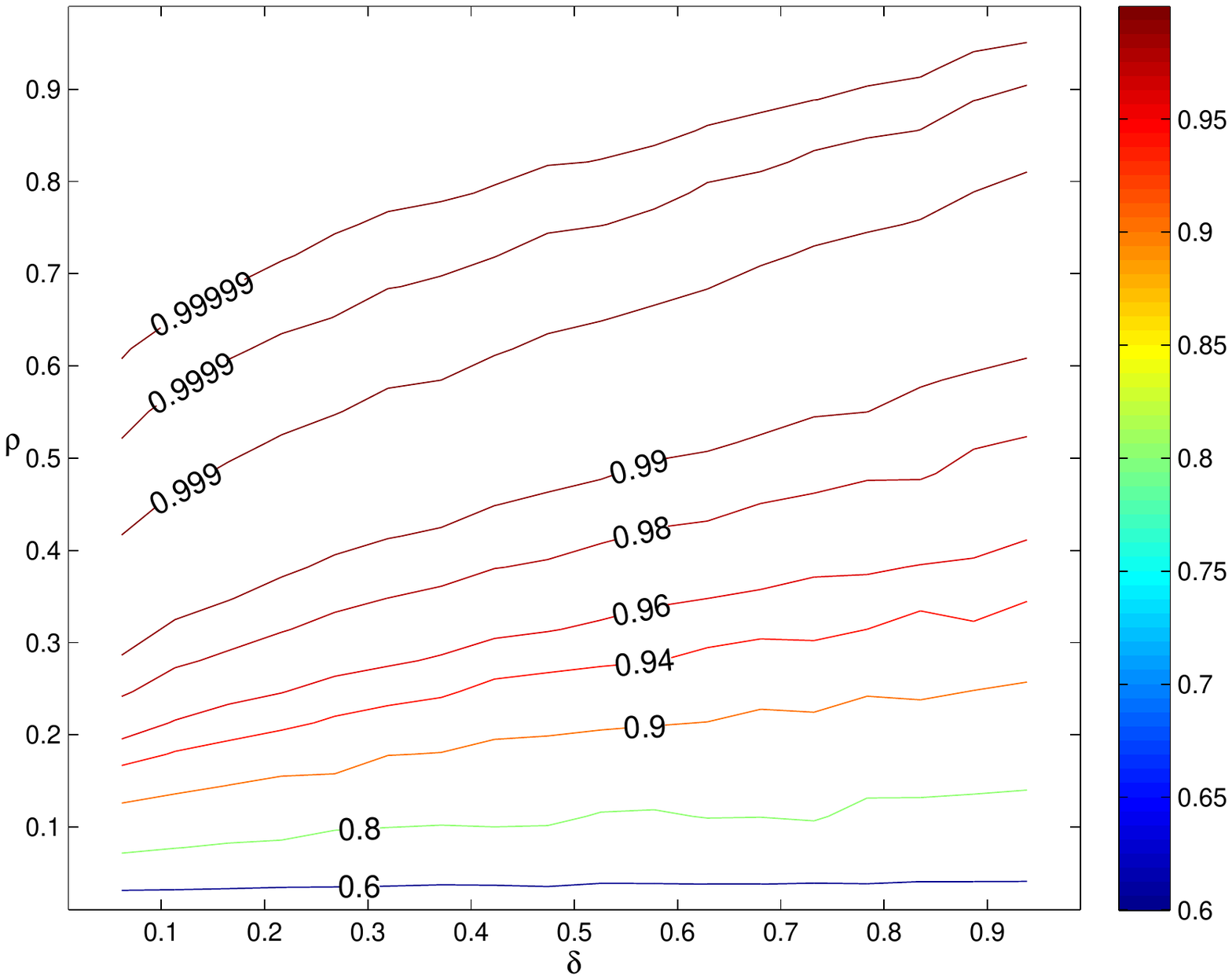}
\includegraphics[bb= 70 215 546 589, width=2.70 in,height=2.00 in]{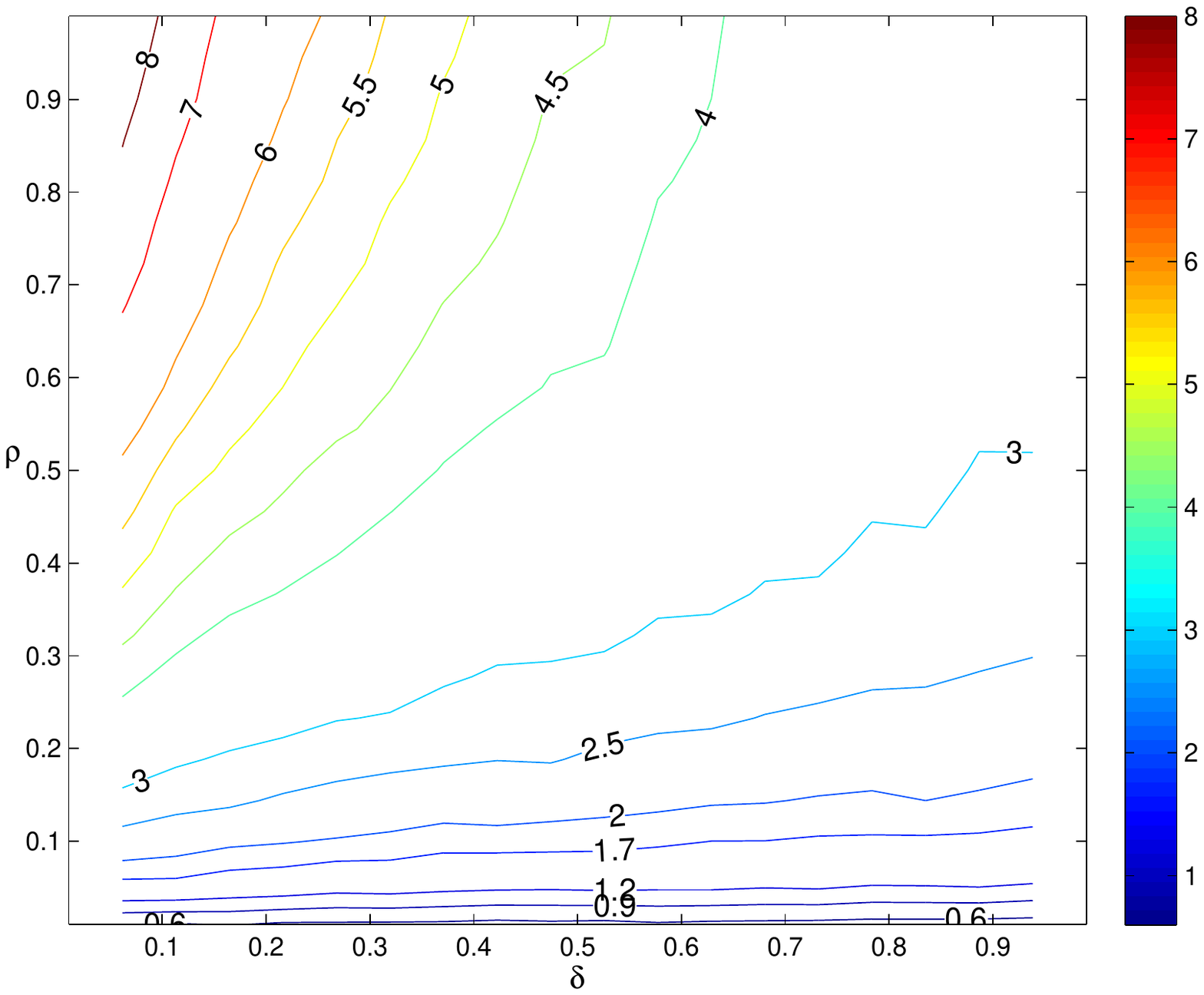}
\end{center}
 \caption{{\bf Empirically observed lower estimates of RIP bounds of RIP constants.}  Although there is no computationally tractable method for calculating the RIP constants of a matrix, there are efficient algorithms which perform local searches for extremal eigenvalues of submatrices; allowing for observable {\em lower} bounds on the RIP constants.  Algorithms for lower bounding $L(k,n,N)$, \cite{Dossal}, and $U(k,n,N)$, \cite{Richtarik}, were applied to dozens of $A$ drawn Gaussian ${\cal N}(0,n^{-1})$ with $n=400$ and $N$ increasing from 420 to 8000.  Level sets of the observed $L(k,n,N;A)$ (left panel) and $U(k,n,N;A)$ (right panel). \label{fig:lu_empirical}}
\end{figure}

\begin{remark}
Extensive empirical estimates of $L(k,n,N)$ and $U(k,n,N)$ show that the bounds
$\mathcal{L}(\d,\r)$ and $\mathcal{U}(\d,\r)$
 are rather sharp; in fact, they are no more than twice the actual upper bounds on $L(k,n,N)$ and $U(k,n,N)$, see Figure \ref{fig:lu_empirical} and Table \ref{tab:emp}, and are much closer for the region applicable for CS decoders, $\rho\ll 1$.  The empirically observed lower bounds on $L(k,n,N)$ and $U(k,n,N)$ are calculated through the following process.  The number of rows, $n$, is fixed at one of the values in Table \ref{tab:emp}.  For each $n$, 47 values of $N$ are selected so that $n/N$ ranges from $1/20$ to $20/21$.  For each $(n,N)$ a matrix $A$ of size $n\times N$ is drawn from ${\cal N}(0,n^{-1})$ and either the algorithm from \cite{Dossal} or \cite{Richtarik} is applied to determine support sets of size $k=1,2,\ldots,n-1$ which are candidates for the support sets that maximize $L(k,n,N;A)$ or $U(k,n,N;A)$.  The largest or smallest eigenvalue of each resulting $n\times k$ submatrix is calculated and recorded.  The above process is repeated for some number of matrices, see the caption of Table \ref{tab:emp}, and the maximum value recorded.  The empirical calculation of RIP constants are {\em lower bounds} on the true RIP constants as the support sets calculated by \cite{Dossal} and \cite{Richtarik} may not be the support sets which maximize the RIP constants.
\end{remark}

\begin{table}[h!]
\begin{center}
\begin{tabular}{|c|c|c|}
\hline
$n$ & $\max \frac{\mathcal{L}(\d,\r)}{L(k,n,N)}$ & $\max \frac{\mathcal{U}(\d,\r)}{U(k,n,N)}$ \\
\hline\hline
200 & 1.22 & 1.83 \\
\hline
400 & 1.32 & 1.81 \\
\hline
\end{tabular}
\end{center}
\caption{The maximum ratio of the RIP bounds in Theorem \ref{thm:LUbounds} to empirically observed values.  For each of the ratios $n/N$ tested, multiple matrices were drawn and empirical low bounds on their RIP constants calculated.  For $n=200$ between 9 and 175 matrices were drawn for each $n/N$, and for $n=400$ between 7 and 489 matrices were drawn for each $n/N$.  Our bounds are numerically found to be within a multiple of 1.83 of empirically observed lower bounds.
\label{tab:emp}}
\end{table}

\subsection{Proof of Theorem \ref{thm:LUbounds}}\label{sec:Proofs}

In order to prove Theorem~\ref{thm:LUbounds},
this section employs a type of large deviation technique often  encountered in CS and applicable in fact, to many areas faced with combinatorial challenges.

We first establish some useful lemmas concerning the extreme eigenvalues of Wishart matrices.  The matrix $A$ generates ${N\choose k}$ different Wishart matrices $G_k=\ak^T\ak$.  Exponential bounds on the tail probabilities of the largest and smallest eigenvalues of such Wishart matrices can be combined with exponential bounds on ${N\choose k}$ to control the chance of large deviations using the union bound.  This large deviation analysis technique is characteristic of proofs in compressed sensing.  By using the exact probability density functions on the tail behavior of the extreme eigenvalues of Wishart matrices the overestimation of the union bound is dramatically reduced.   We focus on the slightly more technical results for the bound on the most extreme of the largest eigenvalues, $\mathcal{U}(\d,\r)$, and prove these statements in full detail.  Corresponding results for $\mathcal{L}(\d,\r)$ are stated with their similar proofs omitted.   

The probability density function, $\pdfmax$, for the largest eigenvalue of the $k\times k$ Wishart matrix $\al^T\al$ was determined by Edelman in \cite{EdelmanEigenvalues88}.  For our analysis, a simplified upper bound suffices.

\begin{lemma}[Lemma 4.2, pp. 550 \cite{EdelmanEigenvalues88}]\label{lem:EdelmanMax}
Let $\al$ be a matrix of size $n\times k$ whose entries are drawn i.i.d from ${\cal N}(0,n^{-1})$.  Let $\pdfmax$ denote the probability density function for the largest eigenvalue of the Wishart matrix $\al^T\al$ of size $k\times k$.  Then $\pdfmax$ satisfies:
\begin{equation}\label{eq:pdf_max}
\pdfmax\le \left[(2\pi)^{1/2}(n\lambda)^{-3/2} 
\left(\frac{n\lambda}{2}\right)^{(n+k)/2}\frac{1}{\Gamma(\frac{k}{2})\Gamma(\frac{n}{2})}\right]\cdot e^{-n\lambda/2}=:g_{max}(k,n;\l).
\end{equation}
\end{lemma}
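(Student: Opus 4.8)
The plan is to reproduce Edelman's derivation from the exact joint density of the eigenvalues of the Wishart matrix $\al^T\al$. Since the entries of $\al$ are i.i.d.\ $\mathcal{N}(0,n^{-1})$, the matrix $n\,\al^T\al$ is a standard real Wishart matrix (built from $n$ samples in $\RR^k$, with $n\ge k$), so I would rescale at the outset and work with $\mu=n\lambda$, the largest eigenvalue of $n\,\al^T\al$; the target density in $\lambda$ is then $n$ times the density in $\mu$, which accounts for the pervasive factors of $n\lambda$. The ordered eigenvalues $\mu_1\ge\cdots\ge\mu_k\ge 0$ have the Laguerre joint density
\[
\tilde p(\mu_1,\dots,\mu_k)=\frac{1}{Z_{n,k}}\prod_{i<j}|\mu_i-\mu_j|\prod_{i=1}^{k}\mu_i^{(n-k-1)/2}e^{-\mu_i/2},
\]
with an explicit normalizing constant $Z_{n,k}$ that is itself a product of Gamma functions and powers of $2$.

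Second, I would write the marginal density of the largest eigenvalue by fixing $\mu_1=\mu$ and integrating the remaining $k-1$ eigenvalues over the box $[0,\mu]^{k-1}$. By exchangeability this equals
\[
\frac{k\,\mu^{(n-k-1)/2}e^{-\mu/2}}{Z_{n,k}}\int_{[0,\mu]^{k-1}}\prod_{j=2}^{k}(\mu-\mu_j)\prod_{2\le i<j\le k}|\mu_i-\mu_j|\prod_{j=2}^{k}\mu_j^{(n-k-1)/2}e^{-\mu_j/2}\,d\mu_2\cdots d\mu_k.
\]
Third, for an upper bound I would use $0\le\mu_j\le\mu$ to replace $\prod_{j=2}^{k}(\mu-\mu_j)$ by $\mu^{k-1}$ and then enlarge the integration region from $[0,\mu]^{k-1}$ to $[0,\infty)^{k-1}$, which only increases the (nonnegative) integrand. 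The remaining integral is then exactly the normalizing constant $Z_{n-1,k-1}$ of the $(k-1)$-point Laguerre ensemble with the same weight $\mu^{(n-k-1)/2}e^{-\mu/2}$ (i.e.\ the Wishart with parameters $n-1,k-1$). Collecting the powers gives $\mu^{(n-k-1)/2+(k-1)}=\mu^{(n+k)/2-3/2}$, matching the claimed $(n\lambda)^{-3/2}(n\lambda/2)^{(n+k)/2}$ dependence once the factor $2^{-(n+k)/2}$ is separated out, while the $e^{-\mu/2}$ factor is already in place. The whole density is thus bounded by $n\,k\,(Z_{n-1,k-1}/Z_{n,k})\,\mu^{(n+k)/2-3/2}e^{-\mu/2}$ evaluated at $\mu=n\lambda$.

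The main obstacle is the constant bookkeeping. The clean factor $\Gamma(k/2)^{-1}\Gamma(n/2)^{-1}$ (together with the $(2\pi)^{1/2}$ and the powers of $2$) must emerge from the ratio $Z_{n-1,k-1}/Z_{n,k}$ of two Wishart normalizations, each a product of roughly $k$ Gamma functions. The point is that passing from $k$ points and parameter $n$ to $k-1$ points and parameter $n-1$ removes almost all of these factors, so the ratio telescopes; carrying this out cleanly requires the Gamma recursion $\Gamma(z+1)=z\Gamma(z)$ and, to reconcile the half-integer arguments, the duplication formula $\Gamma(z)\Gamma(z+\half)=2^{1-2z}\sqrt{\pi}\,\Gamma(2z)$. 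This accounting is exactly where one must follow Edelman's normalization carefully, and it is the only genuinely delicate step; the remaining estimates (the Vandermonde bound and the region enlargement) are elementary and deliberately crude, since the Lemma only asserts an upper bound.
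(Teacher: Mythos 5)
Your reconstruction targets the right object, and it is worth noting at the outset that the paper itself offers no proof to compare against: Lemma~\ref{lem:EdelmanMax} is imported by citation from Edelman, so your derivation from the joint eigenvalue density is precisely the kind of argument the citation points to. Your three estimates are sound: marginalizing the largest eigenvalue of the Laguerre ensemble, bounding the interaction factor $\prod_{j\ge 2}(\mu-\mu_j)\le \mu^{k-1}$, and enlarging $[0,\mu]^{k-1}$ to $[0,\infty)^{k-1}$ so that the remaining integral is the normalization of the $(k-1)$-point ensemble with $n-1$ degrees of freedom (the weight exponents $((n-1)-(k-1)-1)/2=(n-k-1)/2$ do match, as you say). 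The bookkeeping you defer also works out, and more easily than you fear: writing the ordered normalization as $C_{n,k}^{-1}$ with $C_{n,k}=\pi^{k^2/2}2^{-nk/2}[\Gamma_k(k/2)\Gamma_k(n/2)]^{-1}$ and $\Gamma_k(a)=\pi^{k(k-1)/4}\prod_{j=1}^{k}\Gamma(a-(j-1)/2)$, the two multivariate Gamma products telescope outright (no duplication formula is needed), giving $C_{n,k}/C_{n-1,k-1}=(2\pi)^{1/2}\,2^{-(n+k)/2}\,[\Gamma(k/2)\Gamma(n/2)]^{-1}$, which is exactly the claimed constant; your exchangeability factor $k$ is consistent with this provided $Z_{n,k}$ and $Z_{n-1,k-1}$ are both read as unordered normalizations.

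There is, however, one genuine discrepancy that your write-up passes over. Your final bound is $n\,k\,(Z_{n-1,k-1}/Z_{n,k})\,\mu^{(n+k)/2-3/2}e^{-\mu/2}$ at $\mu=n\lambda$, and by the computation above $k\,Z_{n-1,k-1}/Z_{n,k}$ already accounts for the entire constant in \eqref{eq:pdf_max}; the Jacobian factor $n$ you correctly introduced at the start ("the density in $\lambda$ is $n$ times the density in $\mu$") is never cancelled. So what your argument proves is $f_{max}(k,n;\lambda)\le n\cdot g_{max}(k,n;\lambda)$, not $f_{max}(k,n;\lambda)\le g_{max}(k,n;\lambda)$. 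In fact no argument can remove that factor, because the inequality as printed is false: for $k=1$ the single eigenvalue is $n^{-1}\chi^2_n$, whose density is exactly $n\,g_{max}(1,n;\lambda)>g_{max}(1,n;\lambda)$. What happened is that Edelman's Lemma 4.2 bounds the density of the largest eigenvalue of the \emph{unnormalized} Wishart matrix (i.e., of $\mu=n\lambda$), and the paper transcribed it via the substitution $\mu=n\lambda$ without multiplying by the Jacobian. The slip is immaterial downstream, since Lemma~\ref{lem:ASYMmax} absorbs every polynomial prefactor into $p_{max}(n,\lambda)$ and a factor $n$ contributes only $n^{-1}\log n\to 0$ to the exponent; but your proof should either state the bound for the density of $n\lambda$, or carry the extra factor $n$ explicitly, rather than silently asserting (as your last sentence does) that the ratio of normalizations swallows it.
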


For our purposes, it is sufficient to have a precise characterization of $g_{max}(k,n;\l)$'s exponential (with respect to $n$) behavior.

\begin{lemma}\label{lem:ASYMmax}
Let $k/n=\r\in(0,1)$ and define
\begin{equation*}
\psi_{max}(\lambda,\rho):=\frac{1}{2}\left[
(1+\rho)\log\lambda +1+\rho-\rho\log\rho-\lambda
\right].
\end{equation*}
Then
\begin{equation}
\pdfmax\le p_{max}(n,\lambda)\exp(n\cdot\psi_{max}(\lambda,\rho)) 
\end{equation}
where $p_{max}(n,\l)$ is a polynomial in $n,\l$.
\end{lemma}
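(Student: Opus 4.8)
The plan is to read off the exponential-in-$n$ growth rate of $\gmax$ directly from its closed form in Lemma~\ref{lem:EdelmanMax} by taking logarithms and applying Stirling's approximation to the two Gamma factors. Since $1/\Gamma(k/2)$ and $1/\Gamma(n/2)$ appear, I want a \emph{lower} bound on each Gamma function in order to retain an \emph{upper} bound on $\gmax$; the non-asymptotic inequality $\Gamma(z)>\sqrt{2\pi}\,z^{z-1/2}e^{-z}$, valid for all $z>0$, is exactly suited to this, and discarding the factor $e^{\mu(z)}>1$ it omits only loosens the bound in the correct direction. Substituting $z=n/2$ and $z=k/2$, and writing $k=\rho n$, converts the bound on $\gmax$ into an explicit product of powers of $n$, $\lambda$, and $\rho$ times the exponential factor $e^{-n\lambda/2}$.

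Next I would collect the terms of $\log\gmax$ by their dependence on $n$. The contributions carrying a factor $\log(n/2)$ arise from $\frac{n+k}{2}\log(n\lambda/2)$, from $-\log\Gamma(n/2)$, and from $-\log\Gamma(k/2)$, with coefficients $\frac{1+\rho}{2}$, $-\half$, and $-\frac{\rho}{2}$, respectively. The decisive observation is that these cancel exactly, $\frac{1+\rho}{2}-\half-\frac{\rho}{2}=0$, so that no $n\log n$ term survives and the growth is genuinely of order $n$. Collecting the surviving terms that are linear in $n$ then yields the coefficient $\half\left[(1+\rho)\log\lambda+1+\rho-\rho\log\rho-\lambda\right]$, which is precisely $\psi_{max}(\lambda,\rho)$.

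Finally I would argue that everything left over is sub-exponential in $n$. The residual contributions are the constant $\half\log(2\pi)$, the algebraic prefactor $(n\lambda)^{-3/2}$, the two $\sqrt{2\pi}$ normalizations, and the half-powers $(n/2)^{1/2}$ and $(k/2)^{1/2}$ produced by the $-\half$ shifts in the Stirling exponents; exponentiating the sum of their logarithms gives a factor bounded above by a polynomial in $n$ and $\lambda$ for $\rho$ fixed, which we name $p_{max}(n,\lambda)$. Combining these pieces gives $\gmax\le p_{max}(n,\lambda)\exp(n\,\psi_{max}(\lambda,\rho))$, as claimed. The only delicate point, and the step I would write out with care, is the bookkeeping in the preceding paragraph: one must verify that the $\log(n/2)$-weighted terms cancel exactly, since otherwise a spurious $n\log n$ term would render the residual prefactor super-polynomial rather than polynomial.
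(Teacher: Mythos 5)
Your proof is correct and follows essentially the same route as the paper's: the paper also takes $\frac{1}{n}\log\gmax$, expands the Gamma factors via Binet's second log-gamma formula $\log\Gamma(z)=(z-1/2)\log z-z+\log\sqrt{2\pi}+I$ (of which your non-asymptotic bound $\Gamma(z)>\sqrt{2\pi}\,z^{z-1/2}e^{-z}$ is just the version with the positive remainder $I$ discarded, in the direction that is safe for the reciprocal Gamma factors), observes the same cancellation of the $n\log n$ contributions, and identifies the surviving linear-in-$n$ coefficient as $\psi_{max}(\lambda,\rho)$ with the leftover algebraic factors absorbed into $p_{max}(n,\lambda)$.
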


\begin{proof} Let $g_{max}(k,n;\l)$ be as defined in \eqref{eq:pdf_max} and let $\rn=k/n$.  To extract the exponential behavior of $g_{max}(k,n;\l)$ we write
$\frac{1}{n}\log(g_{max}(k,n;\l))=\Phi_1(k,n;\l)+\Phi_2(k,n;\l)+\Phi_3(k,n;\l)$ where
\begin{align*}
\Phi_1(k,n;\l) &= \frac1{2n}\log\left(2\pi\right) -\frac3{2n}\log\left(n\l\right) \\
\Phi_2(k,n;\l) &=  \frac12\left[(1+\rn)\log\left(\frac{\l n}{2}\right)-\l\right]  \\
\Phi_3(k,n;\l) &= -\frac1n\log\left(\Gamma\left(\frac{k}{2}\right)\Gamma\left(\frac{n}{2}\right)\right). 
\end{align*}
Clearly, $\lim_{n\rightarrow\infty}\Phi_1(k,n;\l)=0$ and can be subsumed as part of $p_{max}(n,\l)$.  To simplify $\Phi_3$, we apply the second of Binet's log gamma formulas \cite[Sec. 12.32]{WhittakerWatson}, namely $\log(\Gamma(z))=(z-1/2)\log z - z + \log\sqrt{2\pi} + I$ where $I$ is a convergent, improper integral.  With $c(n,\r)$ representing the constant and integral from Binet's formula we then have
\begin{equation*}
\Phi_2(k,n;\l)+\Phi_3(k,n;\l) = \frac12\left[(1+\rn)\log\l-\left(\rn-\frac1n\right)\log\rn+\frac2n\log\frac{n}2+\rn+1-\l+\frac1nc(n,\rn)\right]. 
\end{equation*}
As $\lim_{n\rightarrow \infty}n^{-1}c(n,\r_n)=0$ it can be absorbed into $p_{max}(n,\lambda)$ and we have
\begin{equation*}
\psi_{max}(\l,\r):=\lim_{n\rightarrow\infty}\frac1n\log \left[g_{max}(k,n;\l)\right]=\frac12\left[(1+\r)\log\l-\r\log\r+\r+1-\l\right]
\end{equation*} and the conclusion follows.
\end{proof}

To bound $U(k,n,N)$, we must simultaneously account for all ${N\choose k}$ Wishart matrices $\al^T\al$ derived from $A$.  Using a union bound this amounts to studying the exponential behavior of ${N\choose k}g_{max}(k,n;\l)$.  In the proportional-growth asymptotic this can be determined by characterizing $\lim_{N\rightarrow\infty}N^{-1}\log\left[{N\choose k}g_{max}(k,n;\l)\right]$, which from Lemma \ref{lem:ASYMmax} is given by 
\begin{align}
\lim_{N\rightarrow\infty}\frac1N\log\left[{N\choose k}g_{max}(k,n;\l)\right] &= \lim_{N\rightarrow\infty}\frac1N\log\left[{N\choose k}\right]
+ \lim_{N\rightarrow\infty}\frac1N\log\left[g_{max}(n,k;\l)\right] \nonumber \\
&= H\left(\frac{k}{N}\right) + \d\lim_{n\rightarrow\infty}\frac1n \log\left[g_{max}(n,k;\l)\right] \nonumber\\
&= H(\r\d) + \d\psi_{max}(\l,\r)=:\d\psi_\mathcal{U}(\d,\r;\l).\label{eq:LDmax}
\end{align}
Recall that $H(p):=p\log(1/p)+(1-p)\log(1/(1-p))$ is the usual Shannon Entropy with base $e$ logarithms. 

Equipped with Lemma \ref{lem:ASYMmax} and \eqref{eq:LDmax}, Proposition \ref{prop:U} establishes $\lmax(\delta,\rho)-1$ as an upper bound on $U(k,n,N)$ in the proportional-growth asymptotic.

\begin{proposition}\label{prop:U} 
Let $\d,\r\in(0,1)$, and $A$ be a matrix of size $n\times N$ whose entries are drawn i.i.d.\ from ${\cal N}(0,n^{-1})$.  Define $\tilde{\mathcal{U}}(\d,\r):= \lmax(\delta,\rho)-1$ where $\lmax(\d,\r)$ is the solution to \eqref{eq:lmax}. Then for any $\e>0$, in the proportional-growth asymptotic 
\[Prob\left( U(k,n,N)>\tilde{\mathcal{U}}(\delta,\rho)+\e\right)\rightarrow 0\] exponentially in $n$.
\end{proposition}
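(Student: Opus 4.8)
The plan is to recast the event in terms of the largest eigenvalues of the ${N\choose k}$ Wishart submatrices, apply the union bound, and then control a single tail probability by integrating the density bound supplied by Lemma~\ref{lem:EdelmanMax}. From \eqref{eq:Urip} we have the deterministic identity $1+U(k,n,N)=\max_K\lmax(G_K)$ with $G_K=\ak^T\ak$, so the event $\{U(k,n,N)>\tilde{\mathcal{U}}(\d,\r)+\e\}$ coincides with $\{\max_K\lmax(G_K)>\lmax(\d,\r)+\e\}$. Since every $G_K$ is distributed as a single $k\times k$ Wishart matrix, the union bound gives
\begin{equation*}
\mathrm{Prob}\left(\max_K\lmax(G_K)>\lmax(\d,\r)+\e\right)\le {N\choose k}\,\mathrm{Prob}\left(\Lmax>\lmax(\d,\r)+\e\right),
\end{equation*}
and it remains to show the right-hand side decays exponentially in $n$.

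First I would bound the single tail probability by integrating the density. Writing $t:=\lmax(\d,\r)+\e$, Lemma~\ref{lem:EdelmanMax} yields
\begin{equation*}
\mathrm{Prob}\left(\Lmax>t\right)=\int_t^\infty \pdfmax\,d\l\le \int_t^\infty \gmax\,d\l.
\end{equation*}
The crucial estimate is a Laplace-type (Watson's-lemma) argument showing that this tail integral inherits the exponential rate of $\gmax$ at the endpoint $\l=t$. Since $\psi_{max}(\l,\r)=\frac12[(1+\r)\log\l+1+\r-\r\log\r-\l]$ has derivative $\frac12[(1+\r)/\l-1]$, it is strictly decreasing for $\l>1+\r$; as the constraint in \eqref{eq:lmax} forces $\lmax(\d,\r)\ge 1+\r$, the integrand $\gmax\sim p_{max}(n,\l)\exp(n\psi_{max}(\l,\r))$ is decaying across all of $[t,\infty)$, so the integral is dominated by its left endpoint. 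Consequently $\frac1n\log\int_t^\infty\gmax\,d\l\to \psi_{max}(t,\r)$, the polynomial prefactor and the $\mathcal{O}(1/n)$ Laplace correction being negligible on the exponential scale.

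Combining this with \eqref{eq:LDmax}, the exponential rate of the union bound is
\begin{equation*}
\lim_{N\to\infty}\frac1N\log\left[{N\choose k}\int_t^\infty\gmax\,d\l\right]=H(\r\d)+\d\psi_{max}(t,\r)=\d\psi_{\mathcal{U}}(\d,\r;t).
\end{equation*}
To conclude I would invoke monotonicity: since $\psi_{max}(\cdot,\r)$ is strictly decreasing on $(1+\r,\infty)$, so is $\l\mapsto\d\psi_{\mathcal{U}}(\d,\r;\l)$ there; by definition \eqref{eq:lmax} this quantity vanishes at $\l=\lmax(\d,\r)$, hence is strictly negative at $t=\lmax(\d,\r)+\e$. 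Therefore the union bound, and with it the probability in question, decays like $e^{-cN}=e^{-c'n}$ for some $c'>0$, which is the claimed exponential decay in $n$.

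The main obstacle I anticipate is the Laplace-type endpoint estimate: one must verify rigorously that integrating the density bound does not inflate the exponential rate, i.e.\ that $\int_t^\infty\gmax\,d\l$ and $g_{max}(k,n;t)$ share the same $\frac1n\log(\cdot)$ limit. This requires confirming that $\psi_{max}(\l,\r)$ is genuinely decreasing (not merely stationary) at $\l=t>1+\r$, controlling the prefactor $p_{max}(n,\l)$ uniformly in $\l$, and checking integrability as $\l\to\infty$, where $\psi_{max}(\l,\r)\sim-\l/2$ guarantees convergence. A secondary point is ensuring that the finite-$n$ corrections absorbed into $p_{max}$ and $c(n,\rn)$ in Lemma~\ref{lem:ASYMmax} remain $o(n)$ uniformly on $[t,\infty)$, so that they do not interfere with the exponential estimate.
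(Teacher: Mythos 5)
Your proposal is correct and follows essentially the same route as the paper's proof: a union bound over the ${N\choose k}$ submatrices, Edelman's density bound from Lemma~\ref{lem:EdelmanMax}, an endpoint-dominated estimate of the tail integral $\int_{\lmax(\d,\r)+\e}^{\infty}\gmax\,d\l$ (which the paper carries out explicitly by factoring out the decreasing part $\l^{\frac{n}{2}(1+\rho)}e^{-n\l/2}$ and integrating $\l^{-3/2}$, realizing exactly the Laplace-type step you flag as the main obstacle), and finally negativity of the combined rate $\d\psi_{\mathcal{U}}(\d,\r;\l)$ beyond $\lmax(\d,\r)$. The only cosmetic differences are that the paper extracts the quantitative decay via strict concavity of $\psi_{\mathcal{U}}$ (bounding the exponent by $n\e\,\frac{d}{d\l}\psi_{\mathcal{U}}(\l)|_{\l=\lmax(\d,\r)}<0$) where you invoke strict monotonicity directly, and the paper additionally verifies existence and uniqueness of the solution to \eqref{eq:lmax} and tracks the finite-$n$ parameters $(\dn,\rn)\rightarrow(\d,\r)$, points you should fold in when writing the argument out in full.
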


\begin{proof}  Throughout this proof $\d$ and $\r$ are fixed, and we focus our attention on $\l$, often abbreviating $\psi_\mathcal{U}(\d,\r;\l)$  in \eqref{eq:LDmax} as $\psi_U(\l)$.  We first verify that \eqref{eq:lmax} has a unique solution.  Since
\[ 
\frac{d}{d\l}\psi_U(\l)=\frac12\left(\frac{1+\r}{\l}-1\right),
\]
$\psi_U(\l)$ is strictly decreasing on $[1+\r,\infty)$ and is strictly concave.
Combined with
\begin{align*}
\psi_U(1+\r) &=\d^{-1}H(\r\d)+\frac12\left[(1+\r)\log(1+\r) +\r\log\frac{1}{\r}\right]>0
\end{align*}
and $\lim_{\l\rightarrow\infty}\psi_U(\l)=-\infty$, there is a unique solution to \eqref{eq:lmax}, namely $\lmax(\d,\r)$.

Select $\e>0$ and let $(k,n,N)$ be such that $\nN= \dn$, $\kn=\rn$.  First, we write the probability statement in terms of $\lmax(\dn,\rn)$:
\begin{align}
Prob\left[U(k,n,N)>\tilde{\mathcal{U}}(\dn,\rn)+\e)\right] &= Prob\left[U(k,n,N)>\lmax(\dn,\rn)-1+\e)\right] \nonumber \\
&= Prob\left[1+U(k,n,N)>\lmax(\dn,\rn)+\e)\right] \nonumber \\
&= {N\choose k}\int_{\lmax(\dn,\rn)+\e}^\infty \pdfmax d\l \nonumber \\
&\le {N\choose k}\int_{\lmax(\dn,\rn)+\e}^\infty g_{max}(k,n;\l) d\l. \label{eq:intgmax}
\end{align}

To bound the integral in \eqref{eq:intgmax} in terms of $g_{max}(\d,\r;\lmax(\dn,\rn))$ we write $\gmax$ in terms of $n$, $\rn$, and $\l$ as $\gmax=\varphi(n,\rn)\l^{-\frac32}\l^{\frac{n}2(1+\rn)}e^{-\frac{n}2\l}$ where 
\[ \varphi(n,\rn)=(2\pi)^\frac12 n^{-\frac32} \left(\frac{n}{2}\right)^{\frac{n}2(1+\rn)} \frac1{\G\left(\frac{n}2\rn\right)\G\left(\frac{n}2\right)}.\]
Since $\lmax(\dn,\rn)>1+\rn$, the quantity $\l^{\frac{n}2(1+\rn)}e^{-\frac{n}2\l}$ is strictly decreasing in $\l$ on $[\lmax(\d,\rn),\infty)$.  Therefore we have
\begin{align}
\int_{\lmax(\dn,\rn)+\e}^\infty g_{max}(k,n;\l) d\l &\le \varphi(n,\rn)\left(\lmax(\dn,\rn)+\e\right)^{\frac{n}2(1+\rn)}e^{-\frac{n}2\left(\lmax(\dn,\rn)+\e\right)}\int_{\lmax(\dn,\rn)+\e}^\infty \l^{-\frac32} d\l \nonumber \\
&= \left(\lmax(\dn,\rn)+\e\right)^{\frac32}g_{max}\left(k,n;\lmax(\dn,\rn)+\e\right)\int_{\lmax(\dn,\rn)+\e}^\infty \l^{-\frac32} d\l \nonumber \\
&= 2\left(\lmax(\dn,\rn)+\e\right)g_{max}\left(k,n;\lmax(\dn,\rn)+\e\right). \label{eq:intbound}
\end{align}
Therefore, combining \eqref{eq:intgmax} and \eqref{eq:intbound} we obtain
\begin{align}
Prob\left[U(k,n,N)>\tilde{\mathcal{U}}(\dn,\rn)+\e)\right] &\le 2\left(\lmax(\dn,\rn)+\e\right){N\choose k}g_{max}\left(k,n;\lmax(\dn,\rn)+\e\right) \nonumber \\
&\le p_{max}\left(n,\lmax(\dn,\rn)\right)\exp\left[n\cdot \psi_U\left(\lmax(\dn,\rn)+\e\right)\right] \nonumber \\ 
& \le p_{max}\left(n,\lmax(\dn,\rn)\right)\exp\left[n\e\cdot 
\frac{d}{d\l}\psi_U(\l)|_{\l=\left(\lmax(\dn,\rn)\right)}\right], \label{eq:Ulastboud}
\end{align}
with the last inequality following from the strict concavity of $\psi_U(\l)$.
Since $\frac{d}{d\l}\psi_U\left(\lmax(\d,\r)\right)<0$ is strictly bounded away from zero and 
$\lim_{n\rightarrow \infty}\lmax(\dn,\rn)=\lmax(\d,\r)$, we arrive at, for any $\e>0$
\[
\lim_{n\rightarrow\infty}Prob\left[U(k,n,N)>\tilde{\mathcal{U}}(\d,\r)+\e)\right]\rightarrow 0.
\]

\end{proof}

The term $H(\r\d)$ in \eqref{eq:LDmax}, from the union bound over all $N \choose k$ matrices $\al^T\al$, results in an overly pessimistic bound in the vicinity of $\r\d= 1/2$.  As we are seeking the least upper bound on $U(k,n,N)$ we note that any upper bound for $U(j,n,N)$ for $j>k$ is also an upper bound for $U(k,n,N)$, and replace the bound $\tilde{\mathcal{U}}(\d,\r)$ with the minimum of $\tilde{\mathcal{U}}(\d,\nu)$ for $\nu\in[\r,1]$.

\begin{proposition}\label{prop:U2} 
Let $\d,\r\in(0,1)$, and define $\mathcal{U}(\d,\r):=\min_{\nu\in [\r,1]}\tilde{\mathcal{U}}(\d,\nu)$ with $\tilde{\mathcal{U}}(\d,\nu)$ defined as in Proposition \ref{prop:U}.  For any $\e>0$, in the 
proportional-growth asymptotic 
\[Prob\left( U(k,n,N)>\mathcal{U}(\delta,\rho)+\e\right)\rightarrow 0\] exponentially in $n$.
\end{proposition}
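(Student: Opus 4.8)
The plan is to exploit the monotonicity of $U(k,n,N)$ in the sparsity level $k$. Since $\chi^N(k)\subseteq\chi^N(j)$ whenever $k\le j$, the constraint in \eqref{eq:Urip} defining $U(j,n,N)$ ranges over a larger set of vectors and is therefore more restrictive, forcing $U(k,n,N)\le U(j,n,N)$. Consequently, any high-probability upper bound on $U(j,n,N)$ at a larger sparsity $j$ is automatically a high-probability upper bound on $U(k,n,N)$, and we are free to optimize over $j$ to obtain the sharpest such bound, which is precisely what passing to $\min_{\nu\in[\r,1]}\tilde{\mathcal{U}}(\d,\nu)$ accomplishes.

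First I would verify that the minimum defining $\mathcal{U}(\d,\r)$ is attained at some $\nu^\star\in[\r,1]$. The map $\nu\mapsto\tilde{\mathcal{U}}(\d,\nu)=\lmax(\d,\nu)-1$ is continuous: differentiating the defining relation \eqref{eq:lmax} and using that at its solution $\frac{\partial}{\partial\lambda}\psi_{max}(\lambda,\nu)=\frac12\left(\frac{1+\nu}{\lambda}-1\right)<0$ because $\lmax(\d,\nu)>1+\nu$, the implicit function theorem yields continuity (indeed smoothness) of $\lmax(\d,\cdot)$ on $[\r,1]$. Continuity on a compact interval guarantees a minimizer $\nu^\star$, and by construction $\tilde{\mathcal{U}}(\d,\nu^\star)=\mathcal{U}(\d,\r)$.

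Next I would transfer the bound from $\nu^\star$ back to $\r$ by applying Proposition \ref{prop:U} along an auxiliary sparsity sequence. Choose integers $j=j(n)$ with $k\le j\le n$ and $j/n\to\nu^\star$; this is possible since $\nu^\star\ge\r$ and $\kn\to\r$ (take $j=\lceil\nu^\star n\rceil$ when $\nu^\star>\r$, or simply $j=k$ when $\nu^\star=\r$). The triples $(j,n,N)$ then grow proportionally with ratios converging to $(\d,\nu^\star)$, so Proposition \ref{prop:U} gives $Prob\left(U(j,n,N)>\tilde{\mathcal{U}}(\d,\nu^\star)+\e\right)\to0$ exponentially in $n$. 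Combining this with the monotonicity inequality $U(k,n,N)\le U(j,n,N)$ yields
\[
Prob\left(U(k,n,N)>\mathcal{U}(\d,\r)+\e\right)\le Prob\left(U(j,n,N)>\tilde{\mathcal{U}}(\d,\nu^\star)+\e\right)\to0,
\]
which is the assertion of the proposition.

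The step I expect to require the most care is the bookkeeping in the proportional-growth limit rather than any substantive difficulty. Proposition \ref{prop:U} is phrased for a sequence whose sparsity ratio converges to a fixed value, so one must confirm that the surrogate sequence $j(n)$ honestly satisfies $j/n\to\nu^\star$ while respecting $k\le j\le n$, and that the exponential decay rate supplied by Proposition \ref{prop:U}, controlled by $\frac12\left(\frac{1+\nu^\star}{\lmax(\d,\nu^\star)}-1\right)<0$ evaluated at the solution of \eqref{eq:lmax}, stays bounded away from zero as $n\to\infty$. These are routine continuity and compactness checks; the entire conceptual content of the proposition is the monotonicity of $U(k,n,N)$ in $k$ together with the attainment of the minimum over $\nu$.
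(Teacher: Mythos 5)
Your proposal is correct and follows essentially the same route as the paper's proof: the monotonicity $U(k,n,N)\le U(j,n,N)$ coming from $\chi^N(k)\subseteq\chi^N(j)$, an application of Proposition \ref{prop:U} along a sparsity sequence with ratio tending to $\nu\in[\r,1]$, and a minimization over the compact interval $[\r,1]$. The only difference is that you spell out details the paper leaves implicit (continuity of $\lmax(\d,\cdot)$ via the implicit function theorem, attainment of the minimum, and the explicit construction $j=\lceil\nu^\star n\rceil$), which is fine but not a different argument.
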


\begin{proof}
By the definition of $\chi^N(k)$ in Definition \ref{def:rip}, $U(j,n,N)\ge U(k,n,N)$ for $j=k+1,k+2,\ldots,n$; combined with Proposition \ref{prop:U}
for $\frac{j}{n}\rightarrow\nu$ as $n\rightarrow\infty$
\[Prob\left( U(j,n,N)>\tilde{\mathcal{U}}(\delta,\nu)+\e\right)\rightarrow 0\]
exponentially in $n$, and taking a minimum over the compact set $\nu\in [\r,1]$ we arrive at the desired result.
\end{proof}

A similar approach leads to corresponding results for $\mathcal{L}(\d,\r)$.  Edelman also determined the probability density function, $\pdfmin$, for the smallest eigenvalue of the $k\times k$ Wishart matrix $\al^T\al$ \cite{EdelmanEigenvalues88}.  Here again, a simplified upper bound suffices:

\begin{lemma}[Prop. 5.2, pp. 553 \cite{EdelmanEigenvalues88}]\label{lem:EdelmanMin}
Let $\al$ be a matrix of size $n\times k$ whose entries are drawn i.i.d.\ from $\mathcal{N}(0,n^{-1})$.  Let $\pdfmin$ denote the probability density function for the smallest eigenvalue of the Wishart matrix $\al^T\al$ of size $k\times k$.  Then $\pdfmin$ satisfies:
\begin{equation}\label{eq:pdf_min}
\pdfmin\le 
\left(\frac{\pi}{2n\lambda}\right)^{1/2}\;\cdot\;
e^{-n\lambda/2}\left(\frac{n\lambda}{2}\right)^{(n-k)/2}
\;\cdot\;
\left[    \frac{\Gamma(\frac{n+1}{2})}{\Gamma(\frac{k}{2})\Gamma(\frac{n-k+1}{2})\Gamma(\frac{n-k+2}{2})}\right]=:g_{min}(k,n;\l).
\end{equation}
\end{lemma}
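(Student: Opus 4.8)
The plan is to derive the bound \eqref{eq:pdf_min} by simplifying Edelman's exact formula for the density of the smallest eigenvalue of a real Wishart matrix, in exactly the spirit of Lemma~\ref{lem:EdelmanMax}. Edelman computes this density for the \emph{unscaled} Gram matrix $Z^TZ$, where $Z$ is $n\times k$ with i.i.d.\ entries from $\mathcal{N}(0,1)$; since our $\al$ has entries from $\mathcal{N}(0,n^{-1})$, we may write $\al=n^{-1/2}Z$ so that $\al^T\al=n^{-1}Z^TZ$, and hence every eigenvalue of $\al^T\al$ is $n^{-1}$ times the corresponding eigenvalue of $Z^TZ$. The first step is therefore the change of variables $\ell=n\lambda$: if $\tilde f$ denotes Edelman's density for the smallest eigenvalue of $Z^TZ$, then $\pdfmin=n\,\tilde f(n\lambda)$, which accounts for the appearance of the argument $n\lambda$ and for the powers of $n$ throughout \eqref{eq:pdf_min}.

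Next I would isolate the elementary content of $\tilde f$. The density of the smallest eigenvalue of a real Wishart matrix has the schematic form of a Gamma-function normalization constant, times the ``hard-edge'' factor $\ell^{(n-k-1)/2}e^{-\ell/2}$, times a transcendental correction term (a confluent hypergeometric factor, equivalently a finite sum enforcing that the remaining $k-1$ eigenvalues all exceed $\ell$). Substituting $\ell=n\lambda$ turns the hard-edge factor into $(n\lambda/2)^{(n-k)/2}e^{-n\lambda/2}$ together with the $(n\lambda)^{-1/2}$-type prefactor seen in \eqref{eq:pdf_min}, while collecting the normalization constants produces the ratio $\Gamma(\frac{n+1}{2})/[\Gamma(\frac{k}{2})\Gamma(\frac{n-k+1}{2})\Gamma(\frac{n-k+2}{2})]$. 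The essential inequality is then to bound the transcendental correction factor from above by a constant (absorbed into the prefactor $(\pi/2)^{1/2}$), so that simply discarding it yields a true upper bound $\gmin$ rather than only an asymptotic equivalent.

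The main obstacle is precisely this last step: controlling the correction term uniformly in $\lambda$ rather than only near the hard edge $\lambda\approx 0$. I would exploit the monotonicity of the confluent hypergeometric factor together with its explicit series representation to show that its product with the decaying exponential is maximized in a controllable regime, and then absorb that maximum into the stated constant; the remainder is routine Gamma-function bookkeeping to match the three denominator factors in \eqref{eq:pdf_min}. Since the statement is Edelman's Proposition~5.2, the most economical route in the paper is to cite that result directly, with the sketch above indicating how the bound is obtained from the exact density.
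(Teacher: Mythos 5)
The paper supplies no proof of this lemma: it is imported directly from Edelman's Proposition 5.2 (p.~553 of \cite{EdelmanEigenvalues88}), restated under the $\mathcal{N}(0,n^{-1})$ scaling, which is precisely what your final paragraph recommends doing. Your sketch of how that bound arises---the change of variables $\ell = n\lambda$ together with bounding and discarding the hypergeometric correction to the hard-edge factor in Edelman's exact density (any polynomial-in-$n$ bookkeeping discrepancies being harmless, since they are absorbed into the prefactor $p_{min}(n,\lambda)$ in Lemma~\ref{lem:ASYMmin})---is consistent with Edelman's own derivation, so your proposal is correct and takes essentially the same approach as the paper.
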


With Lemma \ref{lem:EdelmanMin}, we establish a bound on the asymptotic behavior of the distribution of the smallest eigenvalue of Wishart matrix of size $k\times k$. 

\begin{lemma}\label{lem:ASYMmin}
Let $k/n=\rho\in(0,1)$ and define
\begin{equation*}
\psi_{min}(\lambda,\rho):=H(\rho)+\frac{1}{2}\left[(1-\rho)\log\lambda +1-\rho+\rho\log\rho-\lambda\right].
\end{equation*}
Then
\begin{equation} \label{eq:LargeDeviationMin}
\pdfmin\le p_{min}(n,\lambda)\exp(n\cdot\psi_{min}(\lambda,\rho)) 
\end{equation}
where $p_{min}(n,\l)$ is a polynomial in $n,\l$.
\end{lemma}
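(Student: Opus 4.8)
The plan is to mirror the proof of Lemma~\ref{lem:ASYMmax} essentially verbatim: starting from the explicit bound $\gmin$ of Lemma~\ref{lem:EdelmanMin}, I extract its exponential-in-$n$ rate by computing $\lim_{n\to\infty}\frac{1}{n}\log\gmin$. Writing $\rn=k/n$, I split
\[
\frac{1}{n}\log\gmin=\Phi_1+\Phi_2+\Phi_3,
\]
where $\Phi_1=\frac{1}{2n}\log\!\left(\frac{\pi}{2n\lambda}\right)$ gathers the purely algebraic prefactor and tends to $0$; $\Phi_2=\frac{1-\rn}{2}\log\!\left(\frac{n\lambda}{2}\right)-\frac{\lambda}{2}$ gathers the two factors $e^{-n\lambda/2}$ and $\left(\frac{n\lambda}{2}\right)^{(n-k)/2}$; and $\Phi_3=\frac{1}{n}\log\!\left[\Gamma(\tfrac{n+1}{2})\big/\bigl(\Gamma(\tfrac{k}{2})\Gamma(\tfrac{n-k+1}{2})\Gamma(\tfrac{n-k+2}{2})\bigr)\right]$ gathers the logarithm of the gamma ratio. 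As in Lemma~\ref{lem:ASYMmax}, $\Phi_1$ is absorbed into $p_{min}(n,\lambda)$.

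Next I would apply the second of Binet's log-gamma formulas to each of the four gamma factors, exactly as in the proof of Lemma~\ref{lem:ASYMmax}, discarding the convergent constant-plus-integral contribution into $p_{min}(n,\lambda)$. Each factor $\Gamma(\tfrac{an+b}{2})$ contributes, to leading order, $\frac{a}{2}\log\!\left(\frac{n}{2}\right)+\frac{a}{2}\log a-\frac{a}{2}$, with $a=1$ for the numerator and $a=\rn,\,1-\rn,\,1-\rn$ for the three denominator factors. Summing, the net coefficient of the divergent term $\log\!\left(\frac{n}{2}\right)$ in $\Phi_3$ is $\frac{1}{2}-\frac{\rn}{2}-2\cdot\frac{1-\rn}{2}=-\frac{1-\rn}{2}$.

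The one point requiring care is the cancellation of this divergent $\log\!\left(\frac{n}{2}\right)$ term: it is matched exactly by the $+\frac{1-\rn}{2}\log\!\left(\frac{n}{2}\right)$ inside $\Phi_2$, the balance being forced by the exponent $(n-k)/2$ coinciding with the combined order $\frac{(1-\rn)n}{2}$ of the two \emph{transverse} denominator factors $\Gamma(\tfrac{n-k+1}{2})$ and $\Gamma(\tfrac{n-k+2}{2})$. Thus $\Phi_2+\Phi_3$ has a finite limit, the residue being of order $n^{-1}\log n$ and hence harmless (upon re-exponentiation it becomes a genuine polynomial in $n$, which is why $p_{min}$ is polynomial). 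The new structural feature relative to the maximum-eigenvalue computation is that these two transverse gamma factors, absent from $\gmax$, produce the term $-(1-\rho)\log(1-\rho)$, so the Shannon entropy $H(\rho)=-\rho\log\rho-(1-\rho)\log(1-\rho)$ appears in $\psi_{min}$ directly from $\gmin$, whereas in the maximum case the analogous entropy entered only later, through the ${N\choose k}$ union bound.

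Finally I would collect the limits of $\Phi_2+\Phi_3$, let $\rn\to\rho$, and verify
\[
\lim_{n\to\infty}\frac{1}{n}\log\gmin=\frac{1-\rho}{2}\log\lambda-\frac{\lambda}{2}-\frac{\rho}{2}\log\rho-(1-\rho)\log(1-\rho)+\frac{1-\rho}{2}=\psi_{min}(\lambda,\rho),
\]
the last equality being a direct rearrangement using $H(\rho)=-\rho\log\rho-(1-\rho)\log(1-\rho)$. Re-exponentiating and folding every vanishing and convergent factor into $p_{min}(n,\lambda)$ then yields $\pdfmin\le\gmin\le p_{min}(n,\lambda)\exp(n\,\psi_{min}(\lambda,\rho))$, which is the claim. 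The argument is entirely a bookkeeping computation; the main obstacle is simply not botching the $\log(n/2)$ cancellation and correctly reassembling the three gamma constants into the entropy $H(\rho)$.
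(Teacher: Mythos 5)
Your proposal is correct and is exactly the argument the paper intends: the paper omits this proof, stating only that it is "similar" to that of Lemma~\ref{lem:ASYMmax}, and you carry out precisely that analogue — Binet's formula applied to the gamma factors of $\gmin$, cancellation of the divergent $\log(n/2)$ terms against the $\left(\frac{n\lambda}{2}\right)^{(n-k)/2}$ factor, and absorption of subexponential remainders into $p_{min}(n,\lambda)$. Your limit computation and its identification with $\psi_{min}(\lambda,\rho)$ via $H(\rho)=-\rho\log\rho-(1-\rho)\log(1-\rho)$ both check out.
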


With Lemma \ref{lem:ASYMmin}, the large deviation analysis yields
\begin{equation}
\lim_{N\rightarrow\infty}\frac1N\log\left[{N\choose k}g_{min}(k,n;\l)\right] = H(\r\d) + \d\psi_{min}(\l,\r).\label{eq:LDmin}
\end{equation}
Similar to the proof of Proposition \ref{prop:U}, Lemma \ref{lem:ASYMmin} and \eqref{eq:LDmin} are used to establish $\mathcal{L}(\d,\r)$ as an upper bound on $L(k,n,N)$ in the proportional-growth asymptotic.  

\begin{proposition}\label{prop:L} 
Let $\d,\r\in(0,1]$, and $A$ be a matrix of size $n\times N$ whose entries are drawn i.i.d.\ from ${\cal N}(0,n^{-1})$. Define $\mathcal{L}(\d,\r):= 1-\lmin(\delta,\rho)$ where $\lmin(\d,\r)$ is the solution to \eqref{eq:lmin}.  Then for any $\e>0$, in the proportional-growth asymptotic 
\[ Prob\left( L(k,n,N)>\mathcal{L}(\delta,\rho)+\e\right)\rightarrow 0\] exponentially in $n$.
\end{proposition}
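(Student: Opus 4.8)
The plan is to mirror the proof of Proposition~\ref{prop:U} almost verbatim, replacing every ``max'' quantity by its ``min'' counterpart and reversing the direction of the tail. Writing $\psi_L(\l):=\d^{-1}H(\r\d)+\psi_{min}(\l,\r)$, so that by \eqref{eq:LDmin} one has $\d\psi_L(\l)=\lim_{N\to\infty}N^{-1}\log\left[{N\choose k}\gmin\right]$, the defining equation \eqref{eq:lmin} is exactly $\psi_L(\lmin(\d,\r))=0$. Using the identity $1-L(k,n,N)=\min_K\lmin(G_K)$ recorded before the theorem, the event $L(k,n,N)>\mathcal{L}(\d,\r)+\e$ is the event that some submatrix satisfies $\lmin(G_K)<\lmin(\d,\r)-\e$, and the goal is to show its probability decays exponentially in $n$.

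First I would establish that \eqref{eq:lmin} has a unique solution on $(0,1-\r]$. Differentiating gives $\frac{d}{d\l}\psi_L(\l)=\half\left(\frac{1-\r}{\l}-1\right)$, which is strictly positive on $(0,1-\r)$ and vanishes at $\l=1-\r$, while the second derivative $-\half(1-\r)\l^{-2}<0$ shows $\psi_L$ is strictly concave; hence $\psi_L$ is strictly increasing on $(0,1-\r)$. Since a short computation gives $\psi_{min}(1-\r,\r)=\half H(\r)$, we have $\psi_L(1-\r)=\d^{-1}H(\r\d)+\half H(\r)>0$, whereas $\psi_L(\l)\to-\infty$ as $\l\to 0^+$ because of the $\half(1-\r)\log\l$ term. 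Monotonicity and continuity then yield a unique root $\lmin(\d,\r)\in(0,1-\r)$, so in particular $\lmin(\d,\r)<1-\r$ strictly. Next, fixing $(k,n,N)$ with $\nN=\dn$ and $\kn=\rn$, a union bound over the ${N\choose k}$ submatrices together with Lemma~\ref{lem:EdelmanMin} gives
\[
Prob\left[L(k,n,N)>\mathcal{L}(\dn,\rn)+\e\right]\le {N\choose k}\int_0^{\lmin(\dn,\rn)-\e}\pdfmin\,d\l\le {N\choose k}\int_0^{\lmin(\dn,\rn)-\e}\gmin\,d\l.
\]

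The one genuinely different step is the integral estimate, and this is where I expect the only real care to be needed. Writing $\gmin=\varphi_{min}(n,\rn)\,\l^{-1/2}\,h(\l)$ with $h(\l):=\l^{\frac n2(1-\rn)}e^{-\frac n2\l}$ and $\varphi_{min}$ collecting the $\l$-independent Gamma factors, one checks $h'(\l)=\frac n2 h(\l)\left(\frac{1-\rn}{\l}-1\right)$, so $h$ is \emph{increasing} on $(0,1-\rn)$ --- the opposite of the $U$ case. Because $\lmin(\dn,\rn)-\e<1-\rn$, the entire interval of integration lies in this increasing region, so $h(\l)\le h(\lmin(\dn,\rn)-\e)$ there, while the remaining factor $\l^{-1/2}$ has an integrable singularity at the origin with $\int_0^{\mu}\l^{-1/2}\,d\l=2\mu^{1/2}$. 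Pulling $h$ out at the right endpoint and integrating $\l^{-1/2}$ yields exactly
\[
\int_0^{\lmin(\dn,\rn)-\e}\gmin\,d\l\le 2\left(\lmin(\dn,\rn)-\e\right)g_{min}\left(k,n;\lmin(\dn,\rn)-\e\right),
\]
the precise analog of \eqref{eq:intbound}.

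Finally I would combine this with Lemma~\ref{lem:ASYMmin} and \eqref{eq:LDmin} to bound the probability by $p_{min}\left(n,\lmin(\dn,\rn)\right)\exp\left[n\,\psi_L\left(\lmin(\dn,\rn)-\e\right)\right]$, absorbing the subexponential binomial factors into $p_{min}$ as in \eqref{eq:Ulastboud}, and then invoke strict concavity. Since $\psi_L(\lmin(\d,\r))=0$, the tangent-line inequality for concave functions gives $\psi_L(\lmin-\e)\le -\e\,\frac{d}{d\l}\psi_L(\l)|_{\l=\lmin}$ with $\frac{d}{d\l}\psi_L(\lmin)>0$ bounded away from $0$, so the exponent is at most $-n\e$ times a positive constant. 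Using $\lim_{n\to\infty}\lmin(\dn,\rn)=\lmin(\d,\r)$ exactly as in Proposition~\ref{prop:U}, the probability tends to $0$ exponentially in $n$, as required. The derivation is thus structurally identical to the $U$ case; the only substantive difference is the reversed monotonicity of the dominant factor $h$ and the integrable endpoint singularity at $\l=0$ handled above.
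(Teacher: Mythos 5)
Your proposal is correct and follows essentially the same route as the paper: the paper omits the proof of Proposition \ref{prop:L}, stating only that it parallels Proposition \ref{prop:U} via Lemma \ref{lem:EdelmanMin}, Lemma \ref{lem:ASYMmin} and \eqref{eq:LDmin}, which is exactly what you carry out. Your handling of the one step that genuinely differs --- bounding the lower-tail integral by pulling out the factor $\lambda^{\frac{n}{2}(1-\rho_n)}e^{-\frac{n}{2}\lambda}$ (increasing on $(0,1-\rho_n)$) at the right endpoint and integrating the $\lambda^{-1/2}$ singularity to obtain the analogue of \eqref{eq:intbound} --- is correct, as are the reversed signs in the uniqueness and tangent-line/concavity arguments.
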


The bound $\mathcal{L}(\d,\r)$ is strictly increasing in $\r$ for any $\d\in(0,1)$, and as a consequence no tighter bound can be achieved by minimizing over matrices of larger size as was done in Proposition \ref{prop:U2}.


\section{RIP Undersampling Theorems}\label{sec:Algorithms}

The high level of interest in compressed sensing is due to the introduction of computationally efficient and stable algorithms which provably solve the seemingly intractable \eqref{eq:l0prime} even for $k$ proportional to $n$.  New compressed sensing decoders are being introduced regularly; broadly speaking, they fall into one of two categories: greedy algorithms
and regularizations.  Greedy algorithms are iterative, with each step
selecting a locally optimal subset of entries in $x$ which are adjusted to improve the
desired error metric.  Examples of greedy algorithms include
Orthogonal Matching Pursuit (OMP) \cite{Greed}, Regularized OMP (ROMP)
\cite{NeVe06_UUP}, Stagewise OMP (StOMP) \cite{DTDS08}, Compressive 
Sampling MP (CoSaMP) \cite{NeTr09_cosamp}, Subspace Pursuit (SP) \cite{SubspacePursuit}, and Iterated Hard Thresholding (IHT) \cite{BlDa08_iht}.  Regularization 
formulations for sparse approximation
began with the relaxation of \eqref{eq:l0prime} to the now ubiquitous (convex) $\ell^1$-minimization \cite{CDS98}, \eqref{eq:l1}, and has since been extended to non-convex 
$\ell^q$-minimization for $q\in(0,1)$, \cite{Gribonval,FoucartLai08,ChartrandRIP,ChartrandExact,Yilmaz}.
Although general-purpose convex optimization solvers may be employed 
to solve $\ell^1$-minimization  \eqref{eq:l1}, highly-efficient  software  has been recently designed specifically for $\ell^1$-minimization in the context of compressed sensing, see \cite{CDS98, GPSR, BeFr08, Bregman}.  Non-convex formulations have sometimes been able to offer substantial improvements, but at the cost of limited guarantees that the global minima can be found efficiently, so it remains unclear how practical they really are.

As stated at the end of the introduction, one of the central aims of this article is to advocate a unifying framework for the
comparison of results in compressed sensing. Currently there is no general agreement in the compressed sensing community on such a framework, making it difficult to compare results obtained by different methods of analysis or to identify when new results are improvements over existing ones.  Donoho has put forth the phase transition framework borrowed
from the statistical mechanics literature  and used
successfully in a similar context by the combinatorial optimization
community, see \cite{HartmannRieger,HartmannWeight}.  This framework has been successfully employed in compressed sensing by Donoho et al, 
\cite{DoSt06_breakdown,DoTa05_signal,DoTs06_fast}.

Fortunately, every compressed sensing algorithm that has an optimal recovery order of $n$ proportional to $k$ can be cast in the phase transition framework of Donoho et al., parametrized by two inherent problem size 
parameters\footnote{For some algorithms, such as $\ell^1$-regularization, 
these two parameters fully 
characterize the behavior of the algorithm for a particular matrix ensemble, 
whereas for other algorithms, such as OMP, the distribution of the 
nonzero coefficients also influences the behavior of the method.}:
\begin{itemize}  
\item the {\em undersampling rate} of measuring $x$ through $n$ inner products
with the rows of $A$, as compared to directly sampling each
element of $x\in\RR^N$: 
$$\d_n=n/N\in(0,1)$$
\item  the {\em oversampling rate} of making $n$
measurements as opposed to the optimal {\em oracle} rate of making
$k$ measurements when the oracle knows the support of $x$:
$$\r_n=k/n\in (0,1).$$ 
\end{itemize}
For each value of $\d_n\in (0,1)$ there is a largest 
value of $\rho_n$ which guarantees successful recovery of $x$.

We now formalize the phase transition framework described above. 

\begin{definition}[Strong Equivalence] The event StrongEquiv($A,${\sc alg}) denotes the following property of an $n\times N$ matrix $A$: for \emph{every} $k$-sparse vector $x$, the algorithm ``{\sc alg}'' exactly recovers $x$ from the corresponding measurements $y=Ax$.
\end{definition}

For most compressed sensing algorithms and for a broad class of matrices, under the proportional-growth asymptotic there is a strictly positive
function $\r_S(\delta;${\sc alg}$)>0$ defining a region of the $(\d,\r)$ phase space which ensures successful recovery of every $k$-sparse vector $x\in\chi^N(k)$. This function, $\r_S(\d;${\sc alg}), is called the \emph{Strong phase transition} function \cite{CTDecoding,Do05_signal,Do05_polytope}.  

\begin{definition}[Region of Strong Equivalence]  
Consider the proportional-growth asymptotic with parameters $(\d,\r)\in(0,1)\times(0,1/2)$.  Draw the corresponding $n\times N$ matrices $A$ from the Gaussian ensemble and fix $\e>0$.  Suppose that we are given a function $\r_S(\d$;{\sc alg}) with the property that, whenever $0<\r<(1-\e)\r_S(\d;${\sc alg}), $\hbox{Prob}(\hbox{StrongEquiv}(A, ${\sc alg}$))\rightarrow1$ as $n\rightarrow\infty$.  We say that $\rho_S(\d$;{\sc alg}) bounds a \emph{region of strong equivalence}.
\end{definition}

\begin{remark}
The subscript $S$ emphasizes that the phase transition function $\r_S(\d$;{\sc alg}) 
will define a region of the $(\d,\r)$ phase space which guarantees that the event 
StrongEquiv($A$,{\sc alg}) is satisfied with  probability on the draw of $A$ converging to one exponentially in $n$.  This notation has been established in the literature by Donoho and Tanner \cite{Do05_signal,DoTa05_signal} to distinguish strong equivalence (i.e. that every $k$-sparse vector $x$ is successfully recovered) from weak equivalence (i.e. all but a small fraction of $k$-sparse vectors are successfully recovered).  For example, \cite{Do05_signal,DoTa05_signal} study the event  where  $\lone$-minimization \eqref{eq:l1} exactly recovers $x$ from the corresponding measurements $y=Ax$, except for a fraction $(1-\e)$ of the support sets.
\end{remark}

For the remainder of this section, we translate existing guarantees of $\sl1$ into bounds on the region of strong equivalence in the proportional-growth asymptotic; we denote $\r_S(\d;\ell^1)\equiv
\r_S(\d)$.   A similar presentation of other CS decoders is available in \cite{BCTT09}.   In order to make quantitative statements, the matrix or random matrix ensemble must first be specified, \cite{BCT09_DecayRIP};  we again consider $A$ drawn from the Gaussian ensemble.\footnote{Similar results have been
proven for other random matrix ensembles, but they are even less precise than those for the Gaussian distribution.}  In Section \ref{subsec:l1l0} we demonstrate how to incorporate the RIP bounds from Section~\ref{sec:RIPBounds} into results obtained from an RIP analysis.  In Section~\ref{subsec:comparison} we compare bounds on the region of $\sl1$ proven by three distinct methods of analysis: eigenvalue analysis and the RIP \cite{FoucartLai08},  geometric functional analysis \cite{RV07_gaussian}, and convex polytopes \cite{Do05_signal}.  

\subsection{Region of $\sl1$ implied by the RIP}\label{subsec:l1l0}

In this section, we incorporate the bounds on RIP constants established in Section~\ref{sec:RIPBounds} into a known condition implying $\sl1$ obtained from an RIP analysis.  Following the pioneering work of Cand\`es, Romberg, and Tao \cite{CRTRobust,CTNearOptimal}, many different conditions on the RIP constants have been developed which ensure recovery of every $k$-sparse vector via $\lone$-minimization,  
\cite{CompressiveSampling,CandesRoot2,CRTStable, CTDecoding,RV07_gaussian} to name a few.  The current state of the art RIP conditions for $\ell^1$-minimization were developed by Foucart and Lai \cite{FoucartLai08}.

\begin{theorem}[Foucart and Lai \cite{FoucartLai08}]\label{thm:LF1}
  For any matrix $A$ of
  size $n\times N$ with RIP constants $L(2k,n,N)$ and $U(2k,n,N)$, for
  $2k\le n<N$.  Define  
\begin{equation}\label{eq:mu_knN}
\mu^{FL}(k,n,N):=\frac{1+\sqrt{2}}{4}\left(\frac{1+U(2k,n,N)}{1-L(2k,n,N)} -1 \right).
\end{equation}
If $\mu^{FL}(k,n,N)<1$, then there is $\sl1$.
\end{theorem}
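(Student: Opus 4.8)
The plan is to establish Theorem~\ref{thm:LF1} through the \emph{null space property}, the standard bridge from an RIP hypothesis to exact recovery by \eqref{eq:l1}. Recall that $\sl1$ holds precisely when every nonzero $h$ in the kernel of $A$ satisfies $\|h_S\|_1<\|h_{S^c}\|_1$ for all index sets $S$ with $|S|\le k$: if $\widehat{x}$ solves \eqref{eq:l1} with $A\widehat{x}=Ax=y$ for some $x\in\chi^N(k)$, then $h:=\widehat{x}-x$ lies in $\ker A$ and minimality of $\|\widehat{x}\|_1$ forces $\|h_{T_0^c}\|_1\le\|h_{T_0}\|_1$ with $T_0:=\mathrm{supp}(x)$, so the strict inequality above can only hold if $h=0$. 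Since $\|h_S\|_1$ is largest when $S$ indexes the $k$ entries of $h$ of greatest magnitude, it suffices to verify the property for this single worst-case choice of $S$, which I continue to call $T_0$.

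First I would convert the asymmetric constants of Definition~\ref{def:LU} into a restricted-orthogonality estimate: for $u,v$ with disjoint supports and $|\mathrm{supp}(u)|+|\mathrm{supp}(v)|\le 2k$, expanding $\|A(u\pm v)\|_2^2$ and invoking \eqref{eq:Lrip}--\eqref{eq:Urip} at sparsity $2k$ gives
\[
|\langle Au,Av\rangle|\le\frac{U(2k,n,N)+L(2k,n,N)}{2}\,\|u\|_2\,\|v\|_2.
\]
Next I would partition $T_0^c$ into consecutive blocks $T_1,T_2,\dots$ of size $k$ ordered by decreasing magnitude and, writing $T_{01}:=T_0\cup T_1$, use $Ah=0$ to express the self-term as
\[
\|A h_{T_{01}}\|_2^2=-\sum_{j\ge2}\bigl(\langle A h_{T_0},A h_{T_j}\rangle+\langle A h_{T_1},A h_{T_j}\rangle\bigr).
\]
Each inner product now pairs size-$k$ vectors with disjoint supports, so the orthogonality estimate applies; combining this with the lower bound $(1-L(2k,n,N))\|h_{T_{01}}\|_2^2\le\|A h_{T_{01}}\|_2^2$ and the sorting estimate $\|h_{T_j}\|_2\le k^{-1/2}\|h_{T_{j-1}}\|_1$ leads to an inequality of the form $\|h_{T_0}\|_1\le C\,\|h_{T_0^c}\|_1$, with $C$ a fixed multiple of $\bigl(U(2k,n,N)+L(2k,n,N)\bigr)/\bigl(1-L(2k,n,N)\bigr)$.

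The main obstacle is pinning down the sharp constant. The direct chaining just sketched produces $C=\tfrac{\sqrt2}{2}\cdot\tfrac{U(2k,n,N)+L(2k,n,N)}{1-L(2k,n,N)}$, the classical Cand\`es threshold, which is strictly weaker than the condition $\mu^{FL}(k,n,N)<1$ of \eqref{eq:mu_knN}. Recovering the Foucart--Lai factor $\tfrac{1+\sqrt2}{4}$, so that $C$ coincides exactly with $\mu^{FL}(k,n,N)$, requires their refined shifting/sorting inequality, which couples the head block $T_0$ and the tail blocks $T_j$ more economically than the crude steps $\|h_{T_0}\|_2+\|h_{T_1}\|_2\le\sqrt2\,\|h_{T_{01}}\|_2$ and $\sum_{j\ge2}\|h_{T_j}\|_2\le k^{-1/2}\|h_{T_0^c}\|_1$ used above; this delicate estimate is the genuinely nontrivial step. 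With the sharp bound in hand, the hypothesis $\mu^{FL}(k,n,N)<1$ yields $C<1$, hence $\|h_{T_0}\|_1<\|h_{T_0^c}\|_1$ for every nonzero $h\in\ker A$, establishing the null space property and therefore $\sl1$. As Theorem~\ref{thm:LF1} is quoted from \cite{FoucartLai08}, my argument would in essence reproduce theirs, the only systematic change being the use of the asymmetric constants $L(2k,n,N)$ and $U(2k,n,N)$ throughout in place of the symmetric RIP constant.
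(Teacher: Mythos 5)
First, a point of reference: the paper never proves Theorem~\ref{thm:LF1}; it is imported verbatim from \cite{FoucartLai08} and used as a black box in Section~\ref{subsec:l1l0}. So your proposal can only be measured against the original Foucart--Lai argument, whose architecture you have reconstructed correctly: recovery of every $k$-sparse vector is equivalent to the null space property, the asymmetric constants yield the restricted-orthogonality bound $|\langle Au,Av\rangle|\le\frac12\bigl(U(2k,n,N)+L(2k,n,N)\bigr)\|u\|_2\|v\|_2$ (your parallelogram computation is right), and the block decomposition of $T_0^c$ together with the lower RIP bound on $T_0\cup T_1$ closes the loop. The constant $C=\frac{\sqrt2}{2}\cdot\frac{U(2k,n,N)+L(2k,n,N)}{1-L(2k,n,N)}$ that you report for this crude chaining is indeed what that computation gives.

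The genuine gap is that this is not the theorem you were asked to prove. The statement carries the constant $\frac{1+\sqrt2}{4}$, i.e.\ the hypothesis $\mu^{FL}(k,n,N)<1$ only guarantees $\frac{U(2k,n,N)+L(2k,n,N)}{1-L(2k,n,N)}<\frac{4}{1+\sqrt2}=4(\sqrt2-1)\approx1.657$, whereas the argument you actually carry out needs this ratio below $\sqrt2\approx1.414$; in the intermediate regime your chain of inequalities gives $C\ge1$ and the null space property does not follow. The single ingredient that bridges this gap --- which you name but neither state nor prove --- is Foucart and Lai's sorting inequality: for $a_1\ge a_2\ge\cdots\ge a_s\ge0$,
\begin{equation*}
\Bigl(\sum_{i=1}^{s}a_i^2\Bigr)^{1/2}\le\frac{1}{\sqrt s}\sum_{i=1}^{s}a_i+\frac{\sqrt s}{4}\,(a_1-a_s).
\end{equation*}
Applied to each block $T_j$, $j\ge1$, the correction terms telescope (the largest entry of $h_{T_j}$ is at most the smallest entry of $h_{T_{j-1}}$), so the crude estimate $\sum_{j\ge1}\|h_{T_j}\|_2\le k^{-1/2}\|h_{T_0^c}\|_1$ is replaced by one carrying an extra term of order $\frac14\|h_{T_0}\|_2$, and re-running the chaining with this refinement is precisely what produces the factor $\frac{1+\sqrt2}{4}$. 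Since that lemma is the entire content of the improvement from the Cand\`es-type condition to the Foucart--Lai condition, deferring it means your proposal, as written, establishes only the strictly weaker theorem with threshold $\frac{\sqrt2}{2}$, not Theorem~\ref{thm:LF1}.
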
 

To translate this result into the phase transition framework for matrices from the Gaussian ensemble, we employ the RIP bounds \eqref{eq:LUdeltarho} to the asymmetric RIP constants $L(2k,n,N)$ and $U(2k,n,N)$. It turns out that naively inserting these bounds into \eqref{eq:mu_knN} yields a bound on $\mu^{FL}(k,n,N)$, see Lemma \ref{lem:mu_epsilon2}, and provides a simple way to obtain a bound on the region of strong equivalence.

\begin{definition}[RIP Region of $\sl1$ ]
Define
\begin{equation}\label{mu_dr}
\mu^{FL}(\d,\r):=\frac{1+\sqrt{2}}{4}\left(\frac{1+\mathcal{U}(\d,2\r)}{1-\mathcal{L}(\d,2\r)} -1 \right)
\end{equation}
and $\rfl$ as the solution to $\mu^{FL}(\d,\r)=1$.
\end{definition}

The function $\rfl$ is displayed as the red curve in Figure \ref{fig:l1}.  

\begin{theorem}\label{thm:LF1_asymptotic}
Fix $\e>0$.  Consider the proportional-growth asymptotic, Definition \ref{def:pga},  with parameters $(\d,\r)\in(0,1)\times(0,1/2)$.  Draw the corresponding $n\times N$ matrices $A$ from the Gaussian ensemble.  If $\r<(1-\e)\rfl$, then $\hbox{Prob}(\hbox{StrongEquiv}(A,\lone))\rightarrow1$ as $n\rightarrow\infty$.

Therefore the function $\rfl$ bounds a region of strong equivalence for $\lone$-minimization.
\end{theorem}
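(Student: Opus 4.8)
The plan is to glue together two ingredients already in hand: the \emph{deterministic} recovery guarantee of Theorem~\ref{thm:LF1} and the \emph{probabilistic} RIP bounds of Theorem~\ref{thm:LUbounds}. The crucial point is that Theorem~\ref{thm:LF1} makes no reference to randomness: for a single fixed matrix $A$, the inequality $\mu^{FL}(k,n,N)<1$ already forces $\hbox{StrongEquiv}(A,\lone)$. Consequently it suffices to prove that, with probability tending to one, a Gaussian $A$ has RIP constants $L(2k,n,N)$ and $U(2k,n,N)$ small enough to drive $\mu^{FL}(k,n,N)$ below $1$. The whole argument is then a matter of transferring the bounds $\mathcal{L}(\d,2\r)$ and $\mathcal{U}(\d,2\r)$ through the formula~\eqref{mu_dr} and controlling the error terms by continuity and monotonicity.

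First I would record the monotonicity of $\mu^{FL}(k,n,N)$ in its RIP arguments, which is exactly the content of Lemma~\ref{lem:mu_epsilon2}. As long as $L<1$, the ratio $(1+U)/(1-L)$ is increasing in $U$ (numerator) and increasing in $L$ (shrinking denominator), so replacing the true constants by any upper bounds only raises $\mu^{FL}$. Thus, on the event
\[
L(2k,n,N)<\mathcal{L}(\d,2\r)+\e_1 \qquad\text{and}\qquad U(2k,n,N)<\mathcal{U}(\d,2\r)+\e_1,
\]
one has $\mu^{FL}(k,n,N)<\tilde\mu(\d,\r,\e_1)$, where $\tilde\mu$ is the right-hand side of~\eqref{mu_dr} with $\mathcal{L}(\d,2\r)$ and $\mathcal{U}(\d,2\r)$ each inflated by $\e_1$. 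Note the same event also yields $L(2k,n,N)<1$ once $\e_1$ is small, so $\mu^{FL}$ is well defined there.

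Next I would use the definition of $\rfl$ as the solution of $\mu^{FL}(\d,\r)=1$. Since $\mathcal{L}(\d,2\r)$ is strictly increasing in $\r$ (stated after Proposition~\ref{prop:L}) and $\mathcal{U}(\d,2\r)$ is nondecreasing in $\r$ (its defining minimum runs over the shrinking set $\nu\in[2\r,1]$), the map $\r\mapsto\mu^{FL}(\d,\r)$ is continuous and strictly increasing; hence the hypothesis $\r<(1-\e)\rfl$ produces a strict gap $\mu^{FL}(\d,\r)<1-\eta$ for some $\eta>0$. Because $\tilde\mu(\d,\r,\e_1)\to\mu^{FL}(\d,\r)$ as $\e_1\downarrow0$ (here one uses that $1-\mathcal{L}(\d,2\r)>0$ keeps the denominator bounded away from zero), I can fix $\e_1>0$ so small that $\tilde\mu(\d,\r,\e_1)<1$ as well. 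With this $\e_1$, I apply Theorem~\ref{thm:LUbounds} at sparsity $2k$: since $\r<1/2$ gives $2\r<1$, the triple $(2k,n,N)$ is itself a proportional-growth sequence with oversampling rate $2\r$, so both displayed RIP bounds hold with probability tending to one exponentially in $n$. On their intersection the chain $\mu^{FL}(k,n,N)<\tilde\mu(\d,\r,\e_1)<1$ holds, Theorem~\ref{thm:LF1} delivers $\hbox{StrongEquiv}(A,\lone)$, and therefore $\hbox{Prob}(\hbox{StrongEquiv}(A,\lone))\rightarrow1$; as $\d$ was arbitrary, $\rfl$ bounds a region of strong equivalence.

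I expect the only genuine obstacle to be the $\e$-bookkeeping near the transition curve rather than any deep estimate. One must confirm that $1-\mathcal{L}(\d,2\r)$ stays strictly positive throughout the region where $\rfl>0$ — otherwise $\mu^{FL}$ is not even defined — and that the strict inequality $\mu^{FL}(\d,\r)<1$ survives inflation by $\e_1$. The first follows because $\mu^{FL}(\d,\rfl)=1$ is finite, forcing $\mathcal{L}(\d,2\rfl)<1$ and hence $\mathcal{L}(\d,2\r)<1$ for $\r<\rfl$ by monotonicity; the second is precisely the continuity step above. These are the checks that prevent the conclusion from being vacuous at the boundary, and they are where I would concentrate the careful writing.
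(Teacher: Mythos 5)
Your proposal is correct and takes essentially the same route as the paper: both glue the deterministic Theorem \ref{thm:LF1} to the probabilistic RIP bounds of Theorem \ref{thm:LUbounds} applied at sparsity $2k$ (valid since $\r<1/2$ gives $2\r<1$), using that $\mu^{FL}$ is increasing in the RIP constants so that upper bounds on $L(2k,n,N)$ and $U(2k,n,N)$ transfer to an upper bound on $\mu^{FL}(k,n,N)$. The only difference is $\e$-bookkeeping: where you absorb the additive inflation $\e_1$ by continuity together with strict monotonicity of $\r\mapsto\mu^{FL}(\d,\r)$, the paper's Lemma \ref{lem:mu_epsilon2} absorbs it into a multiplicative shift, bounding $\mu^{FL}(k,n,N)$ by $\mu^{FL}(\d,(1+\e)\r)$ and then relating the level curve $\r_{\e}(\d)$ to $\rfl$ via $(1+\e)\r_{\e}(\d)=\rfl$ and $(1-\e)<(1+\e)^{-1}$.
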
 

Theorem \ref{thm:LF1_asymptotic} follows from Theorem \ref{thm:LF1}
and the validity of the probabilistic bounds on the RIP constants, Theorem \ref{thm:LUbounds}.
In particular, Lemma \ref{lem:mu_epsilon2} bounds $\mu^{FL}(k,n,N)$ in terms of 
the asymptotic RIP bounds $\mathcal{L}(\d,2\r)$ and $\mathcal{U}(\d,2\r)$, by the quantity
$\mu^{FL}(\d,(1+\e)\r)$ defined in \eqref{eq:mu_epsilon2}.  
If $\r_{\e}(\d)$ is the solution to
$\mu^{FL}(\d,(1+\e)\r)=1$, then for  $\r<\r_{\e}(\d)$ we achieve the desired bound, $\mu^{FL}(k,n,N)<1$,
to ensure $\sl1$.  The statement of Theorem
\ref{thm:LF1_asymptotic} follows from relating
$\r_{\e}(\d)$ to $\rfl$, the solution to $\mu^{FL}(\d,\r)=1$.

\begin{lemma}\label{lem:mu_epsilon2}
Fix $\epsilon>0$.  Consider the proportional-growth asymptotic with
parameters $(\d,\r)\in(0,1)\times(0,1/2)$.  Draw the corresponding $n\times
N$ matrices $A$ from the Gaussian ensemble.  Then
\begin{equation}\label{eq:mu_epsilon2}
Prob\left( \mu^{FL}(k,n,N)<\mu^{FL}(\d,(1+\e)\r)  \right)\rightarrow 1
\end{equation}
exponentially in $n$.
\end{lemma}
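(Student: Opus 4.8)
The plan is to exploit that $\mu^{FL}(k,n,N)$ is a monotone function of the two asymmetric RIP constants it involves and then feed in the high-probability bounds of Theorem~\ref{thm:LUbounds}. Writing $q:=\frac{1+U(2k,n,N)}{1-L(2k,n,N)}$, so that $\mu^{FL}(k,n,N)=\frac{1+\sqrt2}{4}(q-1)$, I first note that $q$ is strictly increasing in $U(2k,n,N)$ through its numerator and strictly increasing in $L(2k,n,N)$ through its denominator $1-L(2k,n,N)$, which decreases as $L$ grows toward $1$. Consequently any simultaneous upper bounds on $U(2k,n,N)$ and $L(2k,n,N)$ transfer directly to an upper bound on $\mu^{FL}(k,n,N)$.

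Next I would apply Theorem~\ref{thm:LUbounds} at the sparsity level $2k$. Since $\kn\to\r$ forces $2k/n\to2\r$ in the proportional-growth asymptotic, Theorem~\ref{thm:LUbounds} (equivalently Propositions~\ref{prop:L} and~\ref{prop:U2}) gives, for any slack $\eta>0$,
\[
\hbox{Prob}\left(L(2k,n,N)<\mathcal{L}(\d,2\r)+\eta\right)\to1,\qquad
\hbox{Prob}\left(U(2k,n,N)<\mathcal{U}(\d,2\r)+\eta\right)\to1,
\]
each exponentially in $n$. Intersecting these two events by a union bound, which costs only a constant factor and preserves the exponential rate, and invoking the monotonicity of the first paragraph, yields with probability tending to one exponentially in $n$,
\[
\mu^{FL}(k,n,N)<\frac{1+\sqrt2}{4}\left(\frac{1+\mathcal{U}(\d,2\r)+\eta}{1-\mathcal{L}(\d,2\r)-\eta}-1\right).
\]

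Finally I would trade the additive slack $\eta$ for the multiplicative inflation $\r\mapsto(1+\e)\r$ appearing in the statement. Set $F(\nu):=\frac{1+\mathcal{U}(\d,2\nu)}{1-\mathcal{L}(\d,2\nu)}$, so that $\mu^{FL}(\d,\nu)=\frac{1+\sqrt2}{4}(F(\nu)-1)$. Because $\mathcal{L}(\d,\cdot)$ is strictly increasing in its second argument, as noted after Proposition~\ref{prop:L}, the denominator $1-\mathcal{L}(\d,2\nu)$ is strictly decreasing, while the numerator $1+\mathcal{U}(\d,2\nu)$ is nondecreasing since $\mathcal{U}(\d,\nu)=\min_{t\in[\nu,1]}\lmax(\d,t)-1$ is nondecreasing in $\nu$; hence $F$ is strictly increasing and $F((1+\e)\r)>F(\r)$. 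The right-hand side of the previous display is continuous in $\eta$ and reduces to $\frac{1+\sqrt2}{4}(F(\r)-1)$ at $\eta=0$, so a sufficiently small $\eta$ fits inside the strictly positive gap $F((1+\e)\r)-F(\r)$; choosing such an $\eta$ gives $\mu^{FL}(k,n,N)<\frac{1+\sqrt2}{4}(F((1+\e)\r)-1)=\mu^{FL}(\d,(1+\e)\r)$ with probability converging to one exponentially in $n$, as claimed.

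The step I expect to be most delicate is this last one: the \emph{strict} gap $F((1+\e)\r)>F(\r)$ rests essentially on the strict monotonicity of $\mathcal{L}(\d,\cdot)$, since $\mathcal{U}(\d,\cdot)$ is only known to be nondecreasing and may be locally flat. One must also confirm that the inflated argument remains admissible, that is $2(1+\e)\r\le1$, which is guaranteed for the relevant small $\e$ by the restriction $\r\in(0,1/2)$.
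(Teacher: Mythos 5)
Your proof is correct and follows essentially the same route as the paper's: both apply Theorem \ref{thm:LUbounds} at sparsity level $2k$ to obtain additive-slack bounds on $L(2k,n,N)$ and $U(2k,n,N)$, transfer these to $\mu^{FL}(k,n,N)$ by monotonicity, and then absorb the slack into the inflation $\r\mapsto(1+\e)\r$ using the strict monotonicity of $\mathcal{L}(\d,\cdot)$ (with $\mathcal{U}(\d,\cdot)$ only nondecreasing) together with the admissibility constraint $2(1+\e)\r<1$. The only difference is cosmetic: you select the slack $\eta$ by a soft continuity argument, whereas the paper fixes it explicitly as $c\e$ with $c$ proportional to $\left.\partial\mathcal{L}(\d,z)/\partial z\right|_{z=2(1+\e)\r}$, and then extends to all admissible $\e$ via monotonicity in $\e$.
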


\begin{proof}
Theorem \ref{thm:LUbounds} and the form of $\mu^{FL}(\d,\r)$ imply a
similar bound to the above with a modified dependence on $\e$.
For any $c\e>0$, with $n/N\rightarrow\d\in (0,1)$ and
$k/n\rightarrow\r\in (0,1/2]$, the probability, on the draw of $A$ from the Gaussian 
ensemble, that 
\begin{equation}\label{eq:mu_epsilon_tmp}
\mu^{FL}(k,n,N)<\frac{1+\sqrt{2}}{4}\left(\frac{1+\mathcal{U}(\d,2\r)+c\e}{1-\mathcal{L}(\d,2\r)-c\e} -1 \right).
\end{equation}
is satisfied converges to one exponentially with $n$.
Since $\mathcal{U}(\d,\r)$ is non-decreasing in $\r$ and $\mathcal{L}(\d,\r)$ is strictly
increasing in $\r$ for any $\d$ and $\r\in(0,1)$, it follows that the 
right-hand side
of \eqref{eq:mu_epsilon_tmp} can be bounded by the right-hand side of 
\eqref{eq:mu_epsilon2} for any fixed $\e$ satisfying
$0<\e<\frac{1}{2\r}-1$, by setting
\[
c:=\frac{\r}{2}\left.\frac{\partial \mathcal{L}(\d,z)}{\partial z}\right|_{z=2(1+\e)\r}>0.
\]
(The upper bound on $\e$ is imposed so that the 
second argument of $\mathcal{U}(\d,\cdot)$ and $\mathcal{L}(\d,\cdot)$, $2(1+\e)\r$, is in the admissible 
range of $(0,1)$.)
That the bound \eqref{eq:mu_epsilon2} is satisfied for all $\e>0$
sufficiently small, and that the right hand side of
\eqref{eq:mu_epsilon2} is strictly increasing in $\e$ establishes that 
\eqref{eq:mu_epsilon2} is in fact satisfied probability on the draw of $A$ 
that converges to one exponentially in $n$ for {\em any} $\e\in\left(0,\frac{1}{2\r}-1
\right)$.  
\end{proof}

\begin{proof}[Theorem \ref{thm:LF1_asymptotic}]
Let $\r_{\e}(\d)$ be the solution of
$\mu^{FL}(\d,(1+\e)\r)=1$. Then, for any $\r<\r_{\e}(\d)$, Lemma
\ref{lem:mu_epsilon2} implies that $\mu^{FL}(k,n,N)<1$, which by Theorem 
\ref{thm:LF1}, ensures $\sl1$.  To remove the dependence
on the level curve $\r_{\e}(\d)$, note that
$\r_{\e}(\d)$ is related to $\rfl$, the solution
of $\mu^{FL}(\d,\r)=1$, by $(1+\e)\r_{\e}(\d)\equiv\rfl$.
Since $(1-\e)< (1+\e)^{-1}$ for all $\e>0$, we have
$(1-\e)\rfl<\r_{\e}(\d)$. Thus, provided $\r<(1-\e)\rfl$, the statement of
Theorem \ref{thm:LF1_asymptotic} is satisfied.
\end{proof}

\subsection{Comparison of bounds on $\sl1$}\label{subsec:comparison}

\begin{figure}[t]
 \begin{center}
 \includegraphics[bb= 70 215 546 589, width=2.70 in,height=2.00 in]{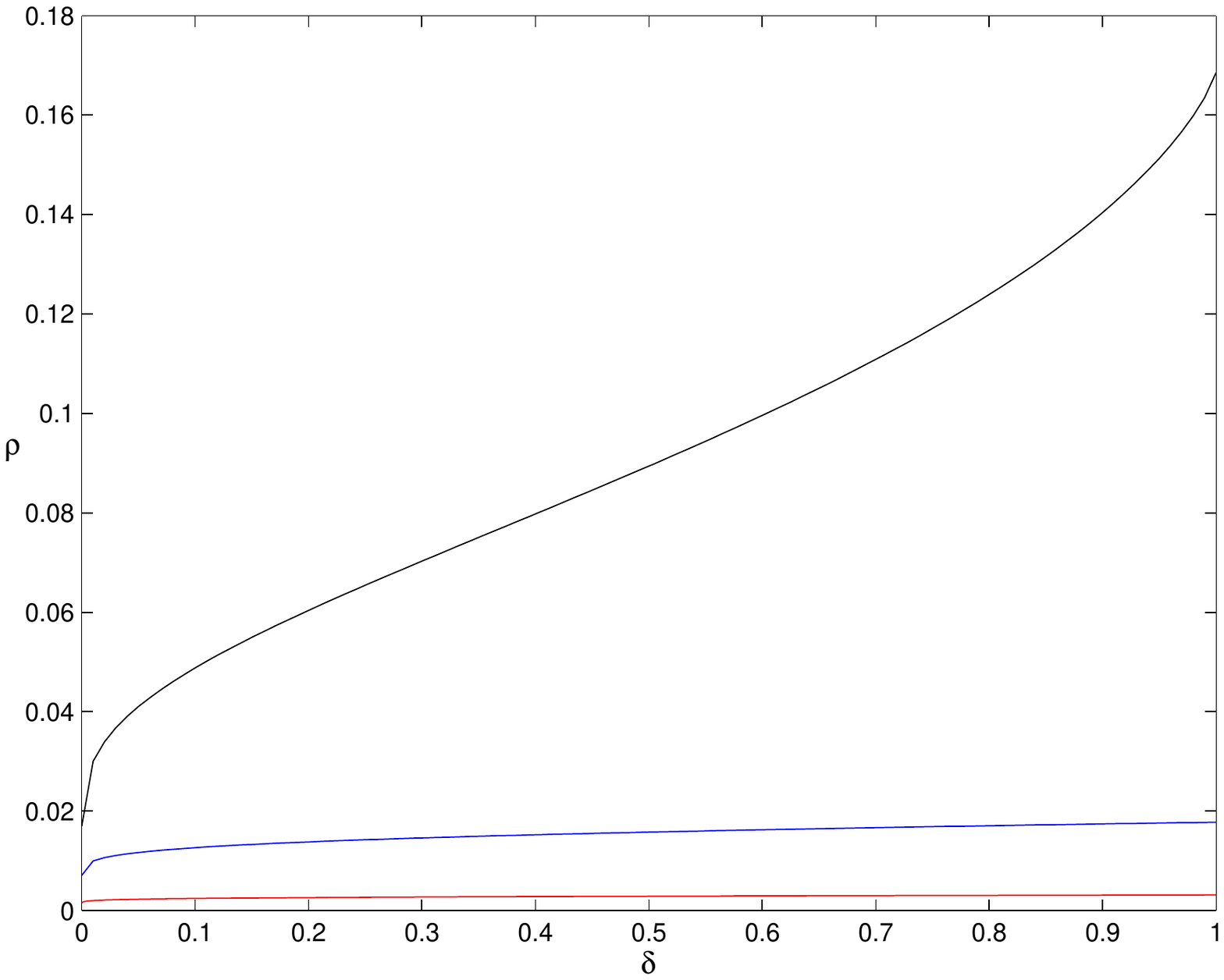}
\hspace*{0.5cm}
\includegraphics[bb= 70 215 546 589, width=2.70 in,height=2.00 in]{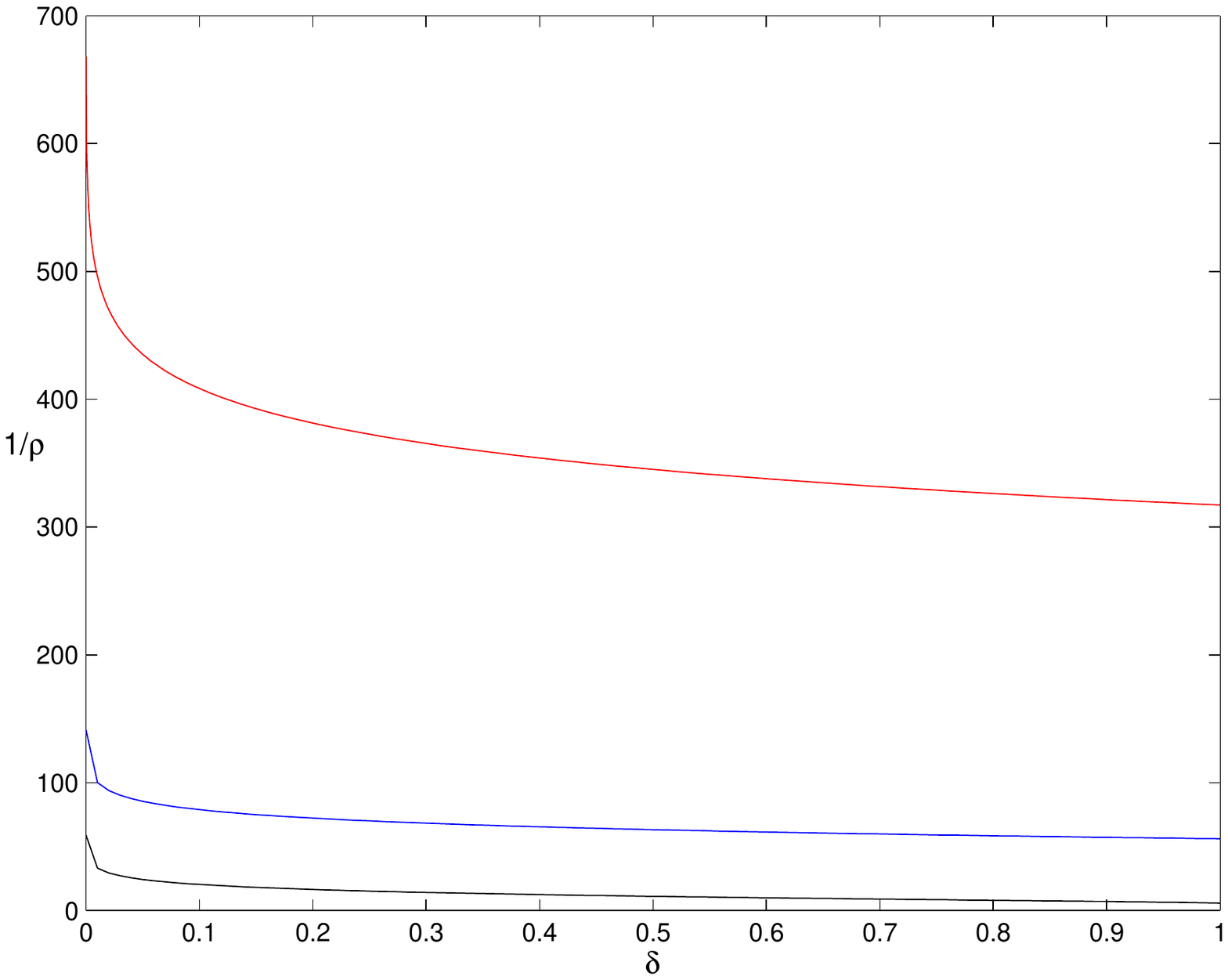}
 \caption{Left panel: Three lower bounds on the $\sl1$ phase
   transition, $\r_S(\d)$, for Gaussian random matrices from Theorem
   \ref{thm:D} ($\rd$, black), Theorem \ref{thm:RV} ($\rrv$, blue),
   and Theorem \ref{thm:LF1_asymptotic} ($\rfl$, red).  Right
   panel: The inverse of the $\sl1$ phase transition lower bounds in
 the left panel.\label{fig:l1}}
 \end{center}
 \end{figure}
 
In this section we use the phase transition framework to readily compare bounds on the region of $\sl1$ obtained from vastly different methods of analysis.  In Section \ref{subsec:l1l0}, we have already determined the region of strong equivalence for $\lone$-minimization obtained by using the RIP.  Here we look at two other examples, namely Donoho's polytope results \cite{Do05_signal, Do05_polytope} and the sufficient condition of Rudelson and Vershynin \cite{RV07_gaussian} obtained from geometric functional analysis.  We do not go into great details about how the results were obtained, but simply point out that the methods of analysis are rather different.  As a result, the original statements of the theorems take drastically different forms and are therefore  difficult to compare even qualitatively.  Translating the results into the phase transition framework,
however, offers a direct, quantitative, and simple method of comparison. 

Using polytope theory and the notion of central-neighborliness,  Donoho \cite{Do05_signal} defined a function $\rd$ which defines a region of the $(\d,\r)$ phase space ensuring StrongEquiv($A, \lone$) with probability on the draw of $A$ converging to one exponentially in $n$. The phase transition function $\rd$ is displayed as the black curve in Figure~\ref{fig:l1}.

\begin{theorem}[Donoho \cite{Do05_signal}]\label{thm:D}  
Fix $\epsilon>0$.   Consider the proportional-growth asymptotic, Definition \ref{def:pga}, with parameters $(\d,\r)\in(0,1)\times(0,1/2)$.  Sample each $n\times N$ matrix $A$ from the Gaussian ensemble.  Suppose $\r<(1-\e)\rd$.  Then $\hbox{Prob}(\hbox{StrongEquiv}(A,\lone))\rightarrow1$ as $n\rightarrow\infty$.
  
Therefore $\rd$ bounds a region of strong equivalence for $\lone$-minimization.
\end{theorem}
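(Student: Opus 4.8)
The plan is to obtain Theorem~\ref{thm:D} from Donoho's neighborly-polytope analysis of $\lone$-minimization in \cite{Do05_signal}, since the statement is essentially a translation of that work into the present phase-transition language; the task is therefore to recall the geometric mechanism and to verify that the threshold curve $\rd$ produced there bounds a region of strong equivalence in the sense of our definition. First I would invoke the geometric equivalence at the heart of the polytope approach: for a fixed $n\times N$ matrix $A$, the event $\sl1$ holds if and only if the image $AC^N$ of the cross-polytope $C^N:=\{x\in\RR^N:\|x\|_1\le 1\}$ is centrally $k$-neighborly, meaning that every collection of $k$ vertices of $C^N$ with no two antipodal maps to the vertices of a genuine $(k-1)$-face of $AC^N$. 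This reduces exact recovery of \emph{all} $k$-sparse vectors to a purely combinatorial-geometric property of a projected polytope.

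Next, because $A$ is drawn from the Gaussian ensemble, the rotational invariance of the ensemble makes the face lattice of $AC^N$ distributed as that of a uniformly random $n$-dimensional projection of $C^N$. I would then count faces in expectation: using the Affentranger--Schneider / Vershik--Sporyshev identities, the expected number of $(k-1)$-subsets of vertices that \emph{fail} to survive as faces of $AC^N$ can be written as a sum over faces of $C^N$ of a product of an internal angle and an external angle. The first-moment quantity to control is this expected number of missing faces, call it $E_{n,k,N}$; if $E_{n,k,N}\to 0$ then with high probability no $(k-1)$-face is lost and $AC^N$ is $k$-neighborly.

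The decisive step is an asymptotic, large-deviation analysis of these internal and external angle integrals in the proportional-growth limit $\nN\to\d$, $\kn\to\r$. Expressing each angle through its Gaussian/Mills-ratio integral representation and applying Laplace's method, one extracts an exponential rate $\Psi^D(\d,\r)$ with $n^{-1}\log E_{n,k,N}\to\Psi^D(\d,\r)$; the curve $\rd$ is then precisely the locus $\Psi^D(\d,\r)=0$, with $\Psi^D(\d,\r)<0$ for $\r<\rd$. This angle asymptotics is the main obstacle: the internal angle of a face of $C^N$ involves the tail of a sum of half-normal variables while the external angle is a Mills-ratio integral, and obtaining sharp exponents rather than crude bounds demands delicate saddle-point estimates and a careful identification of the dominant face type.

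Finally I would close by a first-moment/Markov step: when $\r<(1-\e)\rd$ the rate $\Psi^D(\d,\r)$ is strictly negative, so $E_{n,k,N}\le e^{-n\beta}$ for some fixed $\beta>0$ and all large $n$. Since $E_{n,k,N}$ bounds the probability that some $(k-1)$-face is lost, we obtain $\hbox{Prob}(\hbox{StrongEquiv}(A,\lone))\ge 1-e^{-n\beta}\to 1$, that is, convergence to one exponentially in $n$. Matching this conclusion against the definition of a region of strong equivalence then yields that $\rd$ bounds such a region, as claimed.
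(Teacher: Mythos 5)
The paper gives no proof of Theorem~\ref{thm:D}: it is imported verbatim from Donoho \cite{Do05_signal} and merely restated in the proportional-growth/phase-transition framework, so there is no internal argument to compare against. Your outline correctly reconstructs the proof mechanism of that cited source --- the equivalence of $\sl1$ with central $k$-neighborliness of $AC^N$, the reduction by rotational invariance to a uniformly random projection, the Affentranger--Schneider/Vershik--Sporyshev expected-face-count identity in terms of internal and external angles, the large-deviation/Laplace analysis giving the exponent whose zero locus defines $\rd$, and the closing first-moment/Markov step yielding exponential convergence --- and is therefore consistent with the proof the paper relies on.
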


Rudelson and Vershynin \cite{RV07_gaussian} used an alternative
geometric approach from geometric functional analysis (GFA) to determine
regions of $\sl1$ for Gaussian and random partial Fourier matrices.
For Gaussian matrices their elegantly simple proof involves
employing Gordon's ``escape through the mesh theorem'' on the nullspace of
$A$.  Their lower bound on the region of $\sl1$ is larger for the
Gaussian ensemble than for the Fourier ensemble.  We restate their condition for the Gaussian ensemble in the proportional growth asymptotic.

\begin{definition}[GFA Region of $\sl1$]
Define
\[\gamma(\r\d):=\exp\left(\frac{\log(1+2log(e/\r\d))}{4\log(e/\r\d)}\right),\]
\begin{equation}\label{mu_rv}
\mu^{RV}(\d,\r):=\r\left(12+8\log(1/\r\d)\cdot\gamma^2(\r\d)\right),
\end{equation}
and $\rrv$ as the solution to $\mu^{RV}(\d,\r)=1$.
\end{definition}

The function $\rrv$ is displayed as the blue curve in Figure \ref{fig:l1}.  

\begin{theorem}[Rudelson and Vershynin \cite{RV07_gaussian}]\label{thm:RV}  
Fix $\epsilon>0$.   Consider the proportional-growth asymptotic, Definition \ref{def:pga}, with parameters $(\d,\r)\in(0,1)\times(0,1/2)$.  Sample each $n\times N$ matrix $A$ from the Gaussian ensemble.  Suppose $\r<(1-\e)\rrv$.  Then $\hbox{Prob}(\hbox{StrongEquiv}(A,\lone))\rightarrow1$ as $n\rightarrow\infty$.
  
Therefore $\rrv$ bounds a region of strong equivalence for $\lone$-minimization.\end{theorem}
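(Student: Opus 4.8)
The plan is to follow the same template as the proof of Theorem~\ref{thm:LF1_asymptotic}: start from the finite-dimensional sufficient condition of Rudelson and Vershynin, verify that its controlling quantity converges in the proportional-growth asymptotic to the expression inside $\mu^{RV}(\d,\r)$, and then use the monotonicity of $\mu^{RV}$ in $\r$ to absorb the finite-$n$ corrections into the factor $(1-\e)$.

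First I would recall the mechanism behind \cite{RV07_gaussian}. Since $A$ is Gaussian, its kernel $\ker A$ is a uniformly distributed subspace of $\RR^N$ of codimension $n$, and the event $\sl1$ is equivalent to the null space property: $\ker A$ must avoid the set $T_k\subset S^{N-1}$ of unit vectors that violate it, namely those for which some $k$ coordinates carry at least half of the $\ell^1$ mass (so that $\|v\|_1\lesssim 2\sqrt{k}$). Gordon's escape through the mesh theorem bounds the probability that a random codimension-$n$ subspace meets $T_k$ in terms of the Gaussian width $w(T_k)$: the subspace avoids $T_k$, and hence $\sl1$ holds, with probability at least $1-e^{-cn}$ provided $n$ exceeds $w(T_k)^2$ by a fixed margin. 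Rudelson and Vershynin supply an explicit estimate of $w(T_k)$, and it is exactly this estimate, rewritten with $k=\r n$ and $N=n/\d$ (so that $k/N=\r\d$), that produces the function $\mu^{RV}(\d,\r)$: the factor $\gamma(\r\d)$ and the constants $12$ and $8$ record the explicit bound on $w(T_k)^2/n$, and the width condition $w(T_k)^2<n$ becomes $\mu^{RV}(\d,\r)<1$.

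Next I would record the elementary analytic facts about $\mu^{RV}(\d,\r)$ needed to define $\rrv$ and to run the $\e$-argument. For fixed $\d$ the map $\r\mapsto\mu^{RV}(\d,\r)$ is continuous on $(0,1/2)$ and tends to $0$ as $\r\to 0$, so the equation $\mu^{RV}(\d,\r)=1$ has a root $\rrv$; one checks it is increasing through this root, so that $\mu^{RV}(\d,\r)<1$ precisely for $\r<\rrv$. Given $\r<(1-\e)\rrv$, this monotonicity yields a fixed margin $\eta>0$ with $\mu^{RV}(\d,\r)<1-\eta$. For $n$ large enough the finite-$n$ width estimate of \cite{RV07_gaussian} differs from its proportional-growth limit by less than $\eta/2$, so the sufficient condition $w(T_k)^2<n$ is met, and Gordon's bound gives $\hbox{Prob}(\sl1)\ge 1-e^{-cn}\to 1$ as $n\rightarrow\infty$, which is the assertion.

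The main obstacle is the second step: confirming that the explicit finite-$n$ width bound of \cite{RV07_gaussian} converges, in the proportional-growth limit, to exactly the expression inside $\mu^{RV}(\d,\r)$---that the lower-order terms in both Gordon's inequality and the Gaussian-width estimate are $o(n)$ and that the numerical constants and the precise form of $\gamma(\r\d)$ reproduce the stated definition. Once this accounting is verified, the monotonicity argument together with Gordon's exponential concentration makes the remainder routine.
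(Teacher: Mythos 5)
Your reconstruction is sound, but note that the paper itself does not prove Theorem \ref{thm:RV}: the result is imported wholesale from \cite{RV07_gaussian}, the paper's only contribution being to restate the Rudelson--Vershynin condition as $\mu^{RV}(\d,\r)<1$ in the proportional-growth asymptotic. What you have written is therefore a reconstruction of the cited proof rather than a parallel to anything carried out in the text, and it matches the mechanism the paper attributes to its source (``employing Gordon's escape through the mesh theorem on the nullspace of $A$''): the equivalence of $\sl1$ with the null space property, the containment of the violating set in $\{v\in S^{N-1}:\|v\|_1\le 2\sqrt{k}\}$, the Gaussian-width estimate whose constants produce the factors $12$, $8$ and $\gamma^2(\r\d)$, and Gordon's exponential bound once $n$ exceeds the squared width by a fixed margin. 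Your handling of the asymptotic translation --- extracting a margin $\eta>0$ from $\r<(1-\e)\rrv$ via monotonicity of $\mu^{RV}$ in $\r$ and absorbing the finite-$n$ corrections --- mirrors the paper's own treatment of the RIP-based Theorem \ref{thm:LF1_asymptotic} via Lemma \ref{lem:mu_epsilon2}, so the style is consistent with the paper even though the paper never performs this step for the present theorem. The one item you flag as the ``main obstacle'' (checking that the finite-$n$ width bound of \cite{RV07_gaussian} reproduces exactly the constants and the form of $\gamma$ in $\mu^{RV}$) is precisely the content of the cited result rather than a gap in your logic; and your monotonicity claim does hold where it is needed, since any root of $\mu^{RV}(\d,\cdot)=1$ forces $\r\le 1/12$, hence $\r\d<1/e$, a region in which both $\r\mapsto\r\log(1/\r\d)$ and $\r\mapsto\gamma^2(\r\d)$ are increasing.
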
   

Versions of Theorems \ref{thm:D} and \ref{thm:RV} exist for finite values of $(k,n,N)$, 
\cite{DoTa08_finite,RV07_gaussian}, but in each case the recoverability conditions rapidly approach the stated asymptotic limiting functions $\rho_S(\d)$ as $(k,n,N)$ grow;
we do not further complicate the discussion with their rates of convergence.

Since Theorems \ref{thm:LF1_asymptotic}, \ref{thm:D}, and \ref{thm:RV} provide a region of $\sl1$, we now have three subsets of the exact region of $\sl1$.  Although Theorems \ref{thm:LF1_asymptotic}, \ref{thm:D}, and \ref{thm:RV} 
each have the same goal of quantifying the exact boundary of $\sl1$ for Gaussian random matrices, they are arrived at using substantially
different methods of analysis.  The efficacy of the bounds from the largest region of $\sl1$ to the smallest region are $\rd$ of Donoho, $\rrv$ of Rudelson and Vershynin, and $\rfl$ of Foucart and Lai, see the left panel of Figure \ref{fig:l1}.  From the inverse of $\rho_S(\d)$, see the right panel of Figure \ref{fig:l1}, we can read the constant of proportionality where the associated method of analysis guarantees $\sl1$: from Theorems \ref{thm:D}, \ref{thm:RV}, and
\ref{thm:LF1} they are bounded below by: $n\ge 5.9 k$, $n\ge 56 k$,
and $n\ge 317 k$ respectively.

\subsection{Further Considerations}

The phase transition framework can also be used to quantify what has been proven about an encoder/decoder pair's speed of convergence, its degree of robustness to noise, and to make comparisons of these properties between different algorithms.  A general framework for expressing the results of RIP based analyses as statements in the phase transition framework is presented in \cite{BCTT09}, where it is also applied to three exemplar greedy algorithms CoSaMP \cite{NeTr09_cosamp}, Subspace Pursuit \cite{SubspacePursuit}, and Iterated Hard Thresholding \cite{BlDa08_iht}.  Bounds on regions of $StrongEquiv(A,\ell^q)$ for $\ell^q$-minimization for $q\in(0,1]$ implied by the RIP are available in Section~\ref{subsec:lq}, where the effects of noise are also considered.  Through these ``objective'' measures of comparison we hope to make clear the proven efficacy of sparse approximation algorithms and allow for their transparent comparison.

In this article, we have considered only the case of noiseless measurements, Regions of Strong Equivalence, and a particular result obtained via an eigenvalue analysis and the RIP.  We briefly discuss some additional  considerations for the phase transition framework.

\subsubsection{Phase Transitions with Noisy Measurements}
In a practical setting, it is more reasonable to assume that the measurements are corrupted by noise, $y=Ax+e$ for some noise vector $e$.  The RIP has played a vital role in establishing stable signal recovery in the presence of noise for many decoders.  When noise is present, the curves $\rho_S(\delta)$ bounding regions of strong equivalence serve as an upper bound to the curves depicting the regions of the phase plane which guarantee stable recovery.  The RIP constants also describe how significantly the noise will be amplified by the encoder/decoder pairing, details are available for the Gaussian encoder and $\ell^q$-minimization decoder \cite{BCT_RIP_arxiv} and greedy decoders \cite{BCTT09}  CoSaMP, Subspace Pursuit, and Iterated Hard Thresholding. 
Hassibi and Xu have developed a stability analysis of  $\ell^1$-minimization
from the analysis of convex polytopes \cite{HassibiXu08}, establishing substantially
larger stability regions than the regions implied by the RIP.

\subsubsection{Regions of Weak Equivalence and Average Case Performance}
In many applications, it may not be imperative that the decoder is able to reconstruct every $k$-sparse vector.  Instead, one may be willing to lose a small fraction of all possible $k$-sparse signals.  This is the behavior observed when a decoder is tested on $k$-sparse vectors whose support sets are drawn uniformly at random.  Large scale empirical testing of CoSaMP, Subspace Pursuit and Iterated Hard Thresholding were compiled by Donoho and Maleki \cite{DoMa09}.  Most sparse approximation algorithms do not have a theoretical average case analysis.  The polytope analysis of Donoho and Tanner allows for analytical arguments providing a \emph{Region of Weak Equivalence} where recovery is guaranteed for all but a but a small fraction of $k$-sparse signals.  An average case variant of the RIP is being developed, see 
\cite{Tropp_sub}.

\subsubsection{Improving the RIP Phase Transition}
It is possible that Thm.~\ref{thm:LF1} could be improved with alternative methods of analysis.  For example, Thm.~\ref{thm:LF1} built off the work of Cand{\`e}s, Romberg, and Tao \cite{CandesRoot2,CRTStable,CTDecoding}.  In \cite{CandesRoot2}, Cand{\`e}s proved that if $R(2k,n,N)<\sqrt{2}-1$, then $\ell^1$-minimization will successfully recover every $k$-sparse vector.  An asymmetric analysis and translation into the Strong Equivalence terminology of Sec.~\ref{subsec:l1l0} produces a function $\rho_S^C(\delta)$ which bounds a region of strong equivalence.  The alternative methods of Foucart and Lai leading to Thm.~\ref{thm:LF1} provided a larger region of strong equivalence.  See Fig.~\ref{fig:improvements}.

\begin{figure}[t]
 \begin{center}
 \includegraphics[bb= 70 215 546 589, width=2.70 in,height=2.00 in]{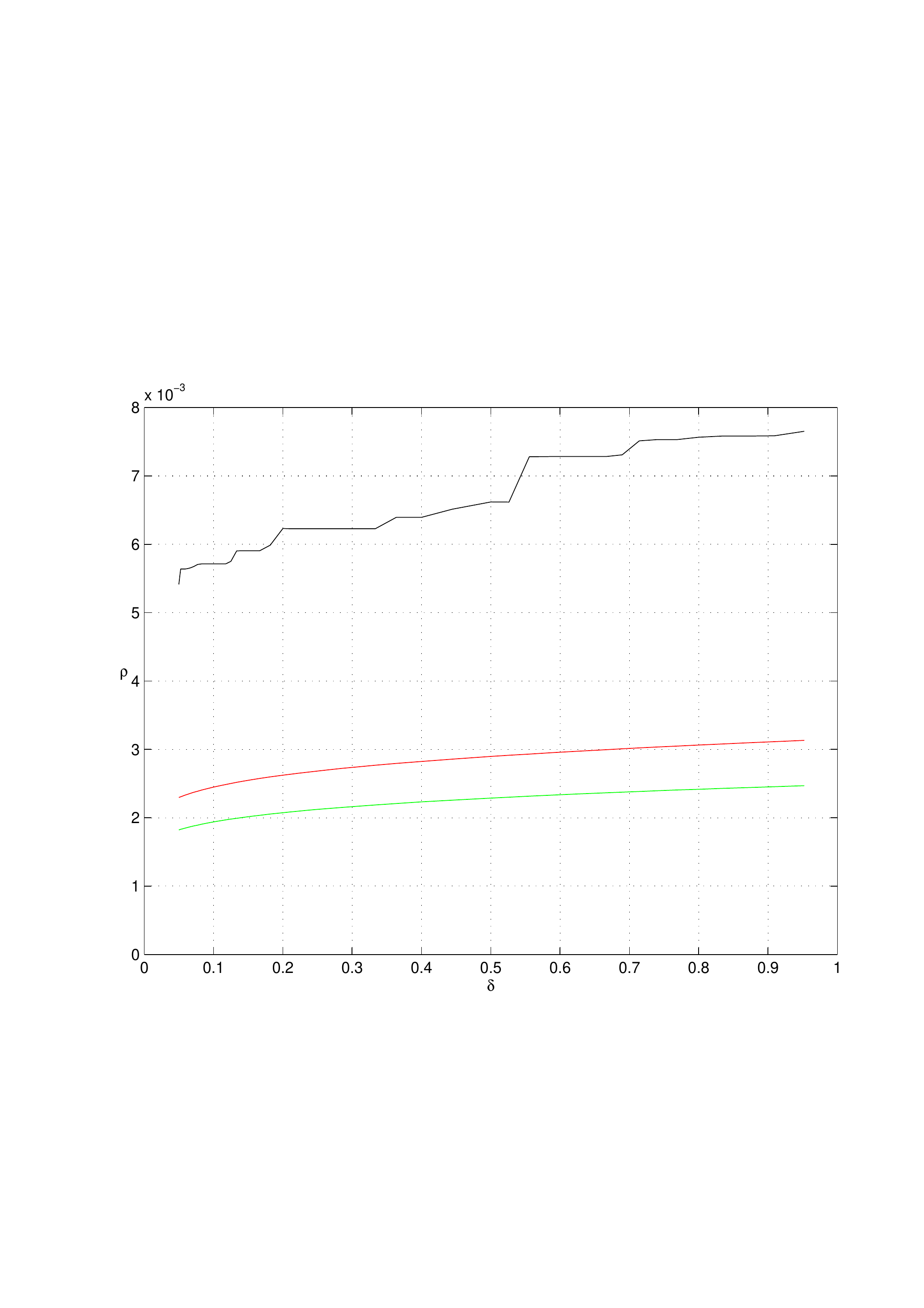}
 \caption{Example Improvements on bounds on the $\sl1$ phase
   transition, $\r_S(\d)$, for Gaussian random matrices: ($\rho_S^C(\delta)$, green), Theorem \ref{thm:LF1_asymptotic} ($\rfl$, red), and ($\rho_S^{emp}(\delta)$, black).  \label{fig:improvements}}
 \end{center}
 \end{figure}

Alternatively, the region of strong equivalence might be increased by improving the bounds on the RIP constants, $\mathcal{L}(\delta,\rho),\mathcal{U}(\delta,\rho)$.  If the method of analysis remained the same, we can explore the effects of improved bounds by examining the statements with empirically observed {\em lower bounds} on RIP constants for Gaussian matrices.  As detailed in Table 
\ref{tab:emp}, the current bounds from Thm.~\ref{thm:LUbounds} are no more than twice the empirical RIP constants.  Replacing the RIP constants with empirically observed lower bounds of the RIP constants (for $n=800$) in $\mu^{FL}(k,n,N)$ gives us a function $\rho_S^{emp}(\delta)$, see Figure \ref{fig:improvements}, which is an upper bound on the region of strong equivalence implied by Thm.~\ref{thm:LF1}; this improvement is no more than 2.5 times $\rho_S^{FL}(\d)$ for $\d\in[1/20,20/21]$.


\section{$\ell^q$-regularization Phase Transitions for $q\in (0,1]$ Implied by RIP Constants}\label{subsec:lq}

Foucart and Lai improved on the previously best known RIP bounds of
Cand\`es ($q=1$) and Chartrand ($q\in (0,1)$) \cite{CandesRoot2,ChartrandExact} 
for $\ell^q$-regularization.  Theorem \ref{thm:LF1} is the simplest case
of Foucart and Lai's results, for $\ell^1$-regularization and $x$ exactly $k$
sparse.  More generally, they considered the family of $x_{\theta}$
which satisfy a scaled approximate fit to $b$,
\begin{equation}\label{eq:misfit}
\|Ax_{\theta}-b\|_2\le (1+U(2k,n,N))\cdot \theta
\quad\mbox{ for }\quad \theta\ge 0.
\end{equation}
Letting $\xs$ be the argmin for the
$\ell^q$-regularized constrained problem
\begin{equation}\label{eq:Pq}
\min_z\|z\|_q\quad\mbox{ subject to }\quad 
\|Az-b\|_2\le (1+U(2k,n,N))\cdot \theta,
\end{equation}
Foucart and Lai bounded the discrepancy between $\xs$ and any
$x_{\theta}$ satisfying (\ref{eq:misfit}) in terms of the discrepancy
between $x_{\theta}$ and its best $k$ sparse approximation, 
\begin{equation}\label{eq:sigma}
\sigma_k(\xt)_q:=\inf_{\|z\|_{\ell^0}\le k} \|\xt-z\|_q.
\end{equation}

\begin{theorem}[Foucart and Lai \cite{FoucartLai08}]  Given $q\in(0,1]$, for
  any matrix $A$ of size $n\times N$ with $n<N$ and with 
RIP constants $L(2k,n,N)$
  and $U(2k,n,N)$ and $\mu(2k,n,N)$ defined as \eqref{eq:mu_knN}, if
\begin{equation}\label{eq:alphamu}
\alpha^{1/2-1/q}\mu(2k\alpha,n,N)<1
\quad\mbox{ for any }\quad 1\le \alpha\le \frac{n}{2k},
\end{equation}
then a solution $\xs$ of (\ref{eq:Pq}) approximates any $x_{\theta}$ satisfying 
(\ref{eq:misfit}) within the bounds
\begin{eqnarray}
\|\xt-\xs\|_q & \le & C_1\cdot\sigma_k(\xt)_q\;+\; D_1\cdot
k^{1/q-1/2}\cdot\theta \\
\|\xt-\xs\|_2 & \le & C_2\cdot \sigma_k(\xt)_q\cdot
(\alpha k)^{1/2-1/q}\;+\; D_2\cdot\theta \label{eq:L2error}
\end{eqnarray}
with $C_1,C_2,D_1,$ and $D_2$ functions of $q$, $\alpha$, and 
$\frac{1+U(2\alpha k,n,N)}{1-L(2\alpha k,n,N)}$.   
\label{thm:LFq}
\end{theorem}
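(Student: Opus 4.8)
The plan is to run the now-standard cone-plus-tube argument for $\ell^q$-regularization, but carried out with the \emph{asymmetric} RIP constants $L$ and $U$ rather than a single symmetric $R$. Introduce the error vector $h:=\xs-\xt$. Two facts drive the proof. First, since $\xt$ is feasible for \eqref{eq:Pq} by \eqref{eq:misfit} and $\xs$ is the minimizer, minimality gives $\|\xs\|_q\le\|\xt\|_q$; writing $\xs=\xt+h$, splitting on the index set $T_0$ of the $k$ largest-magnitude entries of $\xt$, and using the $q$-triangle inequality $\|a+b\|_q^q\le\|a\|_q^q+\|b\|_q^q$ (valid for $q\in(0,1]$) yields the cone constraint $\|h_{T_0^c}\|_q^q\le\|h_{T_0}\|_q^q+2\,\sigma_k(\xt)_q^q$. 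Second, since both $\xs$ and $\xt$ satisfy the constraint $\|Az-b\|_2\le(1+U(2k,n,N))\theta$, the triangle inequality gives the tube bound $\|Ah\|_2\le 2(1+U(2k,n,N))\theta$.

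Next I would partition the complement $T_0^c$ into consecutive blocks $T_1,T_2,\dots$ each of size $\alpha k$, ordered by decreasing magnitude of the entries of $h$. The sorting supplies the decay estimate $\|h_{T_{j+1}}\|_2\le(\alpha k)^{1/2-1/q}\|h_{T_j}\|_q$ for $j\ge 1$, so that the tail $\sum_{j\ge 2}\|h_{T_j}\|_2$ is controlled by $(\alpha k)^{1/2-1/q}$ times the $\ell^q$ mass of $h$ off $T_0$; feeding in the cone constraint then bounds this tail by $\|h_{T_0}\|_2$ (hence $\|h_{T_0\cup T_1}\|_2$) plus a $\sigma_k(\xt)_q$ term. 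Here the exponents must be tracked so that the scalar factor $\alpha^{1/2-1/q}$ appears cleanly.

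The RIP enters through the central inequality $(1-L)\|h_{T_0\cup T_1}\|_2^2\le\|Ah_{T_0\cup T_1}\|_2^2=\langle Ah_{T_0\cup T_1},Ah\rangle-\sum_{j\ge 2}\langle Ah_{T_0\cup T_1},Ah_{T_j}\rangle$, where $h_{T_0\cup T_1}$ has support size $(1+\alpha)k\le 2\alpha k$, which is why the hypothesis evaluates $L,U$ at order $2\alpha k$ (using their monotonicity in the sparsity). The first inner product is handled by the tube bound, and each cross term by the polarization/restricted-orthogonality estimate that the asymmetric constants give, a factor proportional to $\tfrac12\bigl((1+U)-(1-L)\bigr)$; dividing through by $(1-L)$ turns this into $\tfrac{1+U}{1-L}-1$, exactly the quantity defining $\mu$ in \eqref{eq:mu_knN}. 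Collecting everything produces a self-referential inequality in $\|h_{T_0\cup T_1}\|_2$ whose leading coefficient is $\alpha^{1/2-1/q}\mu(2\alpha k,n,N)$; the hypothesis \eqref{eq:alphamu} that this is strictly below $1$ is precisely what lets me absorb the term onto the left and solve, after which $\|h\|_q$ and $\|h\|_2$ are recovered as bounded combinations of $\sigma_k(\xt)_q$ and $\theta$ with constants $C_1,C_2,D_1,D_2$ depending on $q$, $\alpha$, and $\tfrac{1+U}{1-L}$.

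The main obstacle is the bookkeeping forced by $q<1$: the quasi-norm violates the ordinary triangle inequality, so every concatenation must instead use the $q$-triangle inequality and the subadditivity of $t\mapsto t^q$, and the block decay must come from the sorted-magnitude comparison rather than from norm interpolation. The delicate point is aligning the exponents $\alpha^{1/2-1/q}$ and $(\alpha k)^{1/2-1/q}$ across the cone bound, the tail bound, and the RIP bound so that the single scalar condition \eqref{eq:alphamu} closes the argument uniformly over $1\le\alpha\le n/2k$; the entire benefit of the asymmetric RIP is that it shrinks the constant inside $\mu$ by treating $L$ and $U$ separately, enlarging the admissible region.
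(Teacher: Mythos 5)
The paper itself contains no proof of this statement: Theorem \ref{thm:LFq} is quoted as a known result of Foucart and Lai \cite{FoucartLai08}, so there is no internal argument to compare against, and your sketch must be judged against the original. On that score it is essentially a faithful reconstruction of the Foucart--Lai proof: the cone constraint from $\ell^q$-minimality via the $q$-triangle inequality, the tube bound $\|Ah\|_2\le 2(1+U(2k,n,N))\theta$ from feasibility of both $\xt$ and $\xs$, blocks of size $\alpha k$ sorted by magnitude with the decay estimate $\|h_{T_{j+1}}\|_2\le(\alpha k)^{1/2-1/q}\|h_{T_j}\|_q$, and an asymmetric-RIP polarization bound on cross terms yielding a self-referential inequality whose contraction coefficient is $\alpha^{1/2-1/q}\mu$; the exponent alignment $(\alpha k)^{1/2-1/q}k^{1/q-1/2}=\alpha^{1/2-1/q}$ that you emphasize is indeed the crux.

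One bookkeeping slip is worth fixing. As written, your cross terms $\langle Ah_{T_0\cup T_1},Ah_{T_j}\rangle$ involve two disjointly supported vectors whose combined support has size $(1+2\alpha)k$, which exceeds the sparsity level $2\alpha k$ at which the hypothesis controls $L$ and $U$, so the polarization estimate cannot be applied to them directly. The standard repair (and what Foucart and Lai actually do) is to split each cross term as $\langle Ah_{T_0},Ah_{T_j}\rangle+\langle Ah_{T_1},Ah_{T_j}\rangle$, each pair having disjoint supports of total size at most $(1+\alpha)k\le 2\alpha k$ and $2\alpha k$ respectively, and then recombine using $\|h_{T_0}\|_2+\|h_{T_1}\|_2\le\sqrt{2}\,\|h_{T_0\cup T_1}\|_2$. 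That $\sqrt{2}$ is precisely the origin of the constant $\frac{1+\sqrt{2}}{4}$ in the definition \eqref{eq:mu_knN} of $\mu^{FL}$, which your sketch otherwise leaves unaccounted for; without it your condition would carry the factor $\frac{1}{2}\bigl((1+U)-(1-L)\bigr)/(1-L)$ rather than the one actually stated in the theorem.
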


The parameter $\alpha$ in Theorem \ref{thm:LFq} is a free parameter 
from the method of proof, and should be selected so as to maximize 
the region where \eqref{eq:alphamu} and/or other conditions are satisfied.
For brevity we do not state the formulae for $C_1,C_2,D_1,$ and $D_2$ as 
functions of $(k,n,N)$, but only state them in Theorem 
\ref{thm:LFq_asymptotic} in terms of their  
bounds for Gaussian random matrices as $(k,n,N)\rightarrow\infty$.

Although the solution of \eqref{eq:Pq}, $\xs$, has unknown sparsity, 
Theorem \ref{thm:LFq} ensures that if there is a solution of 
\eqref{eq:misfit}, $\xt$, which can be well approximated by a $k$ sparse 
vector, i.e. if $\sigma_k(\xt)_q$ is small, then if \eqref{eq:alphamu} is 
satisfied the discrepancy between $\xs$ and $\xt$ will be similarly small.
For instance, if the sparsest solution of \eqref{eq:Pq}, $\xt$, is $k$ 
sparse, then \eqref{eq:L2error} implies that 
$\|\xt-\xs\|_2\le D_2\cdot \theta$; moreover, if $\theta=0$ then $\xs$ 
will be $k$ sparse and satisfy $A\xs=b$ (in the case $q=1$ this result 
is summarized as Theorem \ref{thm:LF1}).  Substituting bounds on 
the asymmetric RIP constants $L(2\alpha k,n,N)$ and 
$U(2\alpha k,n,N)$ from Theorem 
\ref{thm:LUbounds} we arrive at a quantitative version of Theorem 
\ref{thm:LFq} for Gaussian random matrices.

\begin{theorem} Given $q\in (0,1]$, for any $\epsilon>0$, as 
$(k,n,N)\rightarrow\infty$ with $n/N\rightarrow\delta\in(0,1)$ 
and $k/n\rightarrow\r\in(0,1/2]$, if $\r<(1-\epsilon)\rflq$ 
where $\rflq$ is the maximum over $1\le \alpha\le 1/2\r$ of the 
solutions, $\r(\d;q,\alpha)$, of
$\mu_{\alpha}(\d,2\alpha\r):=\alpha^{1/2-1/q}\mu(\d,2\alpha\r)=1$ with
$\mu(\d,2\r)$ defined as in \eqref{mu_dr},
there is overwhelming probability on the draw of $A$ with 
Gaussian i.i.d. entries that 
  a solution $\xs$ of (\ref{eq:Pq}) approximates any $\xt$ satisfying 
  (\ref{eq:misfit}) within the bounds
\begin{eqnarray}
\|\xt-\xs\|_q & \le & C_1(\d,2\alpha\r)\cdot\sigma_k(\xt)_q\;+\; D_1(\d,2\alpha\r)\cdot
k^{1/q-1/2}\cdot\theta \\
\|\xt-\xs\|_2 & \le & C_2(\d,2\alpha\r)\cdot \sigma_k(\xt)_q\cdot
(\alpha k)^{1/2-1/q}\;+\; D_2(\d,2\alpha\r)\cdot\theta.
\end{eqnarray}
The multiplicative ``stability factors'' are defined as:
\begin{eqnarray}
C_1(\d,2\alpha\r) & := & \frac{2^{2/q-1}(1+\mu_{\alpha}(\d,2\alpha\r)^q)^{1/q}}{(1-\mu_{\alpha}(\d,2\alpha\r)^q)^{1/q}}, \quad\quad
D_1(\d,2\alpha\r)  :=  \frac{2^{2/q-1}\beta(\d,2\alpha\r)}{(1-\mu_{\alpha}(\d,2\alpha\r)^q)^{1/q}} \nonumber \\
C_2(\d,2\alpha\r) & := & \frac{2^{2/q-2}(\beta(\d,2\alpha\r)+1-\sqrt{2})}{(1-\mu_{\alpha}(\d,2\alpha\r)^q)^{1/q}}, \nonumber \\
D_2(\d,2\alpha\r) & := & \frac{2^{1/q-2}\beta(\d,2\alpha\r)(\beta(\d,2\alpha\r)+1-\sqrt{2})}{(1-\mu_{\alpha}(\d,2\alpha\r)^q)^{1/q}}+2\beta(\d,2\alpha\r) 
\end{eqnarray}
with $\beta(\d,\r):=(1+\sqrt{2})\frac{1+\mathcal{U}(\d,\r)}{1-\mathcal{L}(\d,\r)}$.
\label{thm:LFq_asymptotic}
\end{theorem}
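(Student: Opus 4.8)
The plan is to combine the finite-dimensional recovery result of Theorem \ref{thm:LFq} with the probabilistic RIP bounds of Theorem \ref{thm:LUbounds}, following the template used to deduce Theorem \ref{thm:LF1_asymptotic} from Theorem \ref{thm:LF1}, but now carrying the free parameter $\alpha$ through the argument. Fix $q$, $\e$, $\d$, and the target $\r\in(0,1/2]$, and recall that $\rflq=\max_{1\le\alpha\le 1/2\r}\r(\d;q,\alpha)$, where for each fixed $\alpha$ the value $\r(\d;q,\alpha)$ solves $\alpha^{1/2-1/q}\mu(\d,2\alpha\r)=1$. First I would record that, since $\mathcal{L}(\d,\nu)\to 1$ as $\nu\to 1$, the quantity $\mu(\d,2\alpha\r)$ blows up as $\alpha\to 1/2\r$; consequently the maximizing $\alpha^\star$ is interior to $[1,1/2\r]$ and satisfies $2\alpha^\star\r<1$ strictly. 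This keeps the second argument $2\alpha^\star\r$ of $\mathcal{L}$ and $\mathcal{U}$ safely inside the admissible range $(0,1)$, so all the RIP bounds below may be applied at a single ratio bounded away from the boundary.

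The core step is a parametrized version of Lemma \ref{lem:mu_epsilon2}. Applying Theorem \ref{thm:LUbounds} with sparsity parameter $2\alpha^\star k$, whose ratio to $n$ tends to $2\alpha^\star\r$, bounds $U(2\alpha^\star k,n,N)$ above by $\mathcal{U}(\d,2\alpha^\star\r)+\e'$ and $L(2\alpha^\star k,n,N)$ above by $\mathcal{L}(\d,2\alpha^\star\r)+\e'$, each with probability tending to one exponentially in $n$. Since the ratio $\frac{1+U(2\alpha^\star k,n,N)}{1-L(2\alpha^\star k,n,N)}$ is increasing in $U$ and increasing in $L$, these one-sided bounds feed monotonically into $\mu$, and exactly as in \eqref{eq:mu_epsilon_tmp} the $\e'$ perturbation can be absorbed into a small multiplicative shift of the second argument: using the strict monotonicity of $\mathcal{L}(\d,\cdot)$ one obtains $\mu(2\alpha^\star k,n,N)<\mu(\d,2\alpha^\star(1+\e'')\r)$ with overwhelming probability, where $\e''$ can be made arbitrarily small by shrinking the RIP tolerance $\e'$.

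With this bound in hand, I would close the argument by the same $\e$-bookkeeping as in the proof of Theorem \ref{thm:LF1_asymptotic}. Let $\r_{\e''}(\d)$ solve $(\alpha^\star)^{1/2-1/q}\mu(\d,2\alpha^\star(1+\e'')\r)=1$; then $(1+\e'')\r_{\e''}(\d)$ coincides with $\r(\d;q,\alpha^\star)$, which, as $\alpha^\star$ is the maximizer, equals $\rflq$. Because $(1-\e)(1+\e'')<1$ once $\e''$ is small enough, the hypothesis $\r<(1-\e)\rflq$ forces $\r<\r_{\e''}(\d)$ and hence $(\alpha^\star)^{1/2-1/q}\mu(\d,2\alpha^\star(1+\e'')\r)<1$, whereby $(\alpha^\star)^{1/2-1/q}\mu(2\alpha^\star k,n,N)<1$ with overwhelming probability. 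This is precisely hypothesis \eqref{eq:alphamu} of Theorem \ref{thm:LFq} for the choice $\alpha=\alpha^\star$, so the two error bounds follow for $A$ drawn from the Gaussian ensemble. Finally, substituting $\mathcal{L}(\d,2\alpha\r)$ and $\mathcal{U}(\d,2\alpha\r)$ into Foucart and Lai's finite-$n$ formulas for $C_1,C_2,D_1,D_2$, and using that these depend continuously and monotonically on $\frac{1+\mathcal{U}}{1-\mathcal{L}}$ (with $1-\mu_\alpha(\d,2\alpha\r)^q>0$ guaranteed by $\mu_\alpha<1$), yields the stated asymptotic stability factors.

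I expect the main obstacle to be the rigorous handling of the free parameter $\alpha$: one must check that a single near-optimal $\alpha^\star$ can be fixed independently of $n$, so that all the probabilistic bounds apply simultaneously at the one ratio $2\alpha^\star\r$, and that the maximizer stays strictly interior so that the blow-up of $\mu$ at $2\alpha\r=1$ never interferes. Once this one-parameter reduction is in place, the monotonicity of $\mu$ and of the stability factors in the RIP constants, together with the $\e$-shifting of Lemma \ref{lem:mu_epsilon2}, make the remainder routine.
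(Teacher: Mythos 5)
Your proposal is correct and follows essentially the same route as the paper: the paper obtains Theorem \ref{thm:LFq_asymptotic} precisely by substituting the asymptotic RIP bounds of Theorem \ref{thm:LUbounds} into Foucart and Lai's Theorem \ref{thm:LFq}, with the probabilistic $\e$-bookkeeping inherited from Lemma \ref{lem:mu_epsilon2} and the proof of Theorem \ref{thm:LF1_asymptotic}. Your extra care with the free parameter $\alpha$ (fixing a single maximizer $\alpha^\star$ with $2\alpha^\star\r$ bounded strictly away from $1$ so the bounds $\mathcal{L}(\d,\cdot)$, $\mathcal{U}(\d,\cdot)$ apply at one admissible ratio) is detail the paper leaves implicit, not a departure from its argument.
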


Unlike Theorem \ref{thm:LF1_asymptotic} which specifies one function $\rfl$ 
which bounds from below the phase transition for $\sl1$, Theorem \ref{thm:LFq_asymptotic} specifies a multiparameter 
family of threshold functions depending on $q$ and possibly with 
further dependence on bounds on the 
multiplicative stability factors, such as $C_1(\d,\r)$.  The  
function $\rfl$ in Theorem \ref{thm:LF1_asymptotic} 
corresponds to the case $\theta=0$, $q=1$, and no bounds on the 
stability parameters.  The function $\rflq$ in Theorem 
\ref{thm:LFq_asymptotic} corresponds to $\ell^q$ regularization  with 
unbounded stability coefficients, and as a result, it is only meaningful 
for $\rho$ strictly below $\rflq$ or in the case where 
there exists a $k$ sparse solution to $A\xt=b$ and $\theta=0$.
More generally, specifying a bound on one or more of the multiplicative 
stability factors  determines functions $\rflqcond$. For instance, 
imposing a bound of $\Upsilon$ on the stability factor $C_1(\d,\r)$ 
generates a function $\rflqc1$; Figure \ref{fig:Lq} shows $\rflqc1$ 
for $q=1$ and $q=1/2$ in panels (a-b) and (c-d) respectively.  
Software is available upon request which will generate  
functions with these and other choices of parameters in Theorem 
\ref{thm:LFq_asymptotic}.

\begin{figure}[t]
 \begin{center}
\begin{tabular}{cc}
 \includegraphics[bb= 70 185 546 589, width=2.70 in,height=2.00 in]{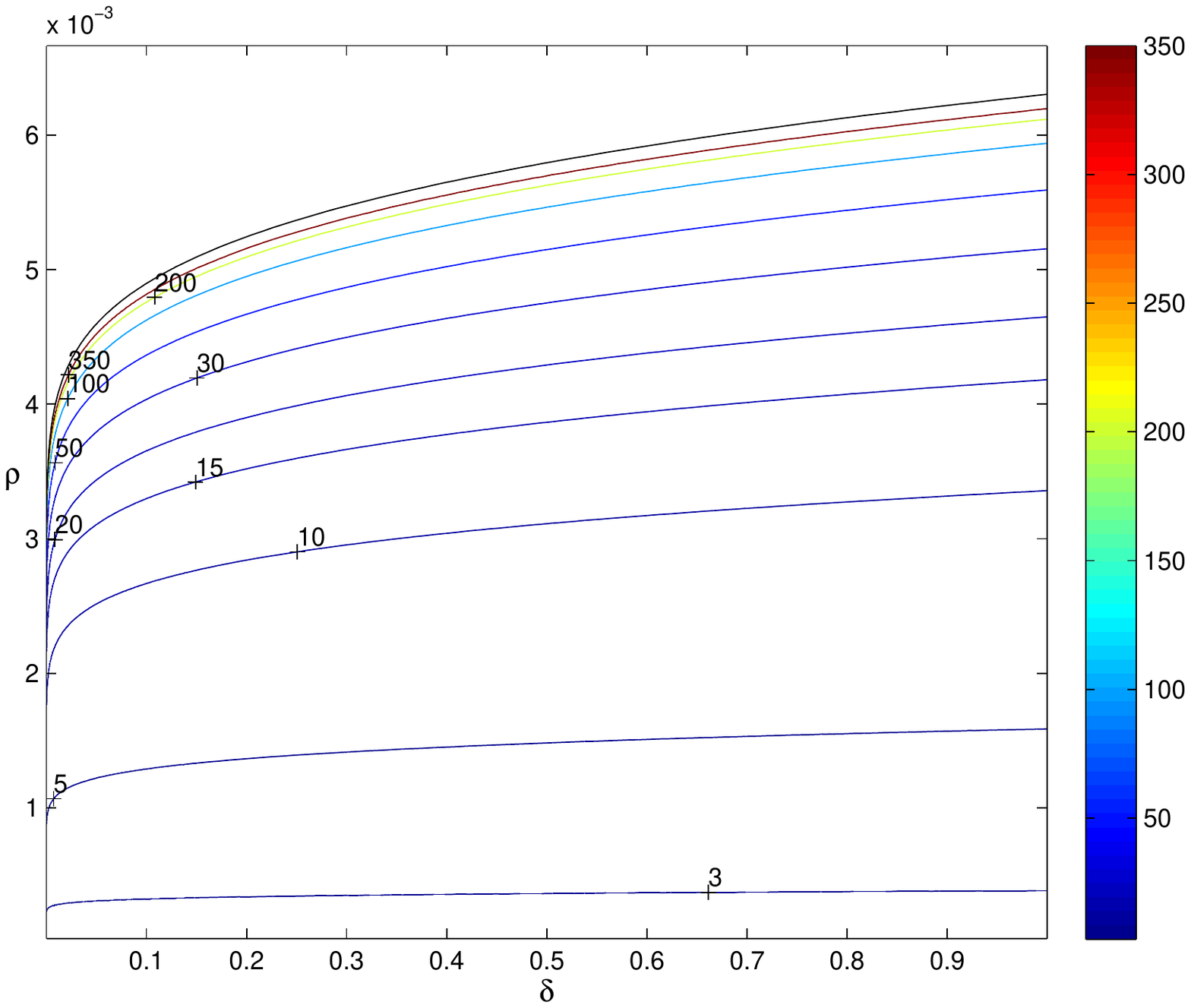} &
\includegraphics[bb= 70 185 546 589, width=2.70 in,height=2.00 in]{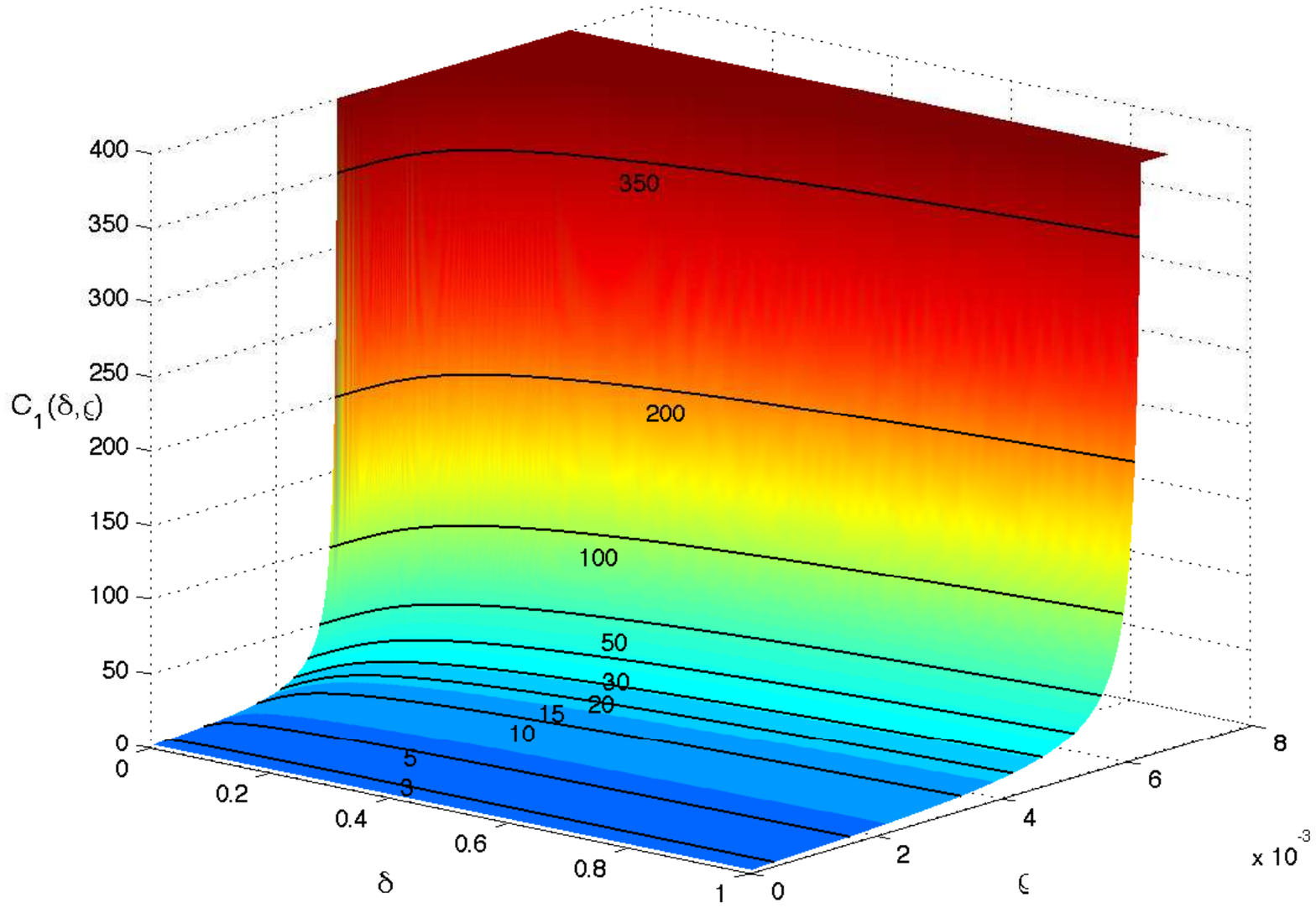} \\
(a) & (b) \\
\includegraphics[bb= 70 185 546 619, width=2.70 in,height=2.00 in]{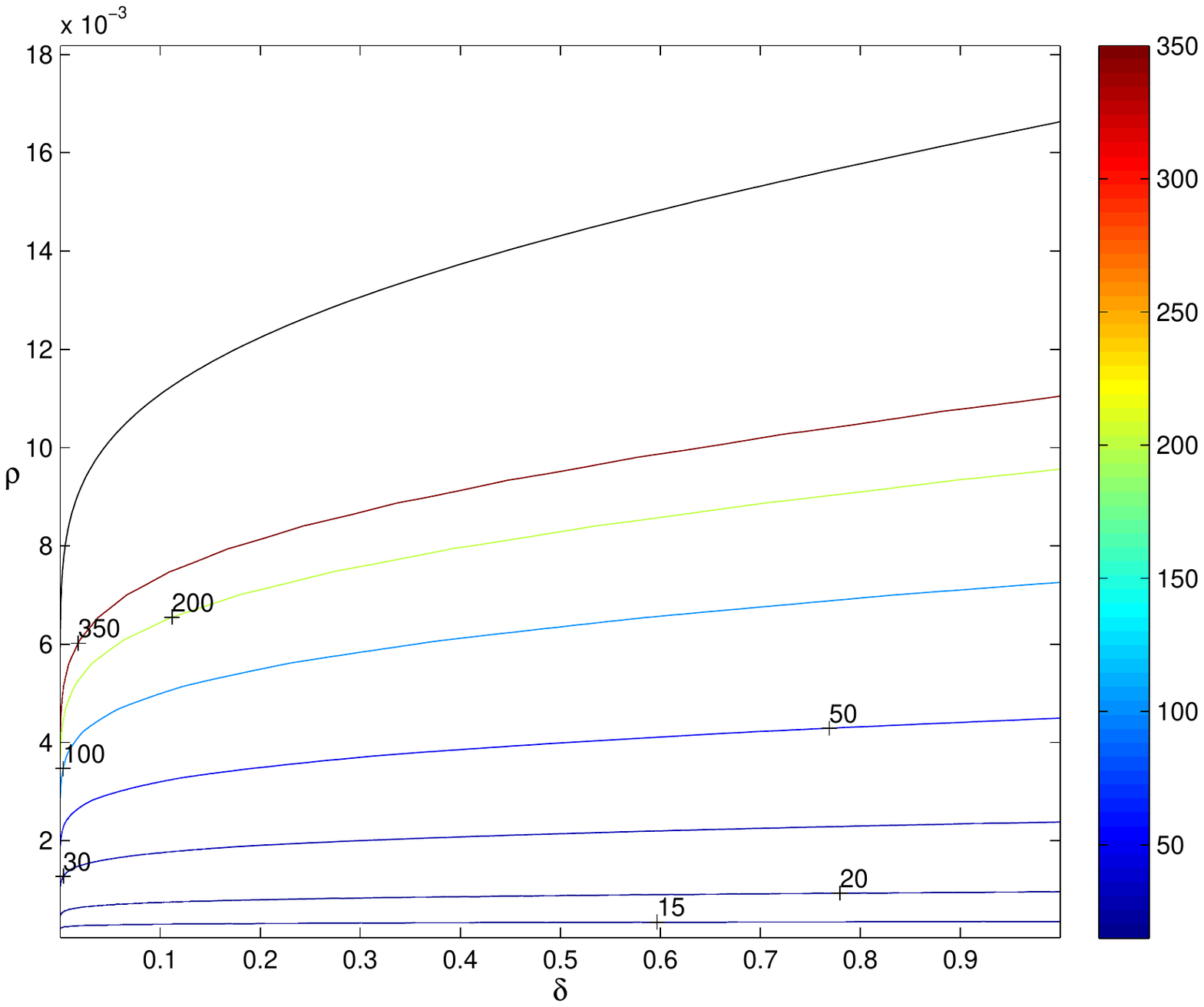} &
\includegraphics[bb= 70 185 546 619, width=2.70 in,height=2.00 in]{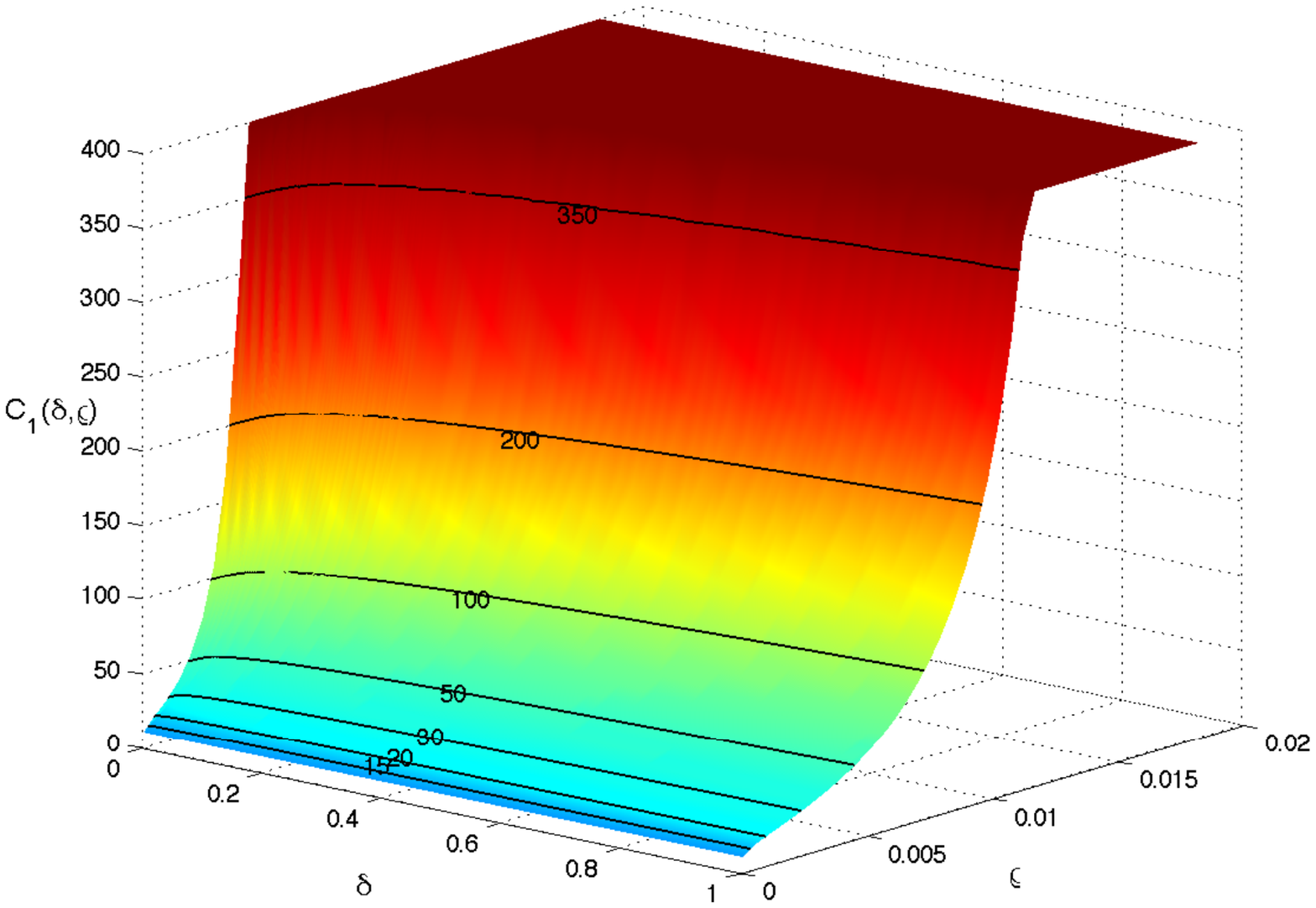} \\
(c) & (d) \\
\end{tabular}
 \caption{The surface whose level curves specify $\rflqc1$ for $q=1$
   and $q=1/2$ in Panels (b) and (d) respectively, with level curves
   for specific values of $\Upsilon$ shown in Panels (a) and (c) for
   $q=1$ and $q=1/2$ respectively}\label{fig:Lq}
 \end{center}
 \end{figure}

\subsection{Discussion}\label{sec:Discussion}

The lower bound on the $\sl1$ phase transition implied by the RIP
for strictly $k$ sparse signals, $\rfl$ of Theorem \ref{thm:LF1},
does not have any implied stability.  In order to ensure stability,
Theorem \ref{thm:LF1} requires further restrictive bounds on the {\em
  stability factors} in Theorem \ref{thm:LFq}, further reducing the
lower bound on the phase transition.  
For example, $C_1(\d,\r)$ for $q=1$ is shown in Figure \ref{fig:Lq} (b), with 
level curves of $C_1(\d,\r)$ corresponding to fixed stability factors
proceeding \eqref{eq:sigma} in Theorem \ref{thm:LFq}; phase
transitions below which specified bounds on $C_1(\d,\r)$ can be
ensured are shown in Figure \ref{fig:Lq} (a-b).  
The stability factor becomes unbounded at finite $\r$ as $\r\uparrow\rfl$.

A current trend in sparse approximation is to consider
$\ell^q$-regularization for $q\in(0,1)$, with the aim of increasing
the recoverability region \cite{ChartrandExact,ChartrandRIP}.  
Existing results have shown that indeed 
the region where $\ell^q$-regularization successfully recovers $k$
sparse vectors at least does not decrease as $q$ decreases
\cite{Gribonval}, though little is known as to the rate, if any, at which it 
increases.  Theorem \ref{thm:LFq} gives lower bounds on these regions where
$\ell^q$-regularization is guaranteed to have specified recoverability
properties, and in fact for any strictly $k$ sparse vector it implies
that if \eqref{eq:mu_knN} is finite, there is a small enough $q$ such
every $k$ sparse vector can be recovered exactly from $(b,A)$ by solving
\eqref{eq:Pq} with $\theta=0$.  Despite this and other encouraging
results, many fundamental questions about $\ell^q$-regularization
remain, in particular how to find the global minimizer of \eqref{eq:Pq}.
Moreover, it is unknown if $\ell^q$-regularization remains stable as
$q$ decreases.  In order to ensure stability, Theorem \ref{thm:LF1}
requires further restrictive bound on the stability factor
in Theorem \ref{thm:LFq}, further reducing the lower bound on the phase 
transition.  For example, $C_1(\d,\r)$ for $q=1/2$ is shown in Figure
\ref{fig:Lq} (d), with level curves of $C_1(\d,\r)$ corresponding to
fixed stability factors preceeding \eqref{eq:sigma} in Theorem
\ref{thm:LFq}; phase transitions below which specified bounds on
$C_1(\d,\r)$ can be ensured are shown in Figure \ref{fig:Lq} (c-d).
Decreasing $q$ from 1 to $1/2$ does increase the value of $\r$ at
which the stability factors in Theorem \ref{thm:LFq} become unbounded;
however, comparing Figure \ref{fig:Lq} (d) and (b) it is
apparent that this elevating of the unstable phase transition comes at
the price of also elevating $C_1(\d,\r)$ for small values of $\r$.
In particular, the region where $C_1(\d,\r)\le 50$ is, in fact, larger
for $q=1$ than for $q=1/2$.

{\bf Acknowledgements.}\quad 
The authors would like to thank the editor and the referees for their useful suggestions 
that have greatly improved the manuscript.

\bibliographystyle{plain}
\bibliography{rip_noisy}

\end{document}